\newtheorem{thm}{Theorem}
\newtheorem{prop}[thm]{Proposition}
\newtheorem{lem}[thm]{Lemma}
\newtheorem{cla}[thm]{Claim}
\newtheorem{df}[thm]{Definition}
\newtheorem{cor}[thm]{Corollary}
\newcounter{num}
\newcommand{\ZZ}{\mathbb{Z}} 
\newcommand{\RR}{\mathbb{R}} 
\newcommand{\ca}[1]{\mathcal{#1}} 
\newcommand{\ol}[1]{\overline{#1}} 
\newcommand{\wt}[1]{\widetilde{#1}} 
\newcommand{\vol}{\mathrm{vol}}
\newcommand{\sweep}{\mathrm{sweep}}
\newcommand{\1}{\mathbf{1}}
\newcommand{\0}{\mathbf{0}}
\newcommand{\argmin}{\mathop{\mathrm{argmin}}}
\newcommand{\argmax}{\mathop{\mathrm{argmax}}}
\def\inprod<#1>{\left\langle #1 \right\rangle} 
\title{Finding Cheeger Cuts in Hypergraphs via Heat Equation} 
\author[1,2]{Masahiro Ikeda\thanks{masahiro.ikeda@riken.jp}}
\author[1]{Atsushi Miyauchi\thanks{atsushi.miyauchi.hv@riken.jp}}
\author[1,2]{Yuuki Takai\thanks{yuuki.takai@riken.jp}}
\author[3]{Yuichi Yoshida\thanks{yyoshida@nii.ac.jp}}
\affil[1]{RIKEN Center for Advanced Intelligence Project}
\affil[2]{Keio University}
\affil[3]{National Institute of Informatics}
\begin{document}

\maketitle

\begin{abstract}
  Cheeger's inequality states that a tightly connected subset can be extracted from a graph $G$ using an eigenvector of the normalized Laplacian associated with $G$.
  More specifically, we can compute a subset with conductance $O(\sqrt{\phi_G})$, where
  $\phi_G$ is the minimum conductance of a set in $G$.

  It has recently been shown that Cheeger's inequality can be extended to hypergraphs.
  However, as the normalized Laplacian of a hypergraph is no longer a matrix, we can only approximate its eigenvectors; this causes a loss in the conductance of the obtained subset.
  To address this problem, we here consider the heat equation on hypergraphs, which is a differential equation exploiting the normalized Laplacian.
  We show that the heat equation has a unique global solution and that we can extract a subset with conductance $\sqrt{\phi_G}$ from the solution under a mild condition.
  An analogous result also holds for directed graphs.
\end{abstract}

\thispagestyle{empty}
\setcounter{page}{0}
\newpage


\section{Introduction}

The goal of spectral clustering of graphs is to extract tightly connected communities from a given weighted graph $G=(V,E,w)$, where $w\colon E \to \mathbb{R}_+$ is a weight function, using eigenvectors of matrices associated with $G$.
One of the most fundamental results in this area is Cheeger's inequality, which relates the second-smallest eigenvalue of the normalized Laplacian of $G$ and the conductance of $G$.
Here, the \emph{(random-walk) normalized Laplacian} of $G$ is defined as $\mathcal{L}_G=I-A_G D_G^{-1}$, where $A_G \in \mathbb{R}^{V \times V}$ and $D_G \in \mathbb{R}^{V \times V}$ are the (weighted) adjacency matrix and the (weighted) degree matrix, respectively, of $G$, that is, $D_G$ is a diagonal matrix with the $(v,v)$-th element for $v \in V$ being the (weighted) degree $d_G(v) := \sum_{e \in E \mid v \in e}w(e)$ of $v$.
Note that all eigenvalues of $\mathcal{L}_G$ are non-negative and the smallest eigenvalue is always zero, as $\mathcal{L}_G(D_G\bm{1})=\bm{0}$, where $\bm{1}$ is the all-one vector and $\bm{0}$ is the zero vector.
The \emph{conductance} of a set $\emptyset \subsetneq S \subsetneq V$ is defined as
\[
  \phi_G(S) := \frac{\sum_{e \in \partial_G(S)}w(e)}{\min\{\vol_G(S),\vol_G(V \setminus S) \}},
\]
where $\partial_G(S)$ is the set of edges between $S$ and $V \setminus S$, and $\vol_G(S) := \sum_{v \in S}d_G(v)$ is the \emph{volume} of $S$.
Intuitively, smaller $\phi_G(S)$ corresponds to more tightly connected $S$.
The \emph{conductance} of $G$ is the minimum conductance of a set in $G$; that is, $\phi_G := \min_{\emptyset \subsetneq S\subsetneq V}\phi_G(S)$.
Then, Cheeger's inequality~\cite{Alon:1986gz,Alon:1985jg} states that
\begin{align}
  \frac{\lambda_G}{2} \leq \phi_G \leq \sqrt{2\lambda_G}, \label{eq:cheeger-graph}
\end{align}
where $\lambda_G \in \mathbb{R}_+$ is the second-smallest eigenvalue of $\mathcal{L}_G$.
The second inequality of~\eqref{eq:cheeger-graph} is algorithmic in the sense that we can compute a set $\emptyset \subsetneq S \subsetneq V$ with conductance of at most $\sqrt{2\lambda_G} = O(\sqrt{\phi_G})$, which is called a \emph{Cheeger cut}, in polynomial time from an eigenvector corresponding to $\lambda_G$.
Moreover, Cheeger's inequality is tight in the sense that computing a set with conductance $o(\sqrt{\phi_G})$ is NP-hard~\cite{Raghavendra:2012fc}, assuming the small set expansion hypothesis (SSEH)~\cite{Raghavendra:2010jg}.

Several attempts to extend Cheeger's inequality to hypergraphs have been made.
To explain the known results, we first extend the concepts of conductance and the normalized Laplacian to hypergraphs.
Let $G=(V,E,w)$ be a weighted hypergraph, where $w\colon E \to \mathbb{R}_+$ is a weight function.
The \emph{(weighted) degree} of a vertex $v \in V$ is $d_G(v) := \sum_{e \in E\mid v \in e}w(e)$.
For a vertex set $\emptyset \subsetneq S \subsetneq V$, the \emph{conductance} of $S$ is defined as
\[
  \phi_G(S) := \frac{\sum_{e \in \partial_G(S)}w(e) }{\min\{\vol_G(S),\vol_G(V \setminus S) \}},
\]
where $\partial_G(S)$ is the set of hyperedges intersecting both $S$ and $V \setminus S$, and $\vol_G(S)$ has the same definition as usual graph.
The \emph{conductance} of $G$ is defined as $\phi_G := \min_{\emptyset \subsetneq S\subsetneq V}\phi_G(S)$.

The \emph{normalized Laplacian} $\mathcal{L}_G\colon \mathbb{R}^V \to 2^{\mathbb{R}^V}$\footnote{We note that the range of the Laplacian defined in~\cite{Chan:2018eu} is $\mathbb{R}^V$ instead of $2^{\mathbb{R}^V}$. They chose the value of $\mathcal{L}_G(\bm{x})$ so that it satisfies a necessary condition that the heat equation~\eqref{eq:heat-equation-hypergraph} has a solution, which makes it unique. Hence as long as we consider the solution to~\eqref{eq:heat-equation-hypergraph}, our Laplacian behaves as that defined in~\cite{Chan:2018eu}. Nevertheless, we keep the range $2^{\mathbb{R}^V}$ for a general treatment of the heat equation using the theory of monotone operators. See Section~\ref{sec:solution} for more details.} of a hypergraph $G$~\cite{Chan:2018eu,Yoshida:2017zz} is multi-valued and no longer linear (see Section~\ref{sec:pre} for a detailed definition).
In the simplest setting that the hypergraph $G$ is unweighted and $d$-regular, that is, every vertex has degree $d$, and the elements of the given vector $\bm{x} \in \mathbb{R}^V$ are pairwise distinct, the $\mathcal{L}_G$ acts as follows: We create an undirected graph $G_{\bm{x}}$ on $V$ from $G$ by adding for each hyperedge $e \in E$ an undirected edge $uv$, where $u = \argmin_{w \in e}\bm{x}(w)$ and $v = \argmax_{w \in e}\bm{x}(w)$, then return $\mathcal{L}_{G_{\bm{x}}}\bm{x}$.

When $\mathcal{L}_G(\bm{v}) \ni \lambda \bm{v}$ holds for $\lambda \in \mathbb{R}$ and $\bm{v} \neq \bm{0}$, we can state that $\lambda$ and $\bm{v}$ are an \emph{eigenvalue} and an \emph{eigenvector}, respectively, of $\mathcal{L}_G$.
As with the graph case, all eigenvalues of $\mathcal{L}_G$ are non-negative and the first eigenvalue is zero as $\mathcal{L}_G(D\bm{1})=\bm{0}$ holds.
Moreover, the second-smallest eigenvalue $\lambda_G \in \mathbb{R}_+$ exists.
Cheeger's inequality for hypergraphs~\cite{Chan:2018eu,Yoshida:2017zz} states that
\begin{align}
  \frac{\lambda_G}{2} \leq \phi_G \leq 2\sqrt{\lambda_G}. \label{eq:cheeger-hyper}
\end{align}
Again, the second inequality is algorithmic: If we can compute an eigenvector corresponding to $\lambda_G$, we can obtain a Cheeger cut; that is, a set $\emptyset \subsetneq S \subsetneq V$ with $\phi_G(S) = O(\sqrt{\phi_G})$, in polynomial time.
Unlike the undirected graph case, however, only an $O(\log n)$-approximation algorithm is available for computing $\lambda_G$~\cite{Yoshida:2017zz}. Further, this approximation ratio is tight under the SSEH~\cite{Chan:2018eu}.
Hence, the following natural question arises: Can we compute a Cheeger cut without computing $\lambda_G$ and applying Cheeger's inequality on the corresponding eigenvector?

To answer this question, we consider the following differential equation called the \emph{heat equation}~\cite{Chan:2018eu}:
\begin{align}
  \frac{d \bm{\rho}_t}{dt} \in - \mathcal{L}_G(\bm{\rho}_t) \quad \text{and} \quad \bm{\rho}_0 = \bm{s},
  \tag{$\mathrm{HE};\bm{s}$}
  \label{eq:heat-equation-hypergraph}
\end{align}
where $\bm{s} \in \mathbb{R}^V$ is an initial vector.
Intuitively, we gradually diffuse values (or \emph{heat}) on vertices along hyperedges so that the maximum and minimum values in each hyperedge become closer.
We can show that~\eqref{eq:heat-equation-hypergraph} always has a unique global solution for $t \geq 0$\footnote{Previous works~\cite{Chan:2018eu,Yoshida:2017zz} only guaranteed that it has a local solution for $0 \leq t \leq T_0$ for some $T_0> 0$.} using the theory of monotone operators and evolution equations~\cite{komura1967nonlinear},~\cite{miyadera1992nonlinear} (see Section~\ref{sec:solution} for details), and let $\bm{\rho}_t^{\bm{s}} \in \mathbb{R}^V$ be the solution at time $t \geq 0$.
In particular, $\bm{\rho}_0^{\bm{s}} = \bm{s}$ holds.
In addition, if $\sum_{v \in V}\bm{s}(v) = 1$, we can show that $\sum_{v \in V}\bm{\rho}_t^{\bm{s}}(v)=1$ holds for any $t \geq 0$, and that $\bm{\rho}_{t}^{\bm{s}}$ converges to $\bm{\pi} \in \mathbb{R}^V$ as $t\rightarrow\infty$ when $G$ is connected, where $\bm{\pi}(v) := d_G(v)/\vol(V)$ (see~\cite[Theorem 3.4]{Chan:2018eu}). Throughout this paper, we assume that
hypergraph $G$ is connected.

For a vector $\bm{x} \in \mathbb{R}^V$, let $\sweep(\bm{x})$ denote the set of all \emph{sweep sets} with respect to $\bm{x}$; that is, sets of the form either $\{v \in V \mid \bm{x}(v) \geq \tau \}$ or $\{v \in V \mid \bm{x}(v) \leq \tau \}$, for some $\tau \in \mathbb{R}$.
We want to show that the conductance of the sweep set of a vector obtained from the heat equation is small.
To this end, for $T\geq 0$, we introduce a key quantity in our analysis:
\[
g_v(T) = -\left.\frac{d}{dt} \log \| \bm{\rho}_t^{\bm{\pi}_v} - \bm{\pi} \|_{D^{-1}}^2 \right|_{t=T},
\]
where $\|\bm{x}\|_{D^{-1}}^2 := \bm{x}^\top D^{-1}\bm{x}$, which quantifies how fast the heat converges to the limit, that is, $\bm{\pi}$.
We can show that $g_v(T)$ is twice the Rayleigh quotient of
$D^{-1/2}(\bm{\rho}^{\bm{\pi}_v}_T -\bm{\pi})$ with respect to
 the \emph{normalized Laplacian} $\bm{x} \mapsto D^{-1/2}\ca{L}_G(D^{-1/2}(\bm{x}))$.
This fact, combined with Cheeger's inequality for hypergraphs, implies that $g_v(T)$ captures the minimum conductance of a sweep set obtained from $\bm{\rho}^{\bm{\pi}_v}_T$.
\begin{thm}\label{thm:main-intro}
  Let $G=(V,E,w)$ be a weighted hypergraph and $\emptyset \subsetneq S \subsetneq V$ be a set.
  For any $t > 0$, we have
  \[
    g_v(T) \geq {(\kappa_{T,t}^v)}^2
  \]
  where $\kappa_{T,t}^v := \min \{ \phi_G(S') \mid \xi \in [T,t], S' \in \sweep(\bm{\rho}^{\bm{\pi}_v}_\xi) \}$ and $\bm{\pi}_v \in \mathbb{R}^V$ is a vector for which $\bm{\pi}_v(v) = 1$ and $\bm{\pi}_v(u) = 0$ for $u \neq v$.
\end{thm}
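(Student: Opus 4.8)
The plan is to lower-bound $g_v(T)$ via the spectral characterization stated just before the theorem: $g_v(T)$ equals twice the Rayleigh quotient of $\bm{y}_T := D^{-1/2}(\bm{\rho}^{\bm{\pi}_v}_T - \bm{\pi})$ with respect to the symmetric normalized Laplacian $\widetilde{\mathcal{L}}_G\colon \bm{x}\mapsto D^{-1/2}\mathcal{L}_G(D^{-1/2}\bm{x})$. Concretely, writing $\bm{z}_t := \bm{\rho}^{\bm{\pi}_v}_t - \bm{\pi}$, one has $g_v(T) = 2 \cdot \frac{\langle D^{-1}\bm{z}_T,\, \mathcal{L}_G(\bm{z}_T)\rangle}{\|\bm{z}_T\|_{D^{-1}}^2}$ (here I use that $\mathcal{L}_G(\bm{\rho}_T) = \mathcal{L}_G(\bm{z}_T)$ since $\mathcal{L}_G$ is translation-invariant along $D\bm{1}$ and $\bm{\pi}$ is a stationary point, and that $\frac{d}{dt}\bm{z}_t \in -\mathcal{L}_G(\bm{z}_t)$). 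So it suffices to show that for the particular vector $\bm{y}_T$ the Rayleigh quotient is at least $\tfrac12 (\kappa^v_{T,t})^2$.

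The natural route is Cheeger's inequality for hypergraphs, Eq.~\eqref{eq:cheeger-hyper}, but applied not to the global eigenvalue $\lambda_G$ but to a "localized" quantity: the discrete Cheeger inequality is really proven by a sweep-set argument that shows that \emph{any} vector $\bm{y}$ (orthogonal to $D^{1/2}\bm{1}$) has Rayleigh quotient at least $\tfrac12 \phi^2$, where $\phi$ is the smallest conductance \emph{among sweep sets of $\bm{y}$ itself}. Since $\bm{z}_T$ (equivalently $\bm{y}_T$) has sweep sets whose conductances are all $\ge \min\{\phi_G(S') : S' \in \sweep(\bm{\rho}^{\bm{\pi}_v}_T)\} \ge \kappa^v_{T,t}$ (the sweep sets of $\bm{z}_T = \bm{\rho}^{\bm{\pi}_v}_T - \bm{\pi}$ coincide with those of $\bm{\rho}^{\bm{\pi}_v}_T$ up to replacing $\tau$ by $\tau - \bm{\pi}(v)$... care is needed here because $\bm{\pi}$ is not constant, so one should instead phrase the sweep argument directly in terms of level sets of $D^{-1}\bm{z}_T$ versus $D^{-1}\bm{\rho}^{\bm{\pi}_v}_T$, which differ by the constant vector $D^{-1}D\bm{1}/\vol(V) = \bm{1}/\vol(V)$, hence have identical level sets). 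Applying the sweep inequality at time $T$ then gives $g_v(T)/2 \ge \tfrac12 (\kappa^v_{T,T})^2$; but we want $\kappa^v_{T,t}$ with $t > T$, which is a \emph{stronger} lower bound only if we can show $g_v(T) \ge (\kappa^v_{T,t})^2$, i.e. we must use information from the whole interval $[T,t]$, not just the endpoint.

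To pass from $[T,T]$ to $[T,t]$, I would use monotonicity of $g_v$: one shows $g_v$ is non-decreasing in $T$ (this is the standard fact that the Rayleigh quotient of the heat flow increases along the flow, provable by differentiating $g_v(T)$ and recognizing the derivative as a variance-type nonnegative quantity, or equivalently from convexity of $t \mapsto \log\|\bm{z}_t\|_{D^{-1}}^2$ — this is where the bulk of the technical work lies). Given monotonicity, $g_v(T) \ge g_v(\xi)$ is false in the wrong direction, so instead the argument must be: for each $\xi \in [T,t]$, the sweep argument at time $\xi$ gives $g_v(\xi) \ge \tfrac12 \cdot(\text{smallest sweep-set conductance at time }\xi)^2 \ge \tfrac12 (\kappa^v_{T,t})^2$; then use monotonicity $g_v(T) \le g_v(\xi)$... which again goes the wrong way. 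So the correct statement of monotonicity must be $g_v$ \emph{non-increasing} (heat flow \emph{slows down} its exponential decay as it approaches equilibrium — the slowest-decaying mode dominates), giving $g_v(T) \ge g_v(\xi) \ge \tfrac12(\phi_G(S'))^2$ for every $\xi\in[T,t]$ and every $S'\in\sweep(\bm{\rho}^{\bm{\pi}_v}_\xi)$, hence $g_v(T) \ge \tfrac12(\kappa^v_{T,t})^2$.

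The main obstacle is establishing the right monotonicity direction of $g_v$ rigorously in the \emph{multi-valued, non-smooth} setting: $t\mapsto\bm{\rho}^{\bm{\pi}_v}_t$ is only Lipschitz (not $C^1$), and $\mathcal{L}_G$ is a set-valued monotone operator, so differentiating $g_v$ and manipulating $\langle \mathcal{L}_G(\bm{z}_t),\cdots\rangle$ requires care — one should work with the Yosida regularization or use that $\mathcal{L}_G$ is the subdifferential of the convex energy $Q(\bm{x}) = \tfrac12\sum_e w(e)\max_{u,v\in e}(\bm{x}(u)-\bm{x}(v))^2$, so that $g_v(T) = 4Q(D^{-1/2}\bm{z}_T)/\|\bm{z}_T\|_{D^{-1}}^2$ and standard results on gradient flows of convex functions (e.g. Brezis) give the needed differentiability a.e. and the evolution of the Rayleigh quotient. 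The sweep-set inequality itself — that $Q(\bm{y})/\|\bm{y}\|_2^2 \ge \tfrac12 \min_{S'\in\sweep(\bm{y})}\phi_G(S')^2$ for $\bm{y}\perp D^{1/2}\bm{1}$ — is essentially the proof of~\eqref{eq:cheeger-hyper} and I would cite~\cite{Chan:2018eu,Yoshida:2017zz}, adapting it to track the sweep sets of the given vector rather than of a true eigenvector.
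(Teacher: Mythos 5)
Your proposal is built from the same two ingredients as the paper's proof: (i) $g_v(\xi)$ equals twice the Rayleigh quotient of the diffusion vector and is non-increasing along the flow (the paper gets this from \cite{chan2018spectral} in Lemma~\ref{lem:lower-bound}, and locally from convexity of $\log f_i$, Lemma~\ref{lem:log-second-derivative}), and (ii) a sweep-set Cheeger bound applied to the diffusion vector at each time $\xi\in[T,t]$ (the paper's Lemma~\ref{lem:lower-bound-on-log-derivative}). The paper combines them by integrating both bounds on $-\frac{d}{dt}\log\|\bm{\rho}_t^{\bm{\pi}_v}-\bm{\pi}\|_{D^{-1}}^2$ over $[T,t]$ and comparing exponents (Lemmas~\ref{lem:upper-bound} and~\ref{lem:lower-bound}), whereas you take a pointwise minimum over $\xi$; that difference is cosmetic, and the monotonicity direction you eventually settle on (non-increasing) is the correct one.

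The genuine gap is in the sweep-set step, and it shows up in your constants. As literally written, $Q(\bm{y})/\|\bm{y}\|^2\ge\tfrac12\min_{S'}\phi_G(S')^2$ with $Q=\tfrac12\sum_e w(e)\max_{u,v\in e}(\bm{y}(u)-\bm{y}(v))^2$ asserts that the Rayleigh quotient is at least $\phi^2$, which is stronger than what any sweep argument gives (even for graphs one only gets $\mathcal{R}\ge\phi^2/2$); while if you instead invoke the hypergraph Cheeger inequality~\eqref{eq:cheeger-hyper} from \cite{Chan:2018eu,Yoshida:2017zz} with its constant $2\sqrt{\lambda}$, you only get $\mathcal{R}\ge\phi^2/4$ at each time, hence $g_v(T)\ge\tfrac12(\kappa_{T,t}^v)^2$ --- which is indeed what your final line says, and is a factor $2$ short of the theorem. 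The missing idea that recovers the stated constant is the paper's reduction to a graph: for sweep sets of the vector $D^{-1}\bm{\rho}_\xi$ itself, a hyperedge is cut by a threshold set exactly when its argmax--argmin pair is separated, so the hypergraph cut weight, the volumes, and the relevant quadratic form all coincide with those of the collapsed graph $G_{v,\xi}$ (equivalently the contracted graph $\wt{G}_i$ of Section~\ref{sec:heat-equation}); this is precisely Theorem~\ref{thm:graph-diffusion} together with Lemma~\ref{lem:sweep}, and it allows the graph-strength bound $\mathcal{R}\ge\phi^2/2$, giving $g_v(T)\ge(\kappa_{T,t}^v)^2$. Two further points you wave at but must actually argue: the centering issue (the paper shows in Lemma~\ref{lem:lower-bound-on-log-derivative} that the shift by $1/\vol$ maximizes the denominator, so the centered Rayleigh quotient dominates and the sweep bound applies), and the regularity of $t\mapsto g_v(t)$ in the multi-valued setting, which the paper handles not by Yosida/Brezis machinery but by the piecewise reduction to linear graph heat flows on the intervals $[t_i,t_{i+1})$ plus the monotonicity result of \cite{chan2018spectral}.
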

Let $\bm{u}_2(G_{v,T})$ be an eigenvector corresponding to the
second smallest eigenvalue $\lambda_2(G_{v,T})$ of the
normalized Laplacian $\ca{L}_{G_{v,T}}$. Then, by using Cheeger's inequality for undirected graphs, we have the following corollary:
\begin{cor}\label{cor:main-intro-cor}
 Assume that $\langle \bm{u}_2(G_{v,T}), \rho_T^{\bm{\pi}_v} \rangle_{D^{-1}} \neq 0$ holds. Then, we have
\[
  4\phi_G h_v(T) \geq {(\kappa_{T,t}^v)}^2,
\]
where $h_v(T) = g_v(T)/\lambda_2(G_{v,T})$.
\end{cor}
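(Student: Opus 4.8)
The plan is to read the corollary as a bookkeeping consequence of Theorem~\ref{thm:main-intro} together with the left (``easy'') inequality of the classical Cheeger inequality~\eqref{eq:cheeger-graph} applied to the undirected graph $G_{v,T}$. Write $h_v(T)=g_v(T)/\lambda_2(G_{v,T})$. Since Theorem~\ref{thm:main-intro} already gives $g_v(T)\ge(\kappa_{T,t}^v)^2$, it is enough to prove the purely spectral bound
\[
  \lambda_2(G_{v,T})\le 4\phi_G ,
\]
for then $4\phi_G\,h_v(T)=\dfrac{4\phi_G}{\lambda_2(G_{v,T})}\,g_v(T)\ge g_v(T)\ge(\kappa_{T,t}^v)^2$. (Note that one cannot instead simply round the Fiedler vector of $G_{v,T}$: a cut that is sparse in $G_{v,T}$ need not be sparse in $G$; the role of Theorem~\ref{thm:main-intro} is precisely to translate the spectral-gap information of $G_{v,T}$ back into a sweep set of the heat flow, measured by $\kappa_{T,t}^v$.)

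To establish $\lambda_2(G_{v,T})\le 4\phi_G$, I would first record, as explained just before Theorem~\ref{thm:main-intro}, that $g_v(T)$ equals twice the Rayleigh quotient of $\bm{w}:=D^{-1/2}\bigl(\bm{\rho}^{\bm{\pi}_v}_T-\bm{\pi}\bigr)$ with respect to the symmetric normalized Laplacian of $G_{v,T}$; the hypothesis $\langle\bm{u}_2(G_{v,T}),\bm{\rho}^{\bm{\pi}_v}_T\rangle_{D^{-1}}\ne 0$ is what makes this identification non-degenerate---in particular it forces $\bm{\rho}^{\bm{\pi}_v}_T\ne\bm{\pi}$, so that $g_v(T)$, and hence $h_v(T)$, is well defined, and it ensures that $\bm{w}$ has a non-zero component along the Fiedler vector of $G_{v,T}$. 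Because $\sum_{u}\bm{\rho}^{\bm{\pi}_v}_T(u)=\sum_u\bm{\pi}(u)=1$, the vector $\bm{w}$ is orthogonal to $D^{1/2}\bm{1}$, i.e.\ to the leading eigenvector of that Laplacian, consistent with the (harmless, non-essential) fact $g_v(T)\ge 2\lambda_2(G_{v,T})$. The bound on $\lambda_2(G_{v,T})$ itself then comes from the left inequality of~\eqref{eq:cheeger-graph}, $\lambda_2(G_{v,T})\le 2\phi_{G_{v,T}}$, combined with $\phi_{G_{v,T}}\le\phi_G$: if $S^\star$ attains $\phi_G(S^\star)=\phi_G$, then every edge of $G_{v,T}$ crossing $S^\star$ originates from a hyperedge of $G$ meeting both $S^\star$ and $V\setminus S^\star$, so the total weight crossing $S^\star$ in $G_{v,T}$ is at most $\sum_{e\in\partial_G(S^\star)}w(e)$, while the construction of the support graph $G_{v,T}$ preserves all vertex degrees and hence all volumes; thus $\phi_{G_{v,T}}(S^\star)\le\phi_G(S^\star)=\phi_G$.

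The main obstacle is this last conductance comparison, which must be carried out against the exact definition of $G_{v,T}$ from Section~\ref{sec:pre}: one needs that $G_{v,T}$ carries the same degree sequence as $G$ (this is the reason self-loops are added, and is also what puts $\bm{\pi}$ into the kernel of $\mathcal{L}_{G_{v,T}}$), and that the hyperedge-to-edge reduction never increases the weight crossing a fixed cut; the gap between the ``$2$'' that this argument actually produces and the ``$4$'' in the statement is a comfortable margin that also absorbs any slack in the precise weighting used for $G_{v,T}$. A secondary point requiring care is to spell out that the hypothesis $\langle\bm{u}_2(G_{v,T}),\bm{\rho}^{\bm{\pi}_v}_T\rangle_{D^{-1}}\ne 0$ is exactly the non-degeneracy condition under which the Rayleigh-quotient description of $g_v(T)$---and therefore the quantity $h_v(T)$---is meaningful; granting that, the remainder of the proof is the chain of inequalities displayed above.
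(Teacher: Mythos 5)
Your argument is correct and is essentially the paper's own proof: the paper likewise combines Theorem~\ref{thm:main-intro} (via Theorem~\ref{thm:main-S} and Lemma~\ref{lem:sweep}) with the easy direction of Cheeger's inequality~\eqref{eq:cheeger-graph} applied to $G_{v,T}$ and the observation $\phi_{G_{v,T}}(S)\le\phi_G(S)$, writing the chain $g_v(T)=2\lambda_2(G_{v,T})h_v(T)\le 4\phi_{G_{v,T}}h_v(T)\le 4\phi_G h_v(T)$. Your reshuffling into the single spectral bound $\lambda_2(G_{v,T})\le 4\phi_G$ is only an algebraic rearrangement of that same chain (the apparent factor-of-two slack reflects the paper's two slightly different normalizations of $h_v(T)$ in the introduction versus Section~4), so there is no substantive difference in method.
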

We can show that $h_v(T)$ is close to $1$ when $T$ is large.
Hence, Corollary~\ref{cor:main-intro-cor} implies that, when $T$ is sufficiently large, under the assumption in the statement, we can obtain a set $\emptyset \subsetneq S \subsetneq V$ such that $\phi_G(S) = O(\sqrt{\phi_G})$, thereby avoiding the problem of computing the second smallest eigenvalue $\lambda_G$ of the hypergraph normalized Laplacian $\ca{L}_G$.
Algorithm~\ref{alg:main} gives a pseudocode of our algorithm.
\begin{algorithm}[t]
\caption{Finding Cheeger Cuts via Heat Equation}\label{alg:main}
\SetKwInOut{Input}{Input}
\SetKwInOut{Output}{Output}
\Input{ Hypergraph $G=(V,E,w)$ and $t > T > 0$}
\Output{ $S\subseteq V$ }
Select an arbitrary $v\in V$\;
Solve ($\mathrm{HE};\bm{\pi}_v$) to obtain $\bm{\rho}^{\bm{\pi}_v}_\xi$ for $\xi\in [T,t]$\;
$S_\text{out}\leftarrow \argmin_{S\in \bigcup_{\xi \in [T,t]}\sweep(\bm{\rho}^{\bm{\pi}_v}_\xi)}\phi_G(S)$\;
\Return{\emph{$S_\text{out}$}}
\end{algorithm}

Although we cannot solve the differential equation~\eqref{eq:heat-equation-hypergraph} exactly in polynomial time, we can efficiently simulate it by discretizing time using, e.g., the Euler method or the Runge-Kutta method.
Indeed these methods have already been used in practice~\cite{Yoshida:2016ig}.
Alternatively, we can use difference approximation, developed in the theory of monotone operators and evolution equations~\cite{miyadera1992nonlinear}, to obtain the following:
\begin{thm}\label{thm:difference-approximation-intro}
  Let $G=(V,E,w)$ be a weighted hypergraph and $v \in V$, and let $T \geq 1$ and $\lambda \in (0,1)$.
  Then, we can compute (a concise representation) of a solution ${\{\bm{\rho}_t^\lambda\}}_{0 \leq t \leq T}$ such that $\|\bm{\rho}_t^{\bm{\pi}_v} -\bm{\rho}^\lambda_t\|_{D^{-1}} = O(\sqrt{\lambda T})$ for every $0 \leq t \leq T$, in time polynomial in $1/\lambda$, $T$, and $\sum_{e \in E}|e|$.
\end{thm}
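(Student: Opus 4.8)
}
The plan is to realize the implicit (backward) Euler discretization of~\eqref{eq:heat-equation-hypergraph} and to control its error by the quantitative Crandall--Liggett estimate from the theory of nonlinear semigroups~\cite{miyadera1992nonlinear}. Recall from Section~\ref{sec:solution} that $\ca{L}_G$ is maximal monotone on the Hilbert space $(\RR^V,\inprod<\cdot,\cdot>_{D^{-1}})$, where $\inprod<\bm{x},\bm{y}>_{D^{-1}}:=\bm{x}^\top D^{-1}\bm{y}$, and that it is the subdifferential (in this inner product) of a convex potential $F$ assembled as a weighted sum over hyperedges of terms of the shape $w(e)\,(\max_{u\in e}\ell_u(\bm{x})-\min_{v\in e}\ell_v(\bm{x}))^2$, where the $\ell_u$ are fixed linear functionals. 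I would fix a step size $\lambda\in(0,1)$, write $J_\lambda:=(I+\lambda\ca{L}_G)^{-1}$ for the resolvent (single-valued and nonexpansive in $\|\cdot\|_{D^{-1}}$ by maximal monotonicity), set $N:=\lceil T/\lambda\rceil$, $\bm{y}_0:=\bm{\pi}_v$, $\bm{y}_{k+1}:=J_\lambda(\bm{y}_k)$, and define $\bm{\rho}^\lambda_t:=\bm{y}_{\lfloor t/\lambda\rfloor}$. The promised ``concise representation'' is then just the list $\bm{y}_0,\dots,\bm{y}_N$ together with this piecewise-constant interpolation rule, whose size is polynomial in $T/\lambda$ and $|V|$.

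First I would show that one application of $J_\lambda$ is polynomial-time computable. Evaluating $\bm{y}=J_\lambda(\bm{x})$ is exactly solving $\min_{\bm{y}\in\RR^V}\tfrac{1}{2\lambda}\|\bm{y}-\bm{x}\|_{D^{-1}}^2+F(\bm{y})$, a strongly convex problem; introducing for every $e\in E$ auxiliary variables $a_e,b_e$ with the constraints $a_e\ge\ell_u(\bm{y})$ for all $u\in e$ and $b_e\le\ell_v(\bm{y})$ for all $v\in e$ turns it into a convex quadratic program with $|V|+2|E|$ variables and $O(\sum_{e\in E}|e|)$ constraints, solvable to any prescribed accuracy in polynomial time. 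Because $J_\lambda$ is nonexpansive, the errors from solving the $N$ resolvent steps only approximately accumulate additively: with per-step accuracy $\varepsilon$ the error after $N$ steps is at most $N\varepsilon$, so taking $\varepsilon$ inverse-polynomial (e.g.\ $\varepsilon=\lambda^{3/2}/\sqrt{T}$) keeps it below $\sqrt{\lambda T}$ while the overall running time remains polynomial in $1/\lambda$, $T$, and $\sum_{e\in E}|e|$. From here on I would treat $J_\lambda$ as exact.

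Next I would invoke the rate estimate for the implicit Euler scheme. Since $\ca{L}_G$ is everywhere defined, $\bm{\pi}_v\in D(\ca{L}_G)$, so $m:=\|\ca{L}_G^{\circ}(\bm{\pi}_v)\|_{D^{-1}}$, the $D^{-1}$-norm of the minimal-norm element of $\ca{L}_G(\bm{\pi}_v)$, is a finite constant that can be bounded polynomially in the data of $G$. The Crandall--Liggett/Rasmussen bound for implicit Euler (see~\cite{miyadera1992nonlinear}) yields, for each $k$, $\|\bm{\rho}^{\bm{\pi}_v}_{k\lambda}-\bm{y}_k\|_{D^{-1}}\le m\sqrt{k}\,\lambda$, while the semigroup orbit being $m$-Lipschitz in $t$ gives $\|\bm{\rho}^{\bm{\pi}_v}_t-\bm{\rho}^{\bm{\pi}_v}_{k\lambda}\|_{D^{-1}}\le m\lambda$ whenever $t\in[k\lambda,(k+1)\lambda)$. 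For such $t$, writing $k=\lfloor t/\lambda\rfloor\le T/\lambda$ we get $\sqrt{k}\,\lambda\le\sqrt{\lambda T}$ and, since $\lambda<1\le T$, also $\lambda\le\sqrt{\lambda T}$; the triangle inequality then gives $\|\bm{\rho}^{\bm{\pi}_v}_t-\bm{\rho}^\lambda_t\|_{D^{-1}}\le m(\sqrt{k}\,\lambda+\lambda)=O(\sqrt{\lambda T})$ uniformly in $t\in[0,T]$, which is the claim.

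The main obstacle, I expect, is twofold. First, turning one resolvent evaluation into an explicitly polynomial-size convex program and then tracking how the inexact solves propagate through the $\Theta(T/\lambda)$ iterations so that both the accumulated error stays $O(\sqrt{\lambda T})$ and the running time stays polynomial --- routine in principle, but needing care with the choice of per-step accuracy. Second, verifying that the abstract error bound of~\cite{miyadera1992nonlinear}, normally stated for $x_0\in D(A)$ with $A$ maximal monotone on an abstract Hilbert space, transfers verbatim to $(\RR^V,\inprod<\cdot,\cdot>_{D^{-1}})$ with $A=\ca{L}_G$, and that $m=\|\ca{L}_G^{\circ}(\bm{\pi}_v)\|_{D^{-1}}$ indeed admits an explicit polynomial bound --- both of which become straightforward once the maximal monotonicity and subdifferential representation of $\ca{L}_G$ established in Section~\ref{sec:solution} are in hand.
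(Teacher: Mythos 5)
Your proposal is correct and takes essentially the same route as the paper: a backward-Euler/resolvent discretization with step $\lambda$, each step computed as a proximal-type convex program in the $D^{-1}$ (equivalently $D$-weighted) geometry, with the overall error controlled by standard nonlinear-semigroup estimates in which the minimal-norm element $|||\mathcal{L}_G(\bm{\pi}_v)|||$ appears as the constant. The only differences are in the ingredients, not the structure: the paper computes each prox exactly by invoking~\cite{fujii2018polynomial} and bounds the error via Miyadera's estimate (5.20) applied to two difference approximations with $\mu \to 0$~\cite{miyadera1992nonlinear}, whereas you solve an explicit convex QP approximately (handling error accumulation through nonexpansiveness of the resolvent) and quote the Crandall--Liggett rate together with the time-Lipschitz bound, which yields the same $O(\sqrt{\lambda T})$ conclusion.
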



\subsection{Directed graphs}

We briefly discuss directed graphs here, as we can show analogues of Theorem~\ref{thm:main-intro}, Corollary~\ref{cor:main-intro-cor}, and Theorem~\ref{thm:difference-approximation-intro} for such graphs with almost the same proof.

For a directed graph $G=(V,E,w)$, the \emph{degree} of a vertex $v \in V$ is $d_G(v) = \sum_{e \in E\mid v \in e}w(e)$ and the \emph{volume} of a set $S \subseteq V$ is $\vol_G(S) = \sum_{v \in S}d_G(v)$.
Note that we do not distinguish out-going and in-coming edges when calculating degrees.
Then, the \emph{conductance} of a set $\emptyset \subsetneq S \subsetneq V$ is defined as
\[
  \phi_G(S) := \frac{\min \{ \sum_{e \in \partial^+_G(S)}w(e),\sum_{e \in \partial^-_G(S)}w(e)\}}{\min\{\vol_G(S),\vol_G(V \setminus S) \}},
\]
where $\partial^+_G(S)$ and $\partial^-_G(S)$ are the sets of edges leaving and entering $S$, respectively.
Then, the \emph{conductance} of $G$ is $\phi_G := \min_{\emptyset \subsetneq S \subsetneq V}\phi_G(S)$.
Note that $\phi_G = 0$ when $G$ is a directed acyclic graph.

Yoshida~\cite{Yoshida:2016ig} introduced the notion of a Laplacian for directed graphs and derived Cheeger's inequality, which relates $\phi_G$ and the second-smallest eigenvalue $\lambda_G$ of the normalized Laplacian of $G$.
As with the hypergraph case, computing $\lambda_G$ is problematic, and we can apply an analogue of Theorem~\ref{thm:main-intro} to obtain a set of small conductance without computing $\lambda_G$.
In this paper, we focus on hypergraphs for simplicity of exposition.

\subsection{Sketch of proof}\label{subsec:sketch}

Chung~\cite{Chung:2007ep} presented analogues of
Theorem~\ref{thm:main-intro} and
 Corollary~\ref{cor:main-intro-cor} for usual undirected graphs.
Here, we review her proofs of these analogue, because our proofs of
Theorem~\ref{thm:main-intro} and Corollary~\ref{cor:main-intro-cor} extend them partially.


For the undirected graph case, we consider the following single-valued differential equation:
\begin{align*}
  \frac{d \bm{\rho}_t}{dt} = - \mathcal{L}_G\bm{\rho}_t \quad \text{and} \quad \bm{\rho}_0 = \bm{s}.
\end{align*}
This differential equation has a unique global solution $\bm{\rho}_t^{\bm{s}} = \exp(-t\mathcal{L}_G)\bm{s}$.
We define a function $f^{\bm{s}}\colon \mathbb{R}_+ \to \mathbb{R}$ as
\[
  f^{\bm{s}}(t) = \|\bm{\rho}^{\bm{s}}_{t/2} - \bm{\pi} \|_{D^{-1}}^2.
\]
When $G$ is connected, $\bm{\rho}^{\bm{s}}_t$ converges to $\bm{\pi}$ as $t\rightarrow\infty$ irrespective of $\bm{s}$; hence, $f^{\bm{s}}$ measures the difference between $\bm{\rho}^{\bm{s}}_{t/2}$ and its unique stationary distribution $\bm{\pi}$.
For a set $S \subseteq V$, we define $\bm{\pi}_S \in \mathbb{R}^V$ as $\bm{\pi}_S(v) = d(v)/\vol(S)$ if $v \in S$ and $\bm{\pi}_S(v) = 0$ otherwise.
Then, we can show the inequalities
\begin{align}
  \exp(- O(\phi(S)t)) \leq f^{\bm{\pi}_S}(t) \leq \exp\left(-\Omega\left({\left(\kappa^{\bm{\pi}_S}_t\right)}^2t\right)\right),
  \label{eq:key-inequalities-intro}
\end{align}
for every $S\subseteq V$, where $\kappa^{\bm{\pi}_S}_t$ is the minimum conductance of a sweep set with respect to the vector ${(\bm{\rho}^{\bm{\pi}_S}_t(v)/d(v))}_{v \in V}$.
From the closed solution of $\bm{\rho}^{\bm{s}}_t$, we observe that $\bm{\rho}^{\bm{\pi}_S}_{t/2} = \sum_{v \in S}\frac{d (v)}{\vol(S)}\bm{\rho}^{\bm{\pi}_v}_{t/2}$.
Then, we have
\begin{align*}
  & \exp(- O(\phi(S)t)) \leq
  f^{\bm{\pi}_S}(t)
  = \|\bm{\rho}^{\bm{\pi}_S}_{t/2} - \bm{\pi} \|_{D^{-1}}^2
  \leq {\left(\sum_{v \in S} \frac{d(v)}{\mathrm{vol}(S)} \|\bm{\rho}^{\bm{\pi}_v}_{t/2} - \bm{\pi} \|_{D^{-1}}\right)}^2 \tag{by triangle inequality}\\
  & \leq
  \max_{v \in S} \|\bm{\rho}^{\bm{\pi}_v}_{t/2} - \bm{\pi} \|_{D^{-1}}^2
  =
  \max_{v \in S} f^{\bm{\pi}_v}(t)
  \leq \max_{v \in S} \exp\left(-\Omega\left({\left(\kappa^{\bm{\pi}_v}_t\right)}^2 t\right)\right).
\end{align*}
Taking the logarithm yields the desired result.

The main obstacle to extending the above argument to hypergraphs is that $\bm{\rho}_t$ does not have a closed-form solution as $\mathcal{L}_G$ is no longer a linear operator and single-valued.
To overcome this obstacle, we observe that there exists the sequence $t_0 = 0 < t_1 < t_2 < \cdots$
such that $\mathcal{L}_G$ can be regarded as a linear operator
$\mathcal{L}_i$
in each interval $[t_i,t_{i+1})$.
Here, $\mathcal{L}_i$ is the normalized Laplacian of a graph constructed from the hypergraph $G$ and the vector
$\bm{\rho}_{t_i}$.
Then, we can show a counterpart of the second inequality of~\eqref{eq:key-inequalities-intro} for each $f_i^{\bm{s}}\colon \mathbb{R}_+ \to \mathbb{R}$ defined as $f_i^{\bm{s}}(\Delta) = \|\bm{\rho}^{\bm{s}}_{t_i+\Delta/2}-\bm{\pi} \|^2_{D^{-1}}$, which is sufficient for our analysis.
(We will use another equivalent definition for $f_i^{\bm{s}}$ for convenience. See Section~\ref{sec:analysis} for details.)

Another obstacle is that the triangle inequality applied in the above argument is not true in general, because $\bm{\rho}^{\bm{\pi}_S}_{t/2}$ may not generally be equal to $\sum_{v \in S}\frac{d(v)}{\vol(S)}\bm{\rho}^{\bm{\pi}_v}_{t/2}$ for the hypergraph case.
Due to this obstacle, it is hard to obtain a counterpart of the first inequality of~\eqref{eq:key-inequalities-intro}.
To overcome this problem, using the fact that the logarithmic derivative $g_v(t)$ is monotonically non-increasing and considering $t>T$ for $T >0$,
we obtain a non-trivial lower bound $\exp(-O (g_v(T)(t-T))$ of the
square of norm
$\|\bm{\rho}^{\bm{\pi}_v}_{t} - \bm{\pi} \|_{D^{-1}}^2$.
Then, we show that $g_v(T)$ goes to an eigenvalue of the normalized
Laplacian $\ca{L}_{G_{v,T}}$ as $T$ becomes larger.
Hence, if $g_v(T)$ is close to $\lambda_2(G_{v,T})$, by using the Cheeger inequality~\eqref{eq:cheeger-graph} for graphs and the
relation $\phi_{G_{v,T}} \leq \phi_G$, we obtain a counterpart of the first inequality of~\eqref{eq:key-inequalities-intro}.

\subsection{Related work}
As noted above, an analogue of Theorem~\ref{thm:main-intro} for usual graphs has been presented by Chung~\cite{Chung:2007ep}.
However, as the normalized Laplacian $\mathcal{L}_G = I-A_G D_G^{-1}$ is a matrix for the graph case, that analysis is much simpler than that presented herein.
Kloster and Gleich~\cite{Kloster:2014wq} have presented a deterministic algorithm that approximately simulates the heat equation for graphs. Hence, they extracted a tightly connected subset by considering a local part of the graph only.

The concept of the Laplacian for hypergraphs has been implicitly employed in semi-supervised learning on hypergraphs in the form $\bm{x}^\top L_G(\bm{x})$, where $\mathcal{L}_G(\bm{x}) = L_G(D_G^{-1}\bm{x})$~\cite{Hein:2013wc,Zhang:2017va}. This concept was then formally presented by Chan~et~al.~\cite{Chan:2018eu} at a later time.
Subsequently, the Laplacian concept was further generalized to handle submodular transformations~\cite{Li:2018we,Yoshida:2017zz}; this development encompasses Laplacians for graphs, hypergraphs~\cite{Chan:2018eu}, directed graphs~\cite{Yoshida:2016ig}, and directed hypergraphs~\cite{chan2017diffusion}. On our work here, we need
precise description of undirected graphs $\wt{G}_i$ introduced below. To achieve this,
we borrow some results in~\cite[Sections~3 and~4]{chan2017diffusion}.

Finally, we note that another type of Laplacian for hypergraphs, which essentially replaces each hyperedge with a clique, has been used in the literature~\cite{Agarwal:2006ci,Scholkopf:2006vj}.
We stress that that Laplacian differs from the Laplacian for hypergraphs studied in this work.



\subsection{Organization}
The remainder of this paper is organized as follows.
In Section~\ref{sec:pre}, we introduce the basic concepts used throughout this paper.
In Section~\ref{sec:heat-equation}, we show some basic facts on the heat equation~\eqref{sec:heat-equation}.
In Section~\ref{sec:analysis}, we prove Theorem~\ref{thm:main-intro}.
 We show that~\eqref{eq:heat-equation-hypergraph} has a unique global solution in Section~\ref{sec:solution}.
A proof of Theorem~\ref{thm:difference-approximation-intro} is given in Section~\ref{sec:difference-approximation}.


\section{Preliminaries}\label{sec:pre}

For a vector $\bm{x} \in \mathbb{R}^V$ and a set $S \subseteq V$, let $\bm{x}(S) = \sum_{v \in S}\bm{x}(v)$.
For a vector $\bm{x} \in \mathbb{R}^V$ and a positive semidefinite matrix $A \in \mathbb{R}^{V \times V}$, we define $\langle \bm{x},\bm{y}\rangle_A = \bm{x}^\top A\bm{y}$ and $\|\bm{x}\|_A= \sqrt{\langle \bm{x},\bm{x}\rangle_A} =  \sqrt{\bm{x}^\top A \bm{x}}$.

Let $G=(V,E,w)$ be a hypergraph.
We omit the subscript $G$ from notations such as $A_G$ when it is clear from the context.
For a set $S \subseteq V$, let $\bm{1}_S \in \mathbb{R}^V$ denote the characteristic vector of $S$, that is, $\bm{1}_S(v) = 1$ if $v \in S$ and $\bm{1}_S(v) = 0$ otherwise.
When $S=V$ or $S=\{v \}$, we simply write $\bm{1}$ and $\bm{1}_v$, respectively.
For a set $S \subseteq V$, we define a vector $\bm{\pi}_S \in \mathbb{R}^V$ as $\bm{\pi}_S(v) = \frac{d_G(v)}{\vol_G(S)}$ if $v \in S$ and $\bm{\pi}_S(v) = 0$ otherwise.
When $S=V$ or $S=\{v \}$, we simply write $\bm{\pi}$ and $\bm{\pi}_v$, respectively.
For a vector $\bm{\rho} \in \mathbb{R}^V$, we write $\bm{\rho}/d_G$  to denote a vector with $(\bm{\rho}/d_G)(v) = \bm{\rho}(v)/d_G(v)$ for each $v \in V$.

\subsection{Normalized Laplacian for hypergraphs}\label{subsec:laplacian-for-hypergraphs}

We define (random-walk) normalized Laplacian for hypergraphs precisely.
Let $G=(V,E,w)$ be a hypergraph.
For each hyperedge $e \in E$, we define a polytope $B_e = \mathrm{conv}(\{ \bm{1}_u - \bm{1}_v \mid u, v \in e \})$, where $\mathrm{conv}(S)$ denotes the convex hull of $S \subseteq \mathbb{R}^V$.
Then, the \emph{Laplacian} $L_G\colon \mathbb{R}^V \to 2^{\mathbb{R}^V}$ of $G$ is defined as
\begin{align}
  L_G(\bm{x}) = \left\{ \sum_{e \in E}w(e) \bm{b}_e\bm{b}_e^\top \bm{x} \mid \bm{b}_e \in \argmax_{\bm{b} \in B_e}\bm{b}^\top \bm{x}  \right\},
  \label{eq:laplacian}
\end{align}
and the \emph{normalized Laplacian} is defined as $\mathcal{L}_G\colon \bm{x} \mapsto L_G(D_G^{-1}\bm{x})$.

We can write $L_G(\bm{x})$ more explicitly as follows.
For each hyperedge $e \in E$, let $S_e = \argmax_{v\in e}\bm{x}(v)$ and $I_e = \argmin_{v\in e}\bm{x}(v)$.
Let $E' = \{uv \mid e \in E, u \in S_e, v \in I_e \} \cup \{vv \mid v \in V \}$.
Then, we arbitrarily define a function $w'_e\colon E' \to \mathbb{R}_+$ so that $w'_e(uv) > 0$ only if $u \in S_e$ and $v \in I_e$ and we have $\sum_{u \in S_e, v \in I_e}w'_e(uv) = w(e)$.
Then, we construct a graph $G' = (V,E',w')$, where $w'(uv) = \sum_{e \in E\mid u \in S_e, v \in I_e}w'_e(uv)$ for each $uv \in E'$ and $w'(vv) = d_G(v) - \sum_{e \in E' \mid v \in e}w'(e)$ for each $v \in V$.
Note that $d_G(v) = d_{G'}(v)$ for every $v \in V$.
Let $\mathcal{G}(G,\bm{x})$ be the set of graphs constructed this way.
Then, we have $L_G(\bm{x}) = \{ L_{G'}\bm{x} \mid  G' \in \mathcal{G}(G,\bm{x}) \}$.


We can understand Laplacian for hypergraphs in terms of submodular functions.
Let $F_e\colon 2^V \to \{0,1 \}$ be the cut function associated with a hyperedge $e \in E$, that is, $F_e(S) = 1$ if and only if $S \cap e \neq \emptyset$ and $(V \setminus S ) \cap e \neq \emptyset$.
It is known that $F_e$ is \emph{submodular}, that is, $F_e(S)+F_e(T) \geq F_e(S \cap T)+F_e(S \cup T)$ holds for every $S, T \subseteq V$.
Then, $B_e$ is the base polytope of $F_e$ and $\bm{b}_e$ in~\eqref{eq:laplacian} is chosen so that $\bm{b}_e^\top \bm{x} = f_e(\bm{x})$, where $f_e\colon \mathbb{R}^V \to \mathbb{R}$ is the \emph{Lov{\'a}sz extension} of $F_e$.
See~\cite{Fujishige:2005uc} for detailed definitions of these notions.

When $G=(V,E,w)$ is a usual graph, its \emph{Laplacian} $L_G \in \mathbb{R}^{V \times V}$  and \emph{(random-walk) normalized Laplacian} $\mathcal{L}_G \in \mathbb{R}^{V \times V}$ are defined as $D_G-A_G$ and $I_G-A_G D_G^{-1}$, respectively.
Indeed, this coincides with~\eqref{eq:laplacian} when we regard $G$ as a hypergraph with each hyperedge having size two.


\section{Properties of Solutions to Heat Equation}\label{sec:heat-equation}
We review some facts on the heat equation~\eqref{eq:heat-equation-hypergraph}.
We say that ${\{\bm{\rho}_t\}}_{t \geq 0}$ is a \emph{solution} of~\eqref{eq:heat-equation-hypergraph} if $\bm{\rho}_t$ is absolutely continuous with respect to $t$ (hence $\bm{\rho}_t$ is differentiable at almost all $t$) and $\bm{\rho}_0 = \bm{s}$ and satisfies $\frac{d}{dt} \bm{\rho}_t \in -\mathcal{L}_G (\bm{\rho}_t)$ for almost all $t \geq 0$.
As we see in Section~\ref{sec:solution}, the heat equation~\eqref{eq:heat-equation-hypergraph} always has a unique global solution.
Also as we mentioned, when $G$ is connected, $\bm{\rho}_t$ converges to $\bm{\pi}$ as $t \to \infty$ for any $\bm{s}\in \mathbb{R}^V$ with
$\sum_{v \in V}\bm{s}(v)=1$.

We consider the heat equation~\eqref{eq:heat-equation-hypergraph} on a hypergraph $G= (V, E,w)$ with an initial vector $\bm{s} \in \mathbb{R}^V$ and
let ${\{\bm{\rho}_t^{\bm{s}}\}}_{t \geq 0}$ be its unique solution.
Let $\bm{\mu}^{\bm{s}}_t = D^{-1}\bm{\rho}^{\bm{s}}_t $.
Then, there is an ordered equivalence relation $(\sigma^\ast, \succ)$ on $V$ consistent with ${\{ d^k\bm{\mu}^{\bm{s}}_t/dt^k \}}_k$ introduced in~\cite[Section~3.1]{chan2017diffusion}, i.e., for $u,v \in V$,
$u \sim_{\sigma^\ast} v $ if all higher (right) derivatives of
$\bm{\mu}^{\bm{s}}_t (u)$ and $\bm{\mu}^{\bm{s}}_t (v)$ at $t = 0$ are equal and
for two $\sigma^\ast$-equivalence classes $U$ and $U'$, $U \succ U'$
if there is an integer $l \in \ZZ_+$ such that for $u \in U$ and $u' \in U'$,
the following hold:
\begin{align*}
\left.\frac{d^{k}\bm{\mu}_{t}^{\bm{s}}}{dt^k}\right|_{t=0}(u) = \left.\frac{d^{k}\bm{\mu}_{t}^{\bm{s}}}{dt^k}\right|_{t=0}(u')  \text{ for } k=0, \dots, l-1, \text{ and }
 \left.\frac{d^{l}\bm{\mu}_{t}^{\bm{s}}}{dt^l}\right|_{t=0}(u)  > \left.\frac{d^{l}\bm{\mu}_{t}^{\bm{s}}}{dt^l}\right|_{t=0}(u').
\end{align*}
We define $\succeq$ as $\succ$ or $=$.
We divide $V$ by the equivalence relation $\sigma^\ast$ as $V = \bigsqcup_{k=1}^m U_k$ so that $U_k \succ U_{k+1}$ for every $k$.
For $v \in V$, let ${[v]}_{\sigma^\ast}$ be the equivalence class including $v$.

Let $\bm{x} = \bm{\mu}_{0}^{\bm{s}} = D^{-1}\bm{s}$.
For $e\in E$, we recall
$S_e = S_e(\bm{x}) = \argmax_{v\in e} \bm{x}(v)$ and $I_e = I_e(\bm{x}) = \argmin_{v\in e} \bm{x}(v)$.
Then, we define $S_e^{\sigma^\ast} = S_e^{\sigma^\ast}(\bm{x})$ and
$I_e^{\sigma^\ast} = I_e^{\sigma^\ast}(\bm{x})$ as
\begin{align*}
  S_e^{\sigma^\ast}(\bm{x}) &= \{ u \in S_e(\bm{x}) \mid {[u]}_{\sigma^\ast} \succeq  {[v]}_{\sigma^\ast} \text{ for any } v \in  S_e(\bm{x})  \} \\
  I_e^{\sigma^\ast}(\bm{x}) &= \{ u \in I_e(\bm{x}) \mid {[u]}_{\sigma^\ast} \preceq  {[v]}_{\sigma^\ast} \text{ for any }  v \in  I_e(\bm{x})  \}.
\end{align*}
We set $\wt{V}$ as a complete system of representatives $\{ u_1, u_2, \ldots, u_m \} \subseteq V$ ($u_k \in U_k$) and set $\wt{E} = \{ u_k u_l \subset \wt{V} \mid k, l = 1\dots m \}$.
We define the weights $\wt{w}$ on $\wt{E}$ as
\begin{align*}
 \wt{w}(u_k u_l) &= \sum_{e\in E, \ S_e^{\sigma^\ast}\cap U_k \neq \emptyset \atop
 I_e^{\sigma^\ast}\cap U_l \neq \emptyset} w_e +
 \sum_{e\in E, \ S_e^{\sigma^\ast}\cap U_l \neq \emptyset \atop
 I_e^{\sigma^\ast}\cap U_k \neq \emptyset} w_e \ \ \ \mathrm{for}  \ k \neq l, \\
  \wt{w}(u_k u_k) &=  \sum_{v \in U_k} d_G(v) -
\sum_{l=1, \dots, m, \atop l\neq k} \wt{w}(u_k u_l),
\end{align*}
Then, the triple $\wt{G} = (\wt{V}, \wt{E}, \wt{w})$ can be regarded as a weighted undirected graph.
Let $d_{\wt{G}}(u_k) = \sum_{v \in U_k} d_G(v)$,  $D_{\wt{G}} = \mathrm{diag}(d_{\wt{G}}(u_k)) \in \RR^{m\times m}$,
$\wt{ \bm{s}} = {\left(\sum_{u \in U_k}\bm{s}(u)\right)}_k \in \RR^m$, and $\wt{\bm{x}} = {(\bm{x}(u_k))}_k \in \RR^m$. Then, we have $\wt{\bm{s}} = \wt{D} \wt{\bm{x}}$ by using
the equivalence relation $\sigma^\ast$. Then for
$\wt{\bm{\rho}}_t^{\bm{s}} = {\left( \sum_{u \in U_k}\bm{\rho}_t^{\bm{s}}(u)\right)}_k \in \RR^m$, the following holds.
The proof is deferred to Appendix~\ref{sec:proof-of-graph-diffusion}.
\begin{thm}\label{thm:graph-diffusion}
$\wt{\bm{\rho}}_t^{\bm{s}}$ is a unique solution of the following heat equation:
\[
 \frac{d \wt{\bm{\rho}}_t}{dt} = - \mathcal{L}_{\wt{G}}
 \wt{\bm{\rho}}_t, \ \ \wt{\bm{\rho}}_0 = \wt{\bm{s}}.
\]
until when a next tie occurs for $\bm{\mu}_t^{\bm{s}}$, i.e., if we retake
the ordered equivalence relation $\sigma^\ast$ consistent with
${\{ d^k \bm{\mu}^{\bm{s}}_t /dt^k \}}_k$ at $t$, either $S_e^{\sigma^\ast}(\bm{\mu}_t^{\bm{s}})$ or $I_e^{\sigma^\ast}(\bm{\mu}_t^{\bm{s}})$ changes for some $e$. Here, $\ca{L}_{\wt{G}}$ is the graph normalized Laplacian of $\wt{G}$.
Moreover, the solution of this heat equation $\wt{\bm{\rho}}_t^{\bm{s}}$ determines
$\bm{\rho}_t^{\bm{s}}$ for such $t$.
\end{thm}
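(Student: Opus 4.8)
The plan is to show that on a maximal initial interval $[0,t_1)$ on which the combinatorial data $S_e^{\sigma^\ast},I_e^{\sigma^\ast}$ is frozen, the hypergraph diffusion collapses, after aggregating coordinates over the classes $U_k$, to an ordinary linear graph diffusion driven by $\mathcal{L}_{\wt{G}}$.

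First I would invoke the structural description of the diffusion from~\cite[Sections~3 and~4]{chan2017diffusion}: there is $t_1 > 0$ -- precisely the first time a tie for $\bm{\mu}_t^{\bm{s}}$ changes, in the sense of the theorem statement -- and a graph $G^\ast \in \mathcal{G}(G,\bm{x})$ such that, for $t\in[0,t_1)$, the unique solution of~\eqref{eq:heat-equation-hypergraph} satisfies $\bm{\rho}_t^{\bm{s}} = e^{-t\mathcal{L}_{G^\ast}}\bm{s}$; moreover $G^\ast$ may be chosen so that, for each hyperedge $e$, every edge of $G^\ast$ coming from $e$ joins $S_e^{\sigma^\ast}$ to $I_e^{\sigma^\ast}$ (this is permitted because $S_e^{\sigma^\ast}\subseteq S_e$ and $I_e^{\sigma^\ast}\subseteq I_e$), the total weight of those edges being $w(e)$. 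Together with the uniqueness established in Section~\ref{sec:solution}, this identifies $\bm{\rho}_t^{\bm{s}}$ on $[0,t_1)$ with the entire, hence real-analytic, curve $t\mapsto e^{-t\mathcal{L}_{G^\ast}}\bm{s}$.

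Next I would prove the key reduction: $\bm{\mu}_t^{\bm{s}} = D^{-1}\bm{\rho}_t^{\bm{s}}$ is constant on each class $U_k$ for every $t\in[0,t_1)$. Indeed $t\mapsto \bm{\mu}_t^{\bm{s}}$ is real-analytic, and by the definition of $\sigma^\ast$ the scalar analytic functions $t\mapsto\bm{\mu}_t^{\bm{s}}(u)$ and $t\mapsto\bm{\mu}_t^{\bm{s}}(v)$ have identical right derivatives of all orders at $t=0$ whenever $u\sim_{\sigma^\ast}v$; hence they coincide on a neighbourhood of $0$ and therefore on all of $[0,t_1)$. In particular $\bm{\mu}_0^{\bm{s}} = \bm{x}$ is constant on each $U_k$ (giving $\wt{D}\wt{\bm{x}} = \wt{\bm{s}}$, as asserted before the theorem), and for $v\in U_k$ we may write $\bm{\mu}_t^{\bm{s}}(v) = \mu_t^{(k)}$, so that $\wt{\bm{\rho}}_t^{\bm{s}}(k) = \sum_{u\in U_k} d_G(u)\,\mu_t^{(k)} = d_{\wt{G}}(u_k)\,\mu_t^{(k)}$, i.e.\ $D_{\wt{G}}^{-1}\wt{\bm{\rho}}_t^{\bm{s}}$ has $k$-th entry $\mu_t^{(k)}$.

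Finally I would carry out the aggregation bookkeeping. Using $\mathcal{L}_{G^\ast} = I - A_{G^\ast}D_{G^\ast}^{-1}$ and $d_{G^\ast} = d_G$, one has $(\mathcal{L}_{G^\ast}\bm{\rho}_t^{\bm{s}})(u) = \sum_{u'\ne u} w^\ast(uu')\bigl(\bm{\mu}_t^{\bm{s}}(u) - \bm{\mu}_t^{\bm{s}}(u')\bigr)$ (the self-loop cancels). Summing over $u\in U_k$: intra-class edges contribute $0$ because $\bm{\mu}_t^{\bm{s}}$ is constant on classes, while the total weight of the inter-class edges of $G^\ast$ between $U_k$ and $U_l$ ($l\ne k$) is, by the choice of $G^\ast$, exactly the sum of $w(e)$ over hyperedges $e$ with $S_e^{\sigma^\ast}\cap U_k\ne\emptyset,\ I_e^{\sigma^\ast}\cap U_l\ne\emptyset$ or $S_e^{\sigma^\ast}\cap U_l\ne\emptyset,\ I_e^{\sigma^\ast}\cap U_k\ne\emptyset$, which is $\wt{w}(u_ku_l)$. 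Hence $\frac{d}{dt}\wt{\bm{\rho}}_t^{\bm{s}}(k) = \sum_{u\in U_k}\frac{d}{dt}\bm{\rho}_t^{\bm{s}}(u) = -\sum_{l\ne k}\wt{w}(u_ku_l)\bigl(\mu_t^{(k)}-\mu_t^{(l)}\bigr) = -(\mathcal{L}_{\wt{G}}\wt{\bm{\rho}}_t^{\bm{s}})(k)$ on $[0,t_1)$, and $\wt{\bm{\rho}}_0^{\bm{s}} = \wt{\bm{s}}$; uniqueness for this linear ODE gives the first assertion, and the relation $\bm{\rho}_t^{\bm{s}}(v) = d_G(v)\,\wt{\bm{\rho}}_t^{\bm{s}}(k)/d_{\wt{G}}(u_k)$ for $v\in U_k$ gives the second. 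I expect the main obstacle to lie in the second step together with the precise invocation of~\cite{chan2017diffusion}: one must check carefully that the self-referential definition of $\sigma^\ast$ (read off from the derivatives of the very solution whose local structure it fixes) is legitimate, that $t_1$ is exactly the tie-change time in the statement, and that the analyticity argument is valid up to the endpoint $t=0$; once $\bm{\rho}_t^{\bm{s}} = e^{-t\mathcal{L}_{G^\ast}}\bm{s}$ on $[0,t_1)$ is secured, the remaining identities are routine linear algebra.
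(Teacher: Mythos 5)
Your overall route is the same as the paper's: use the equivalence classes of $\sigma^\ast$ from~\cite[Sections~3 and~4]{chan2017diffusion}, observe that the active weight of each hyperedge sits on $S_e^{\sigma^\ast}\times I_e^{\sigma^\ast}$, sum the Laplacian over each class $U_k$ to get the weights $\wt{w}(u_ku_l)$, conclude by uniqueness of the linear ODE on $\wt{G}$, and recover $\bm{\rho}_t^{\bm{s}}$ from constancy of $\bm{\mu}_t^{\bm{s}}$ on classes. The aggregation bookkeeping and the recovery step in your third paragraph match the paper's computation.

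The gap is in your first step, and it is not cosmetic, because your second step (constancy on classes via analyticity) rests on it. You assert, citing \cite{chan2017diffusion}, that there is a \emph{single fixed} graph $G^\ast\in\mathcal{G}(G,\bm{x})$ with $\bm{\rho}_t^{\bm{s}}=e^{-t\mathcal{L}_{G^\ast}}\bm{s}$ on the whole interval $[0,t_1)$. That reference does not give this vertex-level closed form, and it is stronger than what is true in general: a time-independent split $w'_e$ of each $w(e)$ over $S_e^{\sigma^\ast}\times I_e^{\sigma^\ast}$ would have to deliver flux to every vertex of a class in proportion to its degree \emph{simultaneously for all $t$} in the interval (since within a class all vertices share the common rate $d\bm{\mu}_t/dt$), whereas the theory only guarantees that at each instant the actual derivative lies in $-\mathcal{L}_G(\bm{\rho}_t)$, i.e.\ is realizable by \emph{some} split that may vary with $t$. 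In effect, the fixed-$G^\ast$ statement is a vertex-level strengthening of the very theorem you are proving, smuggled in as a citation. The paper avoids exactly this: it works with the differential inclusion $\frac{d}{dt}\bm{\rho}_t\in-\mathcal{L}_G(\bm{\rho}_t)$ and an instantaneous admissible split (supported on $S_e^{\sigma^\ast}\times I_e^{\sigma^\ast}$ because vertices of $S_e\setminus S_e^{\sigma^\ast}$ and $I_e\setminus I_e^{\sigma^\ast}$ leave $S_e$, $I_e$ after infinitesimal time — this is forced, not merely ``permitted''), and it checks that the class-aggregated expression is \emph{independent of the choice of split}, so the quotient ODE holds even if the split changes in time; constancy of $\bm{\mu}_t^{\bm{s}}$ on classes until the next tie is taken directly from the construction of $\sigma^\ast$ in \cite{chan2017diffusion} rather than derived from analyticity of a closed form. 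If you replace your step 1 and step 2 by these two facts, the rest of your argument goes through and coincides with the paper's proof.
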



Then, there is a time sequence $t_{0}= 0 < t_{1}< t_{2} < \cdots $ such that there is a weighted graph $\wt{G}_i = (\wt{V}_i, \wt{E}_i, \wt{w}_i) $ for each $i \in \mathbb{Z}_+$ such that the heat equation on the interval $[t_{i}, t_{i+1})$ satisfies
\[
  \frac{d \wt{\bm{\rho}}_t}{d t} = - \mathcal{L}_i \wt{\bm{\rho}}_t,
\]
where $\wt{\bm{\rho}}_{t} :=
{\left(\sum_{u \in U^i_k}\bm{\rho}_{t}(u)\right)}_k$ for equivalence classes ${\{ U^i_k \}}_k$ by
equivalence relation $\sigma^\ast$ consistent with
${\{ d^k \bm{\mu}^{\bm{s}}_t/dt^k \}}_k$ at $t=t_i$, and
$\mathcal{L}_i$ is the normalized Laplacian associated with $\wt{G}_i$.
Hence, we can write the solution $\wt{\bm{\rho}}_{i,\Delta} :=
{\left(\sum_{u \in U^i_k}\bm{\rho}_{t_i + \Delta}(u)\right)}_k$ for $\Delta \in [0,t_{i+1}-t_i)$ as
\begin{align}
  \wt{\bm{\rho}}_{i,\Delta}:= H_{i,\Delta} \wt{\bm{\rho}}_{t_{i}},
  \quad \text{where }
  H_{i,\Delta} := e^{-\Delta \mathcal{L}_i} = \sum_{n = 0}^\infty \frac{{(-\Delta)}^n \mathcal{L}_i^n}{n!}.
  \label{eq:rho-i-Delta}
\end{align}
Although $\wt{\bm{\rho}}_{i,\Delta}$ was originally defined for $\Delta \in [0,t_{i+1}-t_i)$, we can extend it to any $\Delta \geq 0$ by using~\eqref{eq:rho-i-Delta}.
When we want to stress the initial vector, we write
$\bm{\rho}_t^{\bm{s}}$, $\bm{\rho}_{i,\Delta}^{\bm{s}}$, $\wt{\bm{\rho}}_t^{\bm{s}}$, $\wt{\bm{\rho}}_{i,\Delta}^{\bm{s}}$, etc.

In what follows, we assume that for the initial vector $\bm{s}$ and $t > 0$,
there is an integer $n\in\ZZ_+$ and a sequence $0=t_0 < t_1 < \cdots < t_n < T$ satisfying the following condition:
On each interval $[t_i, t_{i+1}]$, $i = 0, 1, \dots, n-1$, and $[t_n, T]$,
the solution $\bm{\rho}^{\bm{s}}_t$ of heat equation~\eqref{eq:heat-equation-hypergraph} can be obtained
by the solution of the heat equation on the weighted graph $\wt{G}_i$ ($i=0, 1, \dots, n$) as above.
We assume this only for simplicity of exposition.
Indeed, if the above condition does not hold, the sequence ${\{ t_i \}}_i$ converges to some $T_0 < \infty$.
Then, the existence of the global solution $\bm{\rho}_{t}$ shown in Section~\ref{sec:solution} implies that
$\bm{\rho}_{T_0}$ is well defined, and hence another sequence ${\{t'_i \}}_i$ starts from $T_0$ again, we can continue this process until we reach $T$.
It is not hard to generalize our argument for such a case.




\section{Proof of Theorem~\ref{thm:main-intro}}\label{sec:analysis}

In this section, we prove Theorem~\ref{thm:main-intro}.
Missing proofs are found in Appendix~\ref{sec:proofs-of-useful-lemmas}.


Consider the heat equation~\eqref{eq:heat-equation-hypergraph}.
We borrow notations from Section~\ref{sec:heat-equation}.
For each $i \in \mathbb{Z}_+$, we define a function $f_i\colon \mathbb{R}_+ \to \mathbb{R}$ as
\[
  f_i(\Delta) := \wt{\bm{\rho}}_{i,0}^\top D_{\tilde{G}_i}^{-1} \left(\wt{\bm{\rho}}_{i,\Delta} - \wt{\bm{\pi}}^i\right),
\]
where $\wt{\bm{\pi}}^i = {\left(\sum_{u \in U^i_k} \bm{\pi}(u)\right)}_k = {\left(d_{\wt{G}_i}(u_k^i)/\vol \wt{V}_i\right)}_k$.
When we wish to stress the initial vector $\bm{s} \in \mathbb{R}^V$, we write $f^{\bm{s}}_i$.
As the following proposition implies, the value of $f_i(\Delta)$ indicates the difference between $\wt{\bm{\rho}}_{i,\Delta/2}$ and the stationary distribution $\wt{\bm{\pi}}^i$
on $\wt{G}_i$.
\begin{prop}\label{prop:norm}
  For any initial vector $\bm{s} \in \mathbb{R}^V$, $i \in \mathbb{Z}_+$, and $\Delta \geq 0$, we have
  \[
    f_i(\Delta) = \|\wt{\bm{\rho}}_{i,\Delta/2} - \wt{\bm{\pi}}^i\|_{D_{\tilde{G}_i}^{-1}}^2 = \sum_{v \in \wt{V}_i} {\left( \frac{\wt{\bm{\rho}}_{i,\Delta/2}(v)}{d_{\wt{G}_i}(v)} -\frac{1}{\vol(\wt{V}_i)}\right)}^2 d_{\wt{G}_i}(v) \geq 0.
  \]
\end{prop}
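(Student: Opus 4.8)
The plan is to exploit the fact that, by Theorem~\ref{thm:graph-diffusion} and the discussion in Section~\ref{sec:heat-equation}, on each interval the diffusion $\wt{\bm{\rho}}_{i,\Delta}$ is governed by a genuine \emph{linear} heat equation $\frac{d}{d\Delta}\wt{\bm{\rho}}_{i,\Delta} = -\mathcal{L}_i\wt{\bm{\rho}}_{i,\Delta}$ on the weighted graph $\wt{G}_i$, with the closed-form solution $\wt{\bm{\rho}}_{i,\Delta} = H_{i,\Delta}\wt{\bm{\rho}}_{i,0}$ where $H_{i,\Delta} = e^{-\Delta\mathcal{L}_i}$ and $\mathcal{L}_i = I - A_{\wt{G}_i}D_{\wt{G}_i}^{-1}$. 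So this is exactly the classical computation behind Chung's analysis, just carried out on $\wt{G}_i$; the only thing to check is that the two asserted expressions for $f_i(\Delta)$ agree. The last equality of the proposition is immediate from the definition of $\|\cdot\|_{D_{\tilde G_i}^{-1}}$ and of $\wt{\bm{\pi}}^i$, so the real content is the first equality $\wt{\bm{\rho}}_{i,0}^\top D_{\tilde G_i}^{-1}(\wt{\bm{\rho}}_{i,\Delta}-\wt{\bm{\pi}}^i) = \|\wt{\bm{\rho}}_{i,\Delta/2}-\wt{\bm{\pi}}^i\|_{D_{\tilde G_i}^{-1}}^2$.

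The key steps, in order: (i) Work in the inner product $\langle\cdot,\cdot\rangle_{D_{\tilde G_i}^{-1}}$. Observe that the ``symmetrized'' operator $D_{\tilde G_i}^{-1/2}\mathcal{L}_i D_{\tilde G_i}^{1/2} = I - D_{\tilde G_i}^{-1/2}A_{\wt{G}_i}D_{\tilde G_i}^{-1/2}$ is symmetric, which is equivalent to saying $\mathcal{L}_i$ is self-adjoint with respect to $\langle\cdot,\cdot\rangle_{D_{\tilde G_i}^{-1}}$; hence so is $H_{i,\Delta} = e^{-\Delta\mathcal{L}_i}$, and moreover $H_{i,\Delta}H_{i,\Delta} = H_{i,\Delta/2}^\top_{D_{\tilde G_i}^{-1}}H_{i,\Delta/2}$ in the sense that $\langle H_{i,\Delta}\bm{a},\bm{b}\rangle_{D_{\tilde G_i}^{-1}} = \langle H_{i,\Delta/2}\bm{a}, H_{i,\Delta/2}\bm{b}\rangle_{D_{\tilde G_i}^{-1}}$ by the semigroup property $H_{i,\Delta} = H_{i,\Delta/2}H_{i,\Delta/2}$ together with self-adjointness. (ii) Note $\wt{\bm{\pi}}^i$ is a stationary point, i.e. $\mathcal{L}_i\wt{\bm{\pi}}^i = \bm{0}$ and therefore $H_{i,\Delta}\wt{\bm{\pi}}^i = \wt{\bm{\pi}}^i$ for all $\Delta$; this uses $\wt{\bm{\pi}}^i = D_{\tilde G_i}\bm{1}/\vol(\wt V_i)$ and $\mathcal{L}_i(D_{\tilde G_i}\bm{1}) = \bm{0}$, exactly as in the graph case recalled in the introduction. (iii) Write $\wt{\bm{\rho}}_{i,\Delta} - \wt{\bm{\pi}}^i = H_{i,\Delta}(\wt{\bm{\rho}}_{i,0} - \wt{\bm{\pi}}^i) = H_{i,\Delta/2}H_{i,\Delta/2}(\wt{\bm{\rho}}_{i,0}-\wt{\bm{\pi}}^i) = H_{i,\Delta/2}(\wt{\bm{\rho}}_{i,\Delta/2}-\wt{\bm{\pi}}^i)$, again using (ii). (iv) Now compute
\[
  \wt{\bm{\rho}}_{i,0}^\top D_{\tilde G_i}^{-1}(\wt{\bm{\rho}}_{i,\Delta}-\wt{\bm{\pi}}^i)
  = \langle \wt{\bm{\rho}}_{i,0}-\wt{\bm{\pi}}^i,\ \wt{\bm{\rho}}_{i,\Delta}-\wt{\bm{\pi}}^i\rangle_{D_{\tilde G_i}^{-1}},
\]
where the first equality uses $\langle\wt{\bm{\pi}}^i, \wt{\bm{\rho}}_{i,\Delta}-\wt{\bm{\pi}}^i\rangle_{D_{\tilde G_i}^{-1}} = 0$; this orthogonality follows because $D_{\tilde G_i}^{-1}\wt{\bm{\pi}}^i = \bm{1}/\vol(\wt V_i)$ and $\bm{1}^\top(\wt{\bm{\rho}}_{i,\Delta}-\wt{\bm{\pi}}^i) = 0$, the latter being conservation of total mass for the graph heat equation. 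Then substitute the expression from (iii) and move one copy of $H_{i,\Delta/2}$ across the inner product using self-adjointness:
\[
  \langle \wt{\bm{\rho}}_{i,0}-\wt{\bm{\pi}}^i,\ H_{i,\Delta/2}(\wt{\bm{\rho}}_{i,\Delta/2}-\wt{\bm{\pi}}^i)\rangle_{D_{\tilde G_i}^{-1}}
  = \langle H_{i,\Delta/2}(\wt{\bm{\rho}}_{i,0}-\wt{\bm{\pi}}^i),\ \wt{\bm{\rho}}_{i,\Delta/2}-\wt{\bm{\pi}}^i\rangle_{D_{\tilde G_i}^{-1}}
  = \|\wt{\bm{\rho}}_{i,\Delta/2}-\wt{\bm{\pi}}^i\|_{D_{\tilde G_i}^{-1}}^2,
\]
using $H_{i,\Delta/2}(\wt{\bm{\rho}}_{i,0}-\wt{\bm{\pi}}^i) = \wt{\bm{\rho}}_{i,\Delta/2}-\wt{\bm{\pi}}^i$ once more. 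Nonnegativity is then clear.

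The only mild obstacle I anticipate is bookkeeping around the self-adjointness of $\mathcal{L}_i$ in the weighted inner product $\langle\cdot,\cdot\rangle_{D_{\tilde G_i}^{-1}}$ and the fact that $D_{\tilde G_i}$ here is the degree matrix of the \emph{contracted} graph $\wt G_i$ — one must be careful that $d_{\wt G_i}(u_k) = \sum_{v\in U_k}d_G(v)$ is indeed the degree in $\wt G_i$ (this is exactly the last display defining $\wt w(u_k u_k)$, which forces $\sum_l \wt w(u_k u_l) = d_{\wt G_i}(u_k)$), so that $\mathcal{L}_i = I - A_{\wt G_i}D_{\wt G_i}^{-1}$ has $\bm{\pi}$-image zero and the symmetrization argument goes through verbatim. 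Everything else is the standard reversible-Markov-chain identity $\langle \bm{a}, P^{\Delta}\bm{b}\rangle_{D^{-1}} = \langle P^{\Delta/2}\bm{a}, P^{\Delta/2}\bm{b}\rangle_{D^{-1}}$ applied to the continuous-time semigroup.
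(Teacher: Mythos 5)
Your proof is correct and is essentially the paper's own argument: the paper compresses the same computation into the single identity $\wt{\bm{\rho}}_{i,0}^\top D_{\tilde{G}_i}^{-1}(\wt{\bm{\rho}}_{i,\Delta}-\wt{\bm{\pi}}^i) = \| D_{\tilde{G}_i}^{-1/2}(H_{i,\Delta/2}-\wt{\bm{\pi}}^i\bm{1}^\top)\wt{\bm{\rho}}_{i,0}\|^2 = \| D_{\tilde{G}_i}^{-1/2}(\wt{\bm{\rho}}_{i,\Delta/2}-\wt{\bm{\pi}}^i)\|^2$, which rests on exactly your ingredients (the semigroup splitting $H_{i,\Delta}=H_{i,\Delta/2}H_{i,\Delta/2}$, self-adjointness of $H_{i,\Delta/2}$ with respect to $\langle\cdot,\cdot\rangle_{D_{\tilde{G}_i}^{-1}}$, stationarity of $\wt{\bm{\pi}}^i$, and conservation of mass). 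Note only that, just like the paper's one-line proof, your step (iv) implicitly uses $\bm{1}^\top\wt{\bm{\rho}}_{i,0}=1$ (i.e.\ $\sum_{v}\bm{s}(v)=1$), which is the setting in which the proposition is actually applied.
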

\begin{proof}
  We have
  \[
    \wt{\bm{\rho}}_{i,0}^\top D_{\tilde{G}_i}^{-1}(\wt{\bm{\rho}}_{i,\Delta}-\wt{\bm{\pi}}^i) =
    \| D_{\tilde{G}_i}^{-1/2} (H_{i,\Delta/2} - \wt{\bm{\pi}}^i \1^\top)\wt{\bm{\rho}}_{i,0}\|^2
    = \| D_{\tilde{G}_i}^{-1/2} (\wt{\bm{\rho}}_{i,\Delta/2} - \wt{\bm{\pi}}^i)\|^2.
    \qedhere
  \]
\end{proof}

The following lemma shows the compatibility of norms between vectors on $G$ and $\wt{G}_i$.
\begin{lem}\label{lem:equality-of-norms}
 For $t = t_i + \Delta$, $0 \leq \Delta\leq t_{i+1}- t_i$, we have
  $\| \wt{\bm{\rho}}_{i, \Delta} -\wt{\bm{\pi}}^i \|_{D_{\tilde{G}_i}^{-1}} =
    \| \bm{\rho}_t -\bm{\pi} \|_{D^{-1}}$.
\end{lem}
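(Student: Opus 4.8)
The plan is to unpack how the hypergraph solution $\bm{\rho}_t$ is reconstructed from the graph solution $\wt{\bm{\rho}}_{i,\Delta}$ on $\wt{G}_i$, and then match the two weighted $\ell_2$-norms term by term. The crucial structural input, which is exactly the content of Theorem~\ref{thm:graph-diffusion} together with the construction in Section~\ref{sec:heat-equation}, is that for every $t = t_i + \Delta$ with $0 \le \Delta \le t_{i+1}-t_i$ and every $\sigma^\ast$-equivalence class $U^i_k$ (taken at $t=t_i$), the normalized density $\bm{\mu}_t(u) = \bm{\rho}_t(u)/d_G(u)$ is the same for all $u \in U^i_k$: vertices in one $\sigma^\ast$-class have equal higher (right) derivatives of $\bm{\mu}_t$ at $t_i$, and since on $[t_i,t_{i+1})$ the vector $\wt{\bm{\rho}}_{i,\Delta} = H_{i,\Delta}\wt{\bm{\rho}}_{t_i}$ solves the linear heat equation $\frac{d}{d\Delta}\wt{\bm{\rho}}_{i,\Delta} = -\mathcal{L}_i\wt{\bm{\rho}}_{i,\Delta}$, this equality persists throughout $[t_i,t_{i+1})$ and extends to $\Delta = t_{i+1}-t_i$ by continuity of $\bm{\rho}_t$. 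Summing $\bm{\rho}_t(u) = d_G(u)\,\bm{\mu}_t(u)$ over $u \in U^i_k$ and dividing by $d_{\wt{G}_i}(u^i_k) = \sum_{u\in U^i_k}d_G(u)$ then yields
\[
  \frac{\bm{\rho}_t(u)}{d_G(u)} = \frac{\wt{\bm{\rho}}_{i,\Delta}(u^i_k)}{d_{\wt{G}_i}(u^i_k)} \qquad \text{for every } u \in U^i_k .
\]

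Next I would record the analogous identity for the stationary vectors. Since $\bm{\pi}(u) = d_G(u)/\vol(V)$, we have $\bm{\pi}(u)/d_G(u) = 1/\vol(V)$ for every $u \in V$; moreover $\vol(\wt{V}_i) = \sum_k d_{\wt{G}_i}(u^i_k) = \sum_k\sum_{u\in U^i_k}d_G(u) = \vol(V)$, and $\wt{\bm{\pi}}^i(u^i_k) = \sum_{u\in U^i_k}\bm{\pi}(u) = d_{\wt{G}_i}(u^i_k)/\vol(V)$, so $\wt{\bm{\pi}}^i(u^i_k)/d_{\wt{G}_i}(u^i_k) = 1/\vol(\wt{V}_i)$. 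Thus the two normalized stationary densities agree as well.

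Finally I would expand the left-hand norm as a sum over $V$ and group the terms by the classes $U^i_k$:
\[
  \|\bm{\rho}_t - \bm{\pi}\|_{D^{-1}}^2 = \sum_{v\in V} d_G(v){\left(\frac{\bm{\rho}_t(v)}{d_G(v)} - \frac{1}{\vol(V)}\right)}^2 = \sum_{k=1}^m \sum_{u\in U^i_k} d_G(u){\left(\frac{\wt{\bm{\rho}}_{i,\Delta}(u^i_k)}{d_{\wt{G}_i}(u^i_k)} - \frac{1}{\vol(\wt{V}_i)}\right)}^2 ,
\]
using the two displayed identities above. Since the summand no longer depends on $u$, the inner sum over $u \in U^i_k$ contributes the factor $\sum_{u\in U^i_k}d_G(u) = d_{\wt{G}_i}(u^i_k)$, and the result is $\sum_{k} d_{\wt{G}_i}(u^i_k)\big(\wt{\bm{\rho}}_{i,\Delta}(u^i_k)/d_{\wt{G}_i}(u^i_k) - 1/\vol(\wt{V}_i)\big)^2$, which equals $\|\wt{\bm{\rho}}_{i,\Delta} - \wt{\bm{\pi}}^i\|_{D_{\tilde{G}_i}^{-1}}^2$ by the same elementary expansion used in Proposition~\ref{prop:norm}. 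Taking square roots gives the claim.

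\textbf{Main obstacle.} Everything after the first paragraph is bookkeeping with weighted norms; the only real content is the structural fact that $\sigma^\ast$-equivalent vertices keep a \emph{common} normalized density throughout $[t_i,t_{i+1}]$, equal to the normalized density of the corresponding super-vertex of $\wt{G}_i$. This is precisely where the machinery of Section~\ref{sec:heat-equation}, Theorem~\ref{thm:graph-diffusion}, and the borrowed results of \cite{chan2017diffusion} are needed; I would simply invoke them rather than reprove the diffusion-on-$\wt{G}_i$ description.
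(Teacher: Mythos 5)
Your proposal is correct and follows essentially the same route as the paper's own proof: the paper likewise recalls that $\bm{\mu}_t$ is constant on each class $U^i_k$ (so $\wt{\bm{\rho}}_{i,\Delta}(u^i_k)=d_{\wt{G}_i}(u^i_k)\,\bm{\mu}_t(u^i_k)$), then expands both weighted norms, groups the hypergraph sum by the classes, and matches terms with the factor $\sum_{u\in U^i_k}d_G(u)=d_{\wt{G}_i}(u^i_k)$. Your extra remarks on the stationary densities and on the persistence of the equivalence over $[t_i,t_{i+1}]$ are just slightly more explicit versions of what the paper takes for granted from Theorem~\ref{thm:graph-diffusion}.
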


Theorem~\ref{thm:main-intro} is obtained by bounding $f_i(\Delta)$ from above and below.
To obtain an upper bound, for $0\leq T \leq t$, we define
\begin{align*}
  &\wt{\kappa}_{i,I}  =  \min\left\{ \phi_{\wt{G}_i}(S) \ \left| \ 
   \xi \in I, \, S \in \sweep\left(\frac{\wt{\bm{\rho}}_{i,\xi}}{d_{\wt{G}_i}}\right) \right. \right\} \quad ( i \in \mathbb{Z}_+, I \subset [0, t_{i + 1}-t_i]), \\
  &\wt{\kappa}_i  = \wt{\kappa}_{i,[0,t_{i+1}-t_i]} \quad (i \in \mathbb{Z}_+), \\
&\wt{\kappa}_{T,t}  = \min\left\{ \min_{j=i_0+1,\dots,i_1-1} \wt{\kappa}_j,\, \wt{\kappa}_{i_0, [T-t_{i_0}, t_{i_0+1}]},  \wt{\kappa}_{i_1,[0, t- t_{i_1}]} \right\},\\
  &\text{where } i_0,i_1 \in \mathbb{Z}_+ \text{ are such that }
  T \in [t_{i_0},t_{i_0+1}) \text{ and } 
  t \in [t_{i_1},t_{i_1+1}). 
\end{align*}
Again, when we wish to stress the initial vector $\bm{\pi}_v \in \mathbb{R}^V$, we write $\wt{\kappa}_{i,I}^{v}$, etc. In the following lemma, we present an upper bound on a quotient of norms of heat when the initial vector $\bm{s}$ is $\bm{\pi}_v$ for some $v\in V$.
\begin{lem}\label{lem:upper-bound}
For any $t \geq T \geq 0$, the following inequality holds:
  \[
   \frac{\| \bm{\rho}_t^{\bm{\pi}_v} - \bm{\pi} \|_{D^{-1}}^2}{
 \| \bm{\rho}_T^{\bm{\pi}_v} - \bm{\pi} \|_{D^{-1}}^2}
\leq \exp(- {(\wt{\kappa}_{T,t}^v)}^2(t-T)).
  \]
\end{lem}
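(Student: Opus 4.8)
The plan is to reduce the hypergraph estimate to a chain of estimates on the auxiliary graphs $\wt G_i$, and on each $\wt G_i$ to imitate the classical Chung-type argument bounding the decay rate of $f_i$ by the square of the best sweep-set conductance. First I would use Lemma~\ref{lem:equality-of-norms} to rewrite the ratio on the left-hand side entirely in terms of the graph quantities: for $T = t_{i_0}+\Delta_0$ and $t = t_{i_1}+\Delta_1$ we have $\|\bm\rho_T^{\bm\pi_v}-\bm\pi\|_{D^{-1}}^2 = f^v_{i_0}(2\Delta_0)$ and $\|\bm\rho_t^{\bm\pi_v}-\bm\pi\|_{D^{-1}}^2 = f^v_{i_1}(2\Delta_1)$, and more importantly, at each breakpoint $t_i$ the two descriptions agree, so $\|\bm\rho_{t_i}^{\bm\pi_v}-\bm\pi\|_{D^{-1}}^2 = f^v_{i-1}(2(t_i-t_{i-1})) = f^v_i(0)$. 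Hence the full ratio telescopes:
\[
  \frac{\|\bm\rho_t^{\bm\pi_v}-\bm\pi\|_{D^{-1}}^2}{\|\bm\rho_T^{\bm\pi_v}-\bm\pi\|_{D^{-1}}^2}
  = \frac{f^v_{i_1}(2\Delta_1)}{f^v_{i_0}(2\Delta_0)}
  = \frac{f^v_{i_0}(2(t_{i_0+1}-t_{i_0}))}{f^v_{i_0}(2\Delta_0)}\cdot\Bigl(\prod_{j=i_0+1}^{i_1-1}\frac{f^v_j(2(t_{j+1}-t_j))}{f^v_j(0)}\Bigr)\cdot\frac{f^v_{i_1}(2\Delta_1)}{f^v_{i_1}(0)}.
\]
So it suffices to bound each factor $f^v_i(\Delta_{\mathrm{end}})/f^v_i(\Delta_{\mathrm{start}})$ by $\exp(-\wt\kappa_{i,I}^{\,2}(\Delta_{\mathrm{end}}-\Delta_{\mathrm{start}})/2)$ — wait, more carefully, by $\exp(-\wt\kappa_{i,I}^{\,2}\cdot(\text{elapsed }t\text{-time}))$ with $I$ the corresponding interval of $\xi$-values — and then multiply, using that $\wt\kappa_{T,t}$ is the minimum of all the relevant $\wt\kappa_{i,I}$, so that each local exponent is at most $-\wt\kappa_{T,t}^2\cdot(\text{elapsed time})$, and the elapsed times sum to $t-T$.

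Next I would prove the per-interval bound. Since on $[t_i,t_{i+1})$ the dynamics is the \emph{linear} heat flow $\frac{d}{dt}\wt{\bm\rho}_t = -\mathcal L_i\wt{\bm\rho}_t$ on $\wt G_i$, by Proposition~\ref{prop:norm} we have $f_i(\Delta) = \|\wt{\bm\rho}_{i,\Delta/2}-\wt{\bm\pi}^i\|_{D_{\wt G_i}^{-1}}^2$, and a standard computation gives
\[
  -\frac{d}{d\Delta}\log f_i(\Delta) = \frac{\langle \wt{\bm\rho}_{i,\Delta/2}-\wt{\bm\pi}^i,\ \mathcal L_i(\wt{\bm\rho}_{i,\Delta/2}-\wt{\bm\pi}^i)\rangle_{D_{\wt G_i}^{-1}}}{\|\wt{\bm\rho}_{i,\Delta/2}-\wt{\bm\pi}^i\|_{D_{\wt G_i}^{-1}}^2},
\]
i.e. the Rayleigh quotient of $D_{\wt G_i}^{-1/2}(\wt{\bm\rho}_{i,\Delta/2}-\wt{\bm\pi}^i)$ for the symmetric normalized Laplacian of $\wt G_i$ (note $\mathcal L_i = D_{\wt G_i}^{-1/2}\cdot(\text{sym.\ Laplacian})\cdot D_{\wt G_i}^{1/2}$, and $\wt{\bm\rho}_{i,\Delta/2}-\wt{\bm\pi}^i$ is orthogonal to the trivial eigenvector $D_{\wt G_i}\1$ because the heat flow preserves total mass). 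The classical sweep-cut lemma behind Cheeger's inequality says this Rayleigh quotient is at least $\tfrac12 \phi_{\wt G_i}(S)^2$ for the best sweep set $S$ with respect to the vector $\wt{\bm\rho}_{i,\Delta/2}/d_{\wt G_i}$; taking the minimum over all $\xi = \Delta/2$ in the relevant range gives $-\frac{d}{d\Delta}\log f_i(\Delta) \ge \tfrac12\wt\kappa_{i,I}^2$ (here the factor $\tfrac12$ versus $2$ in the exponent will get fixed by the $\Delta$-versus-$t$ and $\Delta/2$ bookkeeping — the claimed exponent $-\wt\kappa^2(t-T)$ should come out after integrating $d\Delta$ and using $t - T$ = total $\Delta$). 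Integrating $-\frac{d}{d\Delta}\log f_i$ over the interval and exponentiating yields the per-interval bound; multiplying all of them through the telescoped product gives the lemma.

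The main obstacle I anticipate is not any single inequality but the bookkeeping around the breakpoints $\{t_i\}$: I must be careful that (i) $f_i$ is continuous and even $C^1$ across each $t_i$ so the logarithmic derivative argument applies interval by interval and the telescoping is valid (this uses that $\wt{\bm\rho}_t^{\bm s}$ determines $\bm\rho_t^{\bm s}$ and agrees on overlaps, Theorem~\ref{thm:graph-diffusion}); (ii) the sweep sets are taken with respect to $\wt{\bm\rho}_{i,\xi}/d_{\wt G_i}$ and the definition of $\wt\kappa_{T,t}$ partitions the $\xi$-range exactly into $[T-t_{i_0},t_{i_0+1}]$, full intervals $[0,t_{j+1}-t_j]$, and $[0,t-t_{i_1}]$, matching the telescoping; and (iii) the argument should in principle require $\wt\kappa_{i,I}$ to control conductance on $\wt G_i$, whereas the theorem's conclusion is stated with $\wt\kappa$, not the hypergraph $\kappa$ — so for Lemma~\ref{lem:upper-bound} itself everything stays on the graph side and I just need the relation $\| \wt{\bm\rho}_{i,\Delta} -\wt{\bm\pi}^i \|_{D_{\wt G_i}^{-1}} = \| \bm\rho_t -\bm\pi \|_{D^{-1}}$ from Lemma~\ref{lem:equality-of-norms} to translate the final bound back. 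A secondary subtlety is the degenerate case $f_i^v(\Delta_{\mathrm{start}}) = 0$ (i.e. the heat has already reached $\bm\pi$), which forces $\bm\rho$ to be constant afterward and makes the inequality trivial; I would dispose of it separately at the start.
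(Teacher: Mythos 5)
Your proposal is correct and follows essentially the same route as the paper: rewrite the ratio via Lemma~\ref{lem:equality-of-norms} and Proposition~\ref{prop:norm} as a telescoping product of $f_i$-ratios across the breakpoints $\{t_i\}$, bound each factor by integrating the logarithmic derivative, which by Lemma~\ref{lem:log-derivative} equals minus the Rayleigh quotient of the centered vector and is at least $\tfrac12\wt{\kappa}_{i,\Delta/2}^2$ by the sweep-cut form of Cheeger's inequality (the paper justifies the centering at $\wt{\bm{\pi}}^i$ by optimizing the shift $c$, you by mass-conservation orthogonality --- equivalent). The only loose thread is your bookkeeping hedge: the total elapsed $\Delta$ is $2(t-T)$, not $t-T$, and it is exactly this factor $2$ that cancels the $\tfrac12$ from Cheeger to give the stated exponent $-{(\wt{\kappa}_{T,t}^v)}^2(t-T)$, as in the paper's computation.
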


Next, we consider a lower bound on the squared norm of the heat with the initial vector $\pi_v$.
Let $T \geq 0$ and set
\[
 g_v(T) = -\left.\frac{d}{dt} \log \| \bm{\rho}_t^{\bm{\pi}_v} - \bm{\pi} \|_{D^{-1}}^2 \right|_{t=T} =
 -\frac{\frac{d}{dt}\| \bm{\rho}_t^{\bm{\pi}_v} - \bm{\pi} \|_{D^{-1}}^2|_{t=T}}
 {\| \bm{\rho}_T^{\bm{\pi}_v} - \bm{\pi} \|^2_{D^{-1}}} 
 = 2\frac{\langle 
 \bm{\rho}_T^{\bm{\pi}_v}, \ca{L}(\bm{\rho}_T^{\bm{\pi}_v})\rangle_{D^{-1}}}
 {\| \bm{\rho}_T^{\bm{\pi}_v} - \bm{\pi} \|^2_{D^{-1}}}.
\]
Then, the following inequality holds:
\begin{lem}\label{lem:lower-bound}
For any $t\geq T$, the following inequality holds:
\[
 \frac{\| \bm{\rho}_t^{\bm{\pi}_v} - \bm{\pi} \|_{D^{-1}}^2}{
 \| \bm{\rho}_T^{\bm{\pi}_v} - \bm{\pi} \|_{D^{-1}}^2}
 \geq \exp\left( - g_v(T)(t-T) \right).
\]
\end{lem}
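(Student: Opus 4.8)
The plan is to exploit the convexity of the function $t \mapsto \log \|\bm{\rho}_t^{\bm{\pi}_v} - \bm{\pi}\|_{D^{-1}}^2$, or equivalently the monotonicity of its derivative $-g_v(t)$. Recall from the definition that $g_v(T)$ is (twice) the Rayleigh quotient of $D^{-1/2}(\bm{\rho}_T^{\bm{\pi}_v} - \bm{\pi})$ with respect to the normalized Laplacian operator $\bm{x} \mapsto D^{-1/2}\mathcal{L}_G(D^{-1/2}\bm{x})$. On each interval $[t_i, t_{i+1})$ the dynamics reduce, via Theorem~\ref{thm:graph-diffusion} and Lemma~\ref{lem:equality-of-norms}, to the linear heat equation $\frac{d}{dt}\wt{\bm{\rho}}_t = -\mathcal{L}_i \wt{\bm{\rho}}_t$ on the graph $\wt{G}_i$, where $\mathcal{L}_i$ is an honest symmetric positive semidefinite operator (after conjugating by $D_{\wt{G}_i}^{1/2}$). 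First I would show that on each such interval the map $t\mapsto \log\|\bm{\rho}_t^{\bm{\pi}_v}-\bm{\pi}\|_{D^{-1}}^2$ is convex: writing $\bm{y}_t = D_{\wt{G}_i}^{-1/2}(\wt{\bm{\rho}}_{i,\Delta} - \wt{\bm{\pi}}^i) = e^{-\Delta \wt{\mathcal{L}}_i}\bm{y}_0$ with $\wt{\mathcal{L}}_i$ symmetric PSD, the function $\Delta \mapsto \log\|\bm{y}_\Delta\|^2 = \log\langle \bm{y}_0, e^{-2\Delta\wt{\mathcal{L}}_i}\bm{y}_0\rangle$ is the logarithm of a sum of exponentials $\sum_j c_j e^{-2\mu_j \Delta}$ with $c_j \geq 0$, hence convex (this is the standard log-convexity of Laplace transforms / the fact that $\log \mathbb{E}[e^{-\theta X}]$ is convex in $\theta$).

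Next I would check that the pieces glue: $\|\bm{\rho}_t^{\bm{\pi}_v}-\bm{\pi}\|_{D^{-1}}^2$ is continuous across the breakpoints $t_i$ (by Lemma~\ref{lem:equality-of-norms}, which identifies the norms on $G$ and $\wt{G}_i$), so $\log\|\cdot\|^2$ is continuous; and its right derivative $-g_v(t)$ is monotonically non-decreasing in $t$ — on each interval by the piecewise convexity just established, and across breakpoints because the Rayleigh-quotient expression for $g_v$ is a genuine function of the state $\bm{\rho}_t^{\bm{\pi}_v}$ (not of which interval we are in), so there is no jump that would break monotonicity of $-g_v$; in fact at a breakpoint the relevant quadratic forms only change by a relabeling of tied coordinates, so $g_v$ is continuous there. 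Granting that $t\mapsto -g_v(t)$ is non-decreasing (equivalently $g_v$ is non-increasing), a continuous piecewise-smooth function whose derivative is non-decreasing is convex globally on $[0,\infty)$.

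Once global convexity of $L(t) := \log\|\bm{\rho}_t^{\bm{\pi}_v}-\bm{\pi}\|_{D^{-1}}^2$ is in hand, the lemma is immediate: by convexity, for $t \geq T$,
\[
L(t) \geq L(T) + L'(T)(t-T) = L(T) - g_v(T)(t-T),
\]
and exponentiating gives $\|\bm{\rho}_t^{\bm{\pi}_v}-\bm{\pi}\|_{D^{-1}}^2 \geq \|\bm{\rho}_T^{\bm{\pi}_v}-\bm{\pi}\|_{D^{-1}}^2 \exp(-g_v(T)(t-T))$, which is exactly the claimed inequality after dividing. I expect the main obstacle to be the rigorous treatment of the breakpoints $t_i$: one must verify that $L$ is differentiable (or at least that its left and right derivatives satisfy $L'_-(t_i) \leq L'_+(t_i)$, which is all convexity needs) at each $t_i$, and handle the standing assumption from Section~\ref{sec:heat-equation} that only finitely many breakpoints occur before $T$ — or, absent that, that the breakpoints accumulate only at isolated times past which the global solution (Section~\ref{sec:solution}) lets us restart the argument. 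The interior convexity on each linear piece and the final exponentiation are routine; the care is all in the monotonicity of $g_v$ through the non-smooth points, which follows because $g_v$ depends only on the trajectory's current position through a fixed Rayleigh quotient and the Laplacian operator $\mathcal{L}_G$ is a monotone operator, so the energy $\|\bm{\rho}_t - \bm{\pi}\|_{D^{-1}}^2$ and its logarithmic decay rate behave consistently across the gluing.
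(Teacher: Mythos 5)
Your proposal is correct, and it reaches the same key fact as the paper --- that the logarithmic decay rate $g_v(t)$ is non-increasing in $t$ --- but establishes it by a different, self-contained route. The paper simply cites the monotonicity of the Rayleigh quotient along the hypergraph diffusion (Chan et al., Lemma 4.11(3)), observes $-\frac{d}{dt}\log\|\bm{\rho}_t^{\bm{\pi}_v}-\bm{\pi}\|_{D^{-1}}^2 = g_v(t) \leq g_v(T)$ for $t \geq T$, and integrates over $[T,t]$; you instead prove the monotonicity directly, by noting that on each interval $[t_i,t_{i+1})$ the flow is the linear heat equation on $\wt{G}_i$, so $\log\|\cdot\|^2$ is a log-sum-of-decaying-exponentials and hence convex (this is exactly the content of the paper's Lemma~\ref{lem:log-second-derivative}, proved there via Cauchy--Schwarz), and then gluing across breakpoints by continuity of $g_v$. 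Your gluing step is the only place where more care is warranted: the phrase ``genuine function of the state'' should be backed by the observation that the numerator $\langle \bm{\rho}_t, \mathcal{L}_G(\bm{\rho}_t)\rangle_{D^{-1}} = \sum_{e\in E} w(e)\, f_e(D^{-1}\bm{\rho}_t)^2$ is single-valued (the Lov\'asz-extension value $\bm{b}_e^\top D^{-1}\bm{\rho}_t$ does not depend on the choice of maximizer $\bm{b}_e$) and continuous in $\bm{\rho}_t$, which together with continuity of $\bm{\rho}_t$ in $t$ makes $g_v$ continuous through the breakpoints and hence globally non-increasing; with that made explicit, your supporting-line inequality $L(t) \geq L(T) - g_v(T)(t-T)$ is equivalent to the paper's integration of the differential inequality, and the proof goes through. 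The trade-off is that your argument avoids the external citation at the cost of re-deriving (and carefully gluing) the piecewise convexity the paper already has in its appendix machinery.
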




Based on these lemmas, we obtain the following:
\begin{thm}\label{thm:main-S}
  Let $G=(V,E,w)$ be a hypergraph, and $v \in V$ and $t \geq T \geq 0$.
  Then, we have
  \[
    g_v(T) \geq {(\wt{\kappa}_{T,t}^v)}^2.
  \]
\end{thm}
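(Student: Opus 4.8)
The plan is to combine Lemma~\ref{lem:upper-bound} and Lemma~\ref{lem:lower-bound}, which respectively bound the ratio $\| \bm{\rho}_t^{\bm{\pi}_v} - \bm{\pi} \|_{D^{-1}}^2 / \| \bm{\rho}_T^{\bm{\pi}_v} - \bm{\pi} \|_{D^{-1}}^2$ from above by $\exp(-(\wt{\kappa}_{T,t}^v)^2(t-T))$ and from below by $\exp(-g_v(T)(t-T))$. Chaining these two gives $\exp(-g_v(T)(t-T)) \le \exp(-(\wt{\kappa}_{T,t}^v)^2(t-T))$, and since the exponential is monotone and $t - T > 0$ (the case $t = T$ being trivial since both sides of the claimed inequality are equalities in the limit, or can be handled by continuity), we may cancel the factor $t-T$ and reverse the inequality to conclude $g_v(T) \ge (\wt{\kappa}_{T,t}^v)^2$. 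This is essentially immediate once the two lemmas are in hand.

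The one genuine subtlety is that Theorem~\ref{thm:main-intro} is stated in terms of $\kappa_{T,t}^v = \min\{\phi_G(S') \mid \xi \in [T,t],\ S' \in \sweep(\bm{\rho}^{\bm{\pi}_v}_\xi)\}$, i.e.\ conductances of sweep sets in the \emph{original hypergraph} $G$, whereas $\wt{\kappa}_{T,t}^v$ is defined through conductances $\phi_{\wt{G}_i}$ of sweep sets of $\wt{\bm{\rho}}_{i,\xi}/d_{\wt{G}_i}$ in the \emph{auxiliary graphs} $\wt{G}_i$. So Theorem~\ref{thm:main-S} as literally stated only yields $g_v(T) \ge (\wt{\kappa}_{T,t}^v)^2$; deriving the corollary Theorem~\ref{thm:main-intro} requires the inequality $\wt{\kappa}_{T,t}^v \ge \kappa_{T,t}^v$. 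The reason this holds is that each $\wt{G}_i$ is obtained from $G$ by contracting $\sigma^\ast$-equivalence classes and replacing hyperedges by weighted edges between the extreme classes in a way that preserves degrees ($d_{\wt{G}_i}(u_k) = \sum_{v \in U_k} d_G(v)$) and can only weakly decrease the boundary weight across a sweep cut (a hyperedge is cut in $G$ whenever the corresponding contracted edge is cut in $\wt{G}_i$, by the construction of $S_e^{\sigma^\ast}, I_e^{\sigma^\ast}$); moreover a sweep set of $\wt{\bm{\rho}}_{i,\xi}/d_{\wt{G}_i}$ pulls back to a sweep set of $\bm{\rho}_\xi^{\bm{\pi}_v}/d_G$ since $\bm{\mu}_\xi$ is constant on each class. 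Hence $\phi_G(S') \le \phi_{\wt{G}_i}(S)$ for the pulled-back set, giving $\kappa_{T,t}^v \le \wt{\kappa}_{T,t}^v$.

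Concretely, I would carry out the steps in this order. First, state that $t = T$ is trivial and assume $t > T$. Second, write down the two-sided bound obtained from Lemmas~\ref{lem:lower-bound} and~\ref{lem:upper-bound}:
\[
  \exp\!\left(-g_v(T)(t-T)\right) \le \frac{\| \bm{\rho}_t^{\bm{\pi}_v} - \bm{\pi} \|_{D^{-1}}^2}{\| \bm{\rho}_T^{\bm{\pi}_v} - \bm{\pi} \|_{D^{-1}}^2} \le \exp\!\left(-(\wt{\kappa}_{T,t}^v)^2(t-T)\right).
\]
Third, take logarithms, divide by $-(t-T) < 0$ (which flips the inequality), to get $g_v(T) \ge (\wt{\kappa}_{T,t}^v)^2$, which is exactly Theorem~\ref{thm:main-S}. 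I expect the main obstacle to be nothing in this short derivation itself but rather in justifying the comparison $\wt{\kappa}_{T,t}^v \ge \kappa_{T,t}^v$ cleanly — i.e.\ the claim that contraction to $\wt{G}_i$ does not increase conductance of corresponding sweep sets — which rests on the structural properties of $\wt{G}_i$ imported from Theorem~\ref{thm:graph-diffusion} and the results of~\cite{chan2017diffusion}; the degree-preservation is by definition, and the boundary comparison needs the observation that for any threshold $\tau$, the sweep set $\{u_k : \wt{\bm{\rho}}_{i,\xi}(u_k)/d_{\wt{G}_i}(u_k) \ge \tau\}$ corresponds to $\{v \in V : \bm{\rho}_\xi^{\bm{\pi}_v}(v)/d_G(v) \ge \tau\}$ (a union of whole classes) and that every hyperedge crossing this set in $G$ contributes its full weight to some crossing edge of $\wt{G}_i$.
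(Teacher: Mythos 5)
Your proposal is correct and is exactly the argument the paper intends: Theorem~\ref{thm:main-S} follows by chaining the lower bound of Lemma~\ref{lem:lower-bound} with the upper bound of Lemma~\ref{lem:upper-bound} and cancelling the factor $t-T$. Your side remark about relating $\wt{\kappa}_{T,t}^v$ to $\kappa_{T,t}^v$ concerns only the passage to Theorem~\ref{thm:main-intro}, which the paper handles via Lemma~\ref{lem:sweep} (giving equality $\wt{\kappa}_{T,t}^v=\kappa_{T,t}^v$ rather than just the inequality you argue), so it does not affect the statement at hand.
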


We remark that the $g_v(T)$ is close to an eigenvalue of normalized 
Laplacian of an undirected graph in the following sense: 
For $\bm{\rho}^{\bm{\pi}_v}_{t}$, as in Section \ref{subsec:laplacian-for-hypergraphs}, 
there is a graph $G_{v,t} = G_{\bm{\rho}^{\bm{\pi}_v}_{t}} = 
(V, E_{v,t}, w_{v,t})$ such that 
$\ca{L}_{G_{v,t}} \bm{\rho}^{\bm{\pi}_v}_{t} \in \ca{L}_G 
(\bm{\rho}^{\bm{\pi}_v}_{t})$.
We remark that if $t \in [t_i, t_{i+1})$, the graph $\wt{G}_i$ 
introduced in Section \ref{sec:heat-equation} is obtained by  contracting $G_{v,t}$. 
We fix $T \geq 0$ and consider a small $\Delta>0$. Let 
$t = T + \Delta$. 
Then, 
$\bm{\rho}^{\bm{\pi}_v}_{t}$ can be written by 
\[
 \bm{\rho}^{\bm{\pi}_v}_{t} = 
 \sum_{j = 1}^{n} a_j
 e^{-\lambda_j(G_{v,T})(T+\Delta)}\bm{u}_j(G_{v,T}),
\]
for some $a_j \in \RR$. Here, 
$0=\lambda_1(G_{v,T})\leq \lambda_2(G_{v,T})\leq \cdots \leq \lambda_{ {n}}(G_{v,T}) \leq 2$ are the eigenvalues of
$\ca{L}_{G_{v,T}}$ and $\bm{u}_1(G_{v,T}), \dots, \bm{u}_{ {n}}(G_{v,T})$ are these eigenvectors such that $\{ D^{-1/2}\bm{u}_1(G_{v,T}), \dots, D^{-1/2}\bm{u}_{ {n}}(G_{v,T})\}$ is orthonormal.  
We set $j_0$ as
\[
 j_0 = \min\{ j \mid j\geq 2, a_j \neq 0\}.
\]
Then, the $g_v(T)$ can be rephrased as
\begin{align*}
 g_v(T) &= 2 \frac{\sum_{j = j_0}^{n} a_j^2
 \lambda_j(G_{v,T})e^{-2\lambda_j(G_{v,T})T}}
 {\sum_{j = j_0}^{n} a_j^2
 e^{-2\lambda_j(G_{v,T})T}}  \\
 &= 2 \lambda_{j_0}(G_{v,T})
 \left(\frac{1 + {(a_{j_0 + 1}/a_{j_0})}^2
 (\lambda_{j_0 + 1}(G_{v,T})/\lambda_{j_0}(
 G_{v,T}))e^{-2(\lambda_{j_0+1}(G_{v,T})-\lambda_{j_0}(G_{v,T}))T} + \cdots}
 {1 + {(a_{j_0 + 1}/a_{j_0})}^2
 e^{-2(\lambda_{j_0+1}(G_{v,T})-\lambda_{j_0}(G_{v,T}))T} + \cdots} \right).
\end{align*}
We define $h_v(T)$ so that $g_v(T) = 2 \lambda_{j_0}(G_{v,T}) h_v(T)$. Then, $h_v(T)$ goes to $1$ as $T$ increases. Hence,
$h_v(T)$ is close to $1$ for large $T$. If $j_0 = 2$ (equivalent to 
$\langle \bm{u}_2(G_{v,T}), \bm{\rho}^{\bm{\pi}_v}_{T} \rangle_{D^{-1}} \neq 0$), we can find a nearly Cheeger cut: 
\begin{cor}\label{cor:main-S-cor}
 Notation is the same as above. We assume that $\langle \bm{u}_2(G_{v,T}), \bm{\rho}^{\bm{\pi}_v}_{T} \rangle_{D^{-1}} \neq 0$. Then,
 we have the following inequality:
\[
  4\phi_G h_v(T) \geq {(\wt{\kappa}^v_{T,t})}^2.
\]
\end{cor}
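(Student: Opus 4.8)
The plan is to combine Theorem~\ref{thm:main-S} with the classical Cheeger inequality~\eqref{eq:cheeger-graph} applied to the auxiliary graph $G_{v,T}$, together with the monotonicity $\phi_{G_{v,T}}\leq\phi_G$. Theorem~\ref{thm:main-S} already supplies $g_v(T)\geq(\wt{\kappa}^v_{T,t})^2$, and the displayed factorization $g_v(T)=2\lambda_{j_0}(G_{v,T})\,h_v(T)$ turns the remaining task into showing $\lambda_{j_0}(G_{v,T})\leq 2\phi_G$ once we know $j_0=2$.

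First I would record that the hypothesis is exactly $j_0=2$. Expanding $\bm{\rho}^{\bm{\pi}_v}_{T}=\sum_{j=1}^{n}a_j e^{-\lambda_j(G_{v,T})T}\bm{u}_j(G_{v,T})$ and using that $\langle\bm{u}_j(G_{v,T}),\bm{u}_k(G_{v,T})\rangle_{D^{-1}}=\delta_{jk}$ (equivalently, that $\{D^{-1/2}\bm{u}_j(G_{v,T})\}_j$ is orthonormal), one obtains $\langle\bm{u}_2(G_{v,T}),\bm{\rho}^{\bm{\pi}_v}_{T}\rangle_{D^{-1}}=a_2 e^{-\lambda_2(G_{v,T})T}$, which is nonzero precisely when $a_2\neq 0$, i.e.\ when $j_0=\min\{j\geq 2: a_j\neq 0\}=2$. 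Under the hypothesis we therefore have $g_v(T)=2\lambda_2(G_{v,T})\,h_v(T)$ with $h_v(T)\geq 0$ (in fact $h_v(T)\geq 1$, as each ratio $\lambda_j(G_{v,T})/\lambda_{j_0}(G_{v,T})\geq 1$ for $j\geq j_0$). Here I am tacitly using, as established in Section~\ref{sec:heat-equation}, that the linear operator $\mathcal{L}_{G_{v,T}}$ governs the heat flow on a right-neighborhood of $T$, which is exactly what legitimizes this eigen-expansion and the Rayleigh-quotient form of $g_v(T)$.

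Next I would bound $\lambda_2(G_{v,T})$. The lower Cheeger bound in~\eqref{eq:cheeger-graph} applied to the undirected graph $G_{v,T}$ gives $\lambda_2(G_{v,T})\leq 2\phi_{G_{v,T}}$, so it remains to prove $\phi_{G_{v,T}}\leq\phi_G$. Here I would invoke the explicit description of $\mathcal{G}(G,\cdot)$ from Section~\ref{subsec:laplacian-for-hypergraphs}: $G_{v,T}\in\mathcal{G}(G,\bm{\rho}^{\bm{\pi}_v}_{T}/d)$ is obtained from $G$ by replacing each hyperedge $e$ with a single edge having both endpoints in $e$, plus self-loops chosen so that $d_{G_{v,T}}(u)=d_G(u)$ for all $u\in V$. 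Hence $\vol_{G_{v,T}}(S)=\vol_G(S)$ for every $S$, while every hyperedge of $G$ not separated by $S$ contributes an edge (or a self-loop) of $G_{v,T}$ that does not lie in $\partial_{G_{v,T}}(S)$; therefore $\sum_{e\in\partial_{G_{v,T}}(S)}w_{G_{v,T}}(e)\leq\sum_{e\in\partial_G(S)}w(e)$. Thus $\phi_{G_{v,T}}(S)\leq\phi_G(S)$ for every $\emptyset\subsetneq S\subsetneq V$, and minimizing over $S$ yields $\phi_{G_{v,T}}\leq\phi_G$.

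Finally I would assemble the pieces: $\lambda_2(G_{v,T})\leq 2\phi_{G_{v,T}}\leq 2\phi_G$, so multiplying by $2h_v(T)\geq 0$ gives $g_v(T)=2\lambda_2(G_{v,T})h_v(T)\leq 4\phi_G h_v(T)$; combined with Theorem~\ref{thm:main-S} this reads $4\phi_G h_v(T)\geq g_v(T)\geq(\wt{\kappa}^v_{T,t})^2$, which is the claim. The only step that is not essentially bookkeeping is the comparison $\phi_{G_{v,T}}\leq\phi_G$: it is where the precise structure of the graphs in $\mathcal{G}(G,\cdot)$ — in particular that the added self-loops leave every degree, and hence every volume, unchanged while cuts can only shrink — has to be used carefully, and I expect this (rather than anything analytic) to be the main point to get right.
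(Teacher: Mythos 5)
Your proposal is correct and follows essentially the same route as the paper: invoke Theorem~\ref{thm:main-S}, apply the graph Cheeger inequality~\eqref{eq:cheeger-graph} to $G_{v,T}$ to get $g_v(T)=2\lambda_2(G_{v,T})h_v(T)\leq 4\phi_{G_{v,T}}h_v(T)$, and conclude with $\phi_{G_{v,T}}(S)\leq\phi_G(S)$. You merely spell out two points the paper leaves implicit (that the hypothesis is exactly $j_0=2$ via the $D^{-1}$-orthonormal eigen-expansion, and the cut/volume comparison behind $\phi_{G_{v,T}}\leq\phi_G$), and both are handled correctly.
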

\begin{proof}
 By the assumption $j_0 = 2$, Theorem \ref{thm:main-S}, and Cheeger's inequality for graphs \eqref{eq:cheeger-graph}, we have 
 \[
 g_v(T) = 2\lambda_2(G_{v,T}) h_v(T) \leq 4\phi_{G_{v,T}} h_v(T).  
 \]
It is easy to see that $\phi_{G_{v,T}}(S) \leq \phi_G(S)$ holds for any $S \subset V$. This completes the proof.  
\end{proof}

To deduce Theorem~\ref{thm:main-intro} and Corollary~\ref{cor:main-intro-cor}, we need to show a relation 
$\wt{\kappa}^v_{T,t}$ with $\kappa^v_{T,t}$. 
The following relates the conductance of a sweep set in a hypergraph $G$ and that in a graph $\wt{G}$.
\begin{lem}\label{lem:sweep}
  Let $G=(V,E,w)$ be a hypergraph, $\bm{x} \in \mathbb{R}^V$ be a vector, $a$ be a real number, and
  $\sigma^\ast$ be the ordered equivalence relation compatible with $\bm{x}$ in the sense of~\cite[Section~3]{chan2017diffusion},
  i.e., $u$ and $v$ are $\sigma^\ast$-equivalent if and only if $\bm{x}(u) = \bm{x}(v)$.
Let $\wt{G} = (\wt{V}, \wt{E}, \wt{w})$ be the weighted graph defined as in Section~\ref{sec:heat-equation} with this equivalent relation $\sigma^\ast$.
 If $S^a \subseteq V$ (resp., $\wt{S}^a \subseteq \wt{V}$) is the sweep set on $G$ (resp., $\wt{G}$)  with $\bm{x}(u)\geq a$, then
 $\phi_G(S^a) = \phi_{\wt{G}}(\wt{S}^a)$ holds.
\end{lem}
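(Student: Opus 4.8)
The plan is to prove the equality of the conductances by checking separately that the numerators (cut weights) and the denominators (minimum volumes) of $\phi_G(S^a)$ and $\phi_{\wt{G}}(\wt{S}^a)$ coincide; the lemma then follows immediately. Concretely, I would establish $\sum_{e\in\partial_G(S^a)}w(e)=\sum_{\wt{e}\in\partial_{\wt{G}}(\wt{S}^a)}\wt{w}(\wt{e})$ and $\min\{\vol_G(S^a),\vol_G(V\setminus S^a)\}=\min\{\vol_{\wt{G}}(\wt{S}^a),\vol_{\wt{G}}(\wt{V}\setminus\wt{S}^a)\}$.

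First I would dispatch the denominator, which is the easy half. Because $\sigma^\ast$ identifies precisely the vertices on which $\bm{x}$ takes equal values, each class $U_k$ lies entirely on one side of the threshold, so that $S^a=\bigsqcup_{k:\,\bm{x}(u_k)\geq a}U_k$ and $V\setminus S^a=\bigsqcup_{k:\,\bm{x}(u_k)<a}U_k$. Summing $d_{\wt{G}}(u_k)=\sum_{v\in U_k}d_G(v)$ (from the construction of $\wt{G}$ in Section~\ref{sec:heat-equation}) over the classes contained in $S^a$, respectively in its complement, gives $\vol_G(S^a)=\vol_{\wt{G}}(\wt{S}^a)$ and $\vol_G(V\setminus S^a)=\vol_{\wt{G}}(\wt{V}\setminus\wt{S}^a)$, so the two minima agree.

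For the numerator I would argue hyperedge by hyperedge. The key structural observation is that, since $\sigma^\ast$ merges only vertices of equal $\bm{x}$-value, for each $e\in E$ the set $S_e=\argmax_{v\in e}\bm{x}(v)$ is contained in a single class $U_{k(e)}$ and the set $I_e=\argmin_{v\in e}\bm{x}(v)$ in a single class $U_{l(e)}$; in particular $S_e^{\sigma^\ast}=S_e$ and $I_e^{\sigma^\ast}=I_e$ for this relation. If $k(e)=l(e)$, then $e$ is monochromatic, contributes only to the self-loop weight $\wt{w}(u_{k(e)}u_{k(e)})$, and lies on one side of every sweep set, so it belongs to neither $\partial_G(S^a)$ nor $\partial_{\wt{G}}(\wt{S}^a)$. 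If $k(e)\neq l(e)$, then by the defining formula for $\wt{w}$ the hyperedge $e$ contributes exactly $w_e$ to $\wt{w}(u_{k(e)}u_{l(e)})$ and nothing to any other edge of $\wt{G}$. Moreover $e\in\partial_G(S^a)$ holds iff $e$ meets both $S^a$ and $V\setminus S^a$, i.e.\ iff $\max_{v\in e}\bm{x}(v)=\bm{x}(u_{k(e)})\geq a$ and $\min_{v\in e}\bm{x}(v)=\bm{x}(u_{l(e)})<a$; as the two classes are distinct we have the strict inequality $\bm{x}(u_{k(e)})>\bm{x}(u_{l(e)})$, so this condition is equivalent to $u_{k(e)}u_{l(e)}$ having exactly one endpoint in $\wt{S}^a$, that is, to $u_{k(e)}u_{l(e)}\in\partial_{\wt{G}}(\wt{S}^a)$. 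Summing $w_e$ over all $e\in E$ and regrouping by the ordered pair $(k(e),l(e))$ then yields $\sum_{e\in\partial_G(S^a)}w(e)=\sum_{\wt{e}\in\partial_{\wt{G}}(\wt{S}^a)}\wt{w}(\wt{e})$.

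Since everything here is an unwinding of the definitions, I do not expect a deep obstacle; the step that requires the most care is the bookkeeping at the end --- verifying that the two sums in the definition of $\wt{w}(u_ku_l)$ jointly count every active hyperedge exactly once (they do, because $e$ determines the ordered pair $(k(e),l(e))$ and $k(e)\neq l(e)$ in the active case), and that ties at the threshold are handled consistently, i.e.\ a vertex with $\bm{x}=a$ lands in $S^a$ on $G$ and in $\wt{S}^a$ on $\wt{G}$. The $\leq$-type sweep sets are treated by the symmetric argument. Finally, this lemma, applied on each interval $[t_i,t_{i+1})$ to the relevant normalization of $\bm{\rho}^{\bm{\pi}_v}_\xi$, is exactly what lets one identify $\wt{\kappa}^v_{T,t}$ with $\kappa^v_{T,t}$ and thus derive Theorem~\ref{thm:main-intro} and Corollary~\ref{cor:main-intro-cor} from Theorem~\ref{thm:main-S} and Corollary~\ref{cor:main-S-cor}.
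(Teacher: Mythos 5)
Your proposal is correct and follows essentially the same route as the paper: both match the denominators via $d_{\wt{G}}(u_k)=\sum_{v\in U_k}d_G(v)$ and the fact that each class lies entirely on one side of the threshold, and both match the numerators using that for this $\sigma^\ast$ one has $S_e^{\sigma^\ast}=S_e\subseteq U_{k(e)}$ and $I_e^{\sigma^\ast}=I_e\subseteq U_{l(e)}$, so the definition of $\wt{w}$ puts weight $w_e$ on exactly the pair $u_{k(e)}u_{l(e)}$, and $e$ crosses $S^a$ iff that pair crosses $\wt{S}^a$. The only difference is bookkeeping: you regroup hyperedge by hyperedge while the paper sums $\wt{w}(u_ju_k)$ over pairs $j\le i<k$ and collapses the double sum, which is the same computation.
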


\begin{proof}[Proof of Theorem~\ref{thm:main-intro} and 
Corollary~\ref{cor:main-intro-cor}]
  As $\wt{\kappa}_{T,t}^v = \kappa_{T,t}^v$ by Lemma~\ref{lem:sweep}, we see that
  Theorem~\ref{thm:main-S} and Corollary~\ref{cor:main-S-cor} imply Theorem~\ref{thm:main-intro} and 
  Corollary~\ref{cor:main-intro-cor}, respectively. 
\end{proof}

\section{Existence and Uniqueness of Solution}\label{sec:solution}

In this section, we show the existence and uniqueness of a solution to the heat equation~\eqref{eq:heat-equation-hypergraph} using the theory of monotone operators.
We refer the interested reader to the books by Miyadera~\cite{miyadera1992nonlinear} and Showalter~\cite{showalter2013monotone} for a detailed description of this topic.

We begin by introducing some definitions.
Let $X = (X, \langle \cdot,\cdot\rangle)$ be a Hilbert space, $\| \cdot \|$ be the norm defined from the inner product, and $A\colon X \to 2^X$ be a multi-valued operator on $X$.
Let $R(A) \subseteq X$ be the \emph{range} of $A$.
We often identify $A$ with the \emph{graph} of $A$; that is, $\{(x,y) \mid x \in X, y \in A(x) \} \subseteq X \times X$.

\begin{df}
  An operator $A\colon X \to 2^X$ is \emph{monotone} (or \emph{accretive}) if, for any
  $x,x'\in X$ and $y\in A(x), y'\in A(x')$, we have
  \[
    \langle y-y', x-x' \rangle \geq 0.
  \]
  When $-A$ is monotone, $A$ is called \emph{dissipative}.
\end{df}

\begin{df}
  A monotone operator $A\colon X \to 2^X$ is \emph{maximal} if $A$ is maximal as a graph of the monotone operator on $X$; i.e., if there is a monotone operator $B\colon X \to 2^X$ with $A(x) \subseteq B(x)$ for any $x \in X$. Then we have $A=B$.
\end{df}

To show that the heat equation~\eqref{eq:heat-equation-hypergraph} has a unique global solution, by the theory of monotone operators, it is sufficient to show that $\mathcal{L}_G\colon  \RR^V \to 2^{\RR^V}$ is a maximal monotone operator.
In our case, the Hilbert space is $X = \mathbb{R}^V$ equipped with the inner product $\inprod<\cdot,\cdot>_{D^{-1}}$ for $\bm{x}, \bm{y} \in \RR^V$.

\begin{lem}\label{lem:monotone}
  The operator $\mathcal{L}_G$ is monotone.
\end{lem}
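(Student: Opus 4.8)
The plan is to reduce the claim to a pointwise statement about the defining generators of $\mathcal{L}_G$ and then exploit the convexity structure hidden in the $\argmax$ over the base polytopes $B_e$. Recall that $\mathcal{L}_G(\bm{x}) = L_G(D^{-1}\bm{x})$ and, by~\eqref{eq:laplacian}, every element of $L_G(\bm{y})$ has the form $\sum_{e\in E} w(e)\,\bm{b}_e\bm{b}_e^\top \bm{y}$ with $\bm{b}_e \in \argmax_{\bm{b}\in B_e}\bm{b}^\top\bm{y}$. Since the inner product on $X=\RR^V$ is $\langle\cdot,\cdot\rangle_{D^{-1}}$, for $\bm{x},\bm{x}'\in\RR^V$ and $\bm{y}\in\mathcal{L}_G(\bm{x})$, $\bm{y}'\in\mathcal{L}_G(\bm{x}')$ I must show $\langle \bm{y}-\bm{y}',\bm{x}-\bm{x}'\rangle_{D^{-1}}\ge 0$. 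Writing $\bm{u}=D^{-1}\bm{x}$, $\bm{u}'=D^{-1}\bm{x}'$, and noting $\langle D z, \bm{x}-\bm{x}'\rangle_{D^{-1}} = z^\top(\bm{x}-\bm{x}') = z^\top D(\bm{u}-\bm{u}')$ for any $z$, the inequality becomes
\[
  \sum_{e\in E} w(e)\,\bigl(\bm{b}_e\bm{b}_e^\top \bm{u} - \bm{b}_e'\bm{b}_e'^\top \bm{u}'\bigr)^\top(\bm{u}-\bm{u}') \ge 0,
\]
so it suffices to prove the per-hyperedge inequality $\bigl(\bm{b}_e\bm{b}_e^\top\bm{u} - \bm{b}_e'\bm{b}_e'^\top\bm{u}'\bigr)^\top(\bm{u}-\bm{u}')\ge 0$ for each $e$, where $\bm{b}_e\in\argmax_{\bm b\in B_e}\bm b^\top\bm u$ and $\bm{b}_e'\in\argmax_{\bm b\in B_e}\bm b^\top\bm u'$.

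The key step is the per-hyperedge bound, which is a standard monotonicity fact for subgradients of convex functions. Let $f_e(\bm z)=\max_{\bm b\in B_e}\bm b^\top \bm z$ be the Lov\'asz extension of the cut function $F_e$; it is convex, and $\bm b_e$, $\bm b_e'$ are subgradients of $f_e$ at $\bm u$, $\bm u'$ respectively. Expanding,
\[
  \bigl(\bm b_e\bm b_e^\top\bm u - \bm b_e'\bm b_e'^\top\bm u'\bigr)^\top(\bm u-\bm u')
  = (\bm b_e^\top\bm u)^2 - (\bm b_e^\top\bm u)(\bm b_e^\top\bm u') - (\bm b_e'^\top\bm u')(\bm b_e'^\top\bm u) + (\bm b_e'^\top\bm u')^2.
\]
By the $\argmax$ property, $\bm b_e^\top\bm u \ge \bm b_e'^\top\bm u$ and $\bm b_e'^\top\bm u' \ge \bm b_e^\top\bm u'$. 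Writing $p=\bm b_e^\top\bm u$, $q=\bm b_e'^\top\bm u'$, $r=\bm b_e^\top\bm u'$, $s=\bm b_e'^\top\bm u$, we have $p\ge s$, $q\ge r$, and the expression equals $p^2 - pr - qs + q^2 = p(p-r) + q(q-s)$. This is not obviously nonnegative term-by-term, so instead I would symmetrize: the quantity also equals $p(p-s) + q(q-r) - (ps - ... )$; cleaner is to note $p^2+q^2-pr-qs = \tfrac12\bigl((p-s)^2+(q-r)^2\bigr) + \tfrac12(p^2+q^2+s^2+r^2) - pr - qs - \tfrac12(p-s)^2 - \tfrac12(q-r)^2$, which risks circularity. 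The honest route: since $f_e$ is convex with subgradient $\bm b_e$ at $\bm u$ and $\bm b_e'$ at $\bm u'$, monotonicity of the subdifferential gives $(\bm b_e - \bm b_e')^\top(\bm u - \bm u')\ge 0$, i.e. $p - r - s + q \ge 0$. Combining with $p\ge s$ and $q\ge r$ (hence $p-s\ge0$, $q-r\ge 0$), we get $p(p-r)+q(q-s) \ge$ ... — I would instead directly verify $p^2 - pr - qs + q^2 \ge 0$ by observing it equals $(p-r)(p) + (q-s)(q)$ and using $p \ge s \ge$ the relevant bound, or most simply invoke that $\bm b\mapsto \bm b\bm b^\top \bm z$ restricted to the face of maximizers is itself a monotone map because on $B_e$ every $\bm b$ satisfies $\bm 1^\top\bm b = 0$ and $|\bm b^\top \bm z| = f_e(\bm z)$ with sign controlled by orientation.

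Here is the clean argument I would actually write. The face $\argmax_{\bm b\in B_e}\bm b^\top\bm u$ consists of vectors $\bm 1_{u^\star}-\bm 1_{v^\star}$ with $u^\star\in\argmax_{w\in e}\bm u(w)$, $v^\star\in\argmin_{w\in e}\bm u(w)$ (and convex combinations thereof), so $\bm b_e^\top\bm u = \max_e\bm u - \min_e\bm u = f_e(\bm u)\ge 0$, and likewise $\bm b_e^\top\bm u' \le f_e(\bm u')$, $\bm b_e'^\top\bm u'=f_e(\bm u')$, $\bm b_e'^\top\bm u\le f_e(\bm u)$; also $\bm b_e^\top\bm u' \ge -f_e(\bm u')$ and $\bm b_e'^\top \bm u \ge -f_e(\bm u)$ trivially. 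Then with $P=f_e(\bm u)\ge0$, $Q=f_e(\bm u')\ge 0$: $\bm b_e^\top\bm u = P$, $\bm b_e'^\top\bm u'=Q$, $|\bm b_e^\top\bm u'|\le Q$, $|\bm b_e'^\top\bm u|\le P$, so
\[
  \bigl(\bm b_e\bm b_e^\top\bm u - \bm b_e'\bm b_e'^\top\bm u'\bigr)^\top(\bm u-\bm u')
  = P^2 + Q^2 - P\,(\bm b_e^\top\bm u') - Q\,(\bm b_e'^\top\bm u) \ge P^2 + Q^2 - PQ - QP = (P-Q)^2 \ge 0,
\]
since $-P(\bm b_e^\top\bm u') \ge -PQ$ and $-Q(\bm b_e'^\top\bm u)\ge -QP$. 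Summing $w(e)$ times this over $e\in E$ gives the result, after the change of variables to the $D^{-1}$ inner product noted above. The main obstacle is precisely getting the per-hyperedge inequality to come out manifestly nonnegative; the trick is to bound the two cross terms by $PQ$ using $|\bm b^\top \bm z|\le f_e(\bm z)$ for all $\bm b\in B_e$, which turns the expression into a perfect square. Everything else is bookkeeping with $D^{-1}$.
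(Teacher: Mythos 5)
Your final ``clean argument'' is correct and is essentially the paper's own proof written per hyperedge in scalar form: bounding the cross terms $\bm{b}_e^\top\bm{u}'$ and $\bm{b}_e'^\top\bm{u}$ by $f_e(\bm{u}')$ and $f_e(\bm{u})$ via maximality (plus nonnegativity of $f_e$) is exactly the paper's inequality step, and your $(P-Q)^2$ per edge, summed with weights $w(e)$, is the paper's $\| B_1^\top \overline{\bm{x}}_1 - B_2^\top \overline{\bm{x}}_2 \|_W^2 \geq 0$. The exploratory middle paragraph can simply be deleted; only the last argument is needed.
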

 \begin{proof} 
  For any $\bm{x}\in \mathbb{R}^V$ and $\bm{y} \in \mathcal{L}_G(\bm{x})$, we can write
  \[
    \bm{y} = B W B^\top D^{-1} \bm{x} = \sum_{e \in E}w(e) \bm{b}_e \bm{b}_e^\top \overline{\bm{x}},
  \]
  where $\overline{\bm{x}} = D^{-1}\bm{x}$. Further, $W \in \mathbb{R}^{E \times E}$ is a diagonal matrix with the $(e,e)$-th entry being $w(e)$. $B = {(\bm{b}_e)}_{e\in E}$ is a matrix with column vectors $\bm{b}_e \in \RR^V$, for which
  \[
    \bm{b}_e \in \argmax_{\bm{b} \in \bm{b}_e} \langle \bm{b}, \overline{\bm{x}} \rangle.
  \]
  We use this to show monotonicity.
  For $\bm{x}_1, \bm{x}_2 \in \mathbb{R}^V$ and $\bm{y}_1 \in \mathcal{L}_G(\bm{x}_1), \bm{y}_2 \in \mathcal{L}_G(\bm{x}_2)$, we have
  \[
    \bm{y}_1 = B_1 W B_1^\top \overline{\bm{x}}_1, \quad \bm{y}_2 = B_2 W B_2^\top \overline{\bm{x}}_2.
  \]
  Then, we have
  \begin{align*}
    \inprod<\bm{y}_1-\bm{y}_2, \bm{x}_1-\bm{x}_2>_{D^{-1}}
    &= \inprod<\bm{y}_1,\bm{x}_1>_{D^{-1}} +
    \inprod<\bm{y}_2,\bm{x}_2>_{D^{-1}} -\inprod<\bm{y}_2,\bm{x}_1>_{D^{-1}} - \inprod<\bm{y}_1,\bm{x}_2>_{D^{-1}}  \\
    &= \| B_1^\top \overline{\bm{x}}_1 \|_W^2 + \| B_2^\top \overline{\bm{x}}_2\|_W^2 - \overline{\bm{x}}_2^\top B_2 W B_2^\top \overline{\bm{x}}_1 - \overline{\bm{x}}_1^\top B_1 W B_1^\top \overline{\bm{x}}_2 \\
    &\geq \| B_1^\top \overline{\bm{x}}_1 \|_W^2 + \| B_2^\top \overline{\bm{x}}_2\|_W^2 - \overline{\bm{x}}_2^\top B_2 W B_1^\top \overline{\bm{x}}_1 - \overline{\bm{x}}_1^\top B_1 W B_2^\top \overline{\bm{x}}_2 \\
    & =  \| B_1^\top \overline{\bm{x}}_1 - B_2^\top \overline{\bm{x}}_2 \|_W^2
    \geq 0. \qedhere
  \end{align*}
\end{proof}

\begin{lem}\label{lem:maximal}
  The operator $\mathcal{L}_G$ is maximal.
\end{lem}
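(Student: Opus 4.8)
The plan is to invoke the standard characterization of maximal monotone operators: a monotone operator $A$ on a Hilbert space $X$ is maximal if and only if $R(I + A) = X$, i.e., for every $\bm{z} \in \mathbb{R}^V$ the inclusion $\bm{x} + \mathcal{L}_G(\bm{x}) \ni \bm{z}$ has a solution $\bm{x}$. (This is Minty's theorem; it is available in Miyadera~\cite{miyadera1992nonlinear} and Showalter~\cite{showalter2013monotone}, already cited.) Since $\mathcal{L}_G$ is monotone by Lemma~\ref{lem:monotone}, it suffices to verify this surjectivity. The second, and arguably cleaner, route is to exhibit $\mathcal{L}_G$ as the subdifferential of a convex function: if one sets $\Phi(\bm{x}) = \frac12 \sum_{e \in E} w(e)\, \bigl(\max_{\bm{b}\in B_e}\langle \bm{b}, D^{-1}\bm{x}\rangle\bigr)^2 = \frac12\sum_{e\in E}w(e) f_e(D^{-1}\bm{x})^2$, then $\Phi$ is convex (each $f_e$ is the Lov\'asz extension of the submodular cut function $F_e$, hence convex and positively homogeneous, so its square is convex) and finite everywhere, and one checks that the subdifferential $\partial\Phi$ (with respect to the $\langle\cdot,\cdot\rangle_{D^{-1}}$ inner product) equals $\mathcal{L}_G$. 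Since the subdifferential of a proper lower-semicontinuous convex function on a Hilbert space is automatically maximal monotone (Rockafellar's theorem, also in~\cite{showalter2013monotone}), this would finish the proof.

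The steps I would carry out, in order: (i) fix the Hilbert space structure to be $\mathbb{R}^V$ with $\langle\cdot,\cdot\rangle_{D^{-1}}$, as set up before Lemma~\ref{lem:monotone}; (ii) define $\Phi$ as above and verify it is convex, continuous, and everywhere finite — the key sub-point is that for each $e$, $\bm{x}\mapsto \max_{\bm{b}\in B_e}\langle\bm{b},D^{-1}\bm{x}\rangle$ is a finite supremum of linear functionals, hence convex and (being piecewise linear on a finite-dimensional space) continuous, and squaring a nonnegative convex function preserves convexity only after noting the inner max is actually $|\langle\cdot\rangle|$-type; here one uses that $B_e$ is symmetric ($\bm{b}\in B_e \iff -\bm{b}\in B_e$), so $\max_{\bm{b}\in B_e}\langle\bm{b},\bm{y}\rangle \geq 0$ always and its square is convex; (iii) compute $\partial\Phi(\bm{x})$ via the chain rule / Danskin-type formula for subdifferentials of a supremum, obtaining exactly $\{\sum_e w(e)\bm{b}_e\bm{b}_e^\top D^{-1}\bm{x} : \bm{b}_e \in \argmax_{\bm{b}\in B_e}\langle\bm{b},D^{-1}\bm{x}\rangle\}$, which by~\eqref{eq:laplacian} is $\mathcal{L}_G(\bm{x})$ — care is needed because the subdifferential with respect to $\langle\cdot,\cdot\rangle_{D^{-1}}$ differs from the Euclidean one by a factor of $D$, and this is precisely what reconciles the $D^{-1}$ appearing inside $\Phi$ with the $B W B^\top D^{-1}$ form; (iv) cite Rockafellar's theorem to conclude maximality. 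Alternatively, for route one, solve $\bm{x} + \mathcal{L}_G(\bm{x}) \ni \bm{z}$ by observing this is the first-order optimality condition for $\min_{\bm{x}} \frac12\|\bm{x} - \bm{z}\|_{D^{-1}}^2 + \Phi(\bm{x})$, a strongly convex coercive problem that has a (unique) minimizer, giving surjectivity of $I + \mathcal{L}_G$ directly.

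The main obstacle I anticipate is step (iii): correctly matching the subdifferential calculus to the non-standard $D^{-1}$ inner product, and justifying the exchange of subdifferential and maximum over the polytope $B_e$ (the Danskin/Valadier formula for $\partial\bigl(\sup_i \langle\bm{b}_i,\cdot\rangle\bigr)$). Because $B_e$ is a polytope, the supremum is attained on a face and the relevant subdifferential is the convex hull of the active gradients, so there is no measurability subtlety — but one must be careful that after squaring, $\partial(\frac12 g^2) = g\,\partial g$ holds only when $g \ge 0$, which is exactly why the symmetry of $B_e$ (and hence $g = f_e(D^{-1}\bm{x}) \ge 0$) is needed. Everything else — convexity, continuity, coercivity of the regularized problem — is routine. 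I would therefore lead with route one (surjectivity of $I+\mathcal{L}_G$ via the convex minimization problem), since it sidesteps the delicate identification $\partial\Phi = \mathcal{L}_G$ and only needs that any minimizer $\bm{x}^\star$ satisfies $\bm{z} - \bm{x}^\star \in \mathcal{L}_G(\bm{x}^\star)$, which follows by writing the optimality condition $0 \in D^{-1}(\bm{x}^\star - \bm{z}) + \partial_{\mathrm{Eucl}}\Phi(\bm{x}^\star)$ and reading off one selection $\bm{b}_e \in \argmax_{\bm{b}\in B_e}\langle\bm{b},D^{-1}\bm{x}^\star\rangle$ realizing it.
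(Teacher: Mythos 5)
Your lead route is exactly the paper's proof: reduce maximality to $R(I+\mathcal{L}_G)=\mathbb{R}^V$ via Minty's characterization (\cite[IV.1, Proposition~1.6]{showalter2013monotone}) and obtain surjectivity from the strongly convex proximal problem $\min_{\bm{x}}\frac12\|\bm{x}-\bm{z}\|_{D^{-1}}^2+\frac12\sum_{e}w(e)f_e(D^{-1}\bm{x})^2$, which the paper handles by citing the argument of~\cite[Section~3.1]{fujii2018polynomial} rather than spelling it out. Your details (nonnegativity of $f_e$, the chain rule $\partial(\tfrac12 g^2)=g\,\partial g$, and the $D$-factor relating the Euclidean and $\langle\cdot,\cdot\rangle_{D^{-1}}$ subdifferentials) are correct, so the proposal is sound and essentially matches the paper.
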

\begin{proof} 
 By~\cite[IV.1. Proposition 1.6]{showalter2013monotone}, it is sufficient to show that
 $R(I+ \mathcal{L}_G) = \RR^V$.
 This condition means that, for any $\bm{b} \in \RR^V$,
  the equation $\bm{x}+ \mathcal{L}_G(\bm{x}) \ni \bm{b}$ has a solution $\bm{x}$ in $\RR^V$.
  In a previous work~\cite[Section~3.1]{fujii2018polynomial}, an equivalent condition of the existence of the solution to $\mathcal{L}_G(\bm{x}) \ni \bm{b}$ was given.
  By a similar argument, we can give an equivalent condition for $\bm{x} + \mathcal{L}_G(\bm{x}) \ni \bm{b}$ and show the existence of the solution to $\bm{x}+\mathcal{L}_G(\bm{x}) \ni\bm{b}$.
\end{proof}

We obtain the following corollary using the theory of nonlinear semigroup:
\begin{cor}\label{cor:unique-solution}
  The heat equation~\eqref{eq:heat-equation-hypergraph} has a unique global solution.
\end{cor}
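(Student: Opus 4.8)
The plan is to deduce Corollary~\ref{cor:unique-solution} from the standard theory of nonlinear semigroups generated by maximal monotone (equivalently, maximal dissipative) operators on a Hilbert space, using the two lemmas just established. Having shown in Lemma~\ref{lem:monotone} that $\mathcal{L}_G$ is monotone on $X=\mathbb{R}^V$ with inner product $\inprod<\cdot,\cdot>_{D^{-1}}$, and in Lemma~\ref{lem:maximal} that it is maximal, we know $\mathcal{L}_G$ is a maximal monotone operator, and hence $-\mathcal{L}_G$ is maximal dissipative. The heat equation~\eqref{eq:heat-equation-hypergraph} is then precisely the abstract Cauchy problem $\frac{d\bm{\rho}_t}{dt}\in -\mathcal{L}_G(\bm{\rho}_t)$, $\bm{\rho}_0=\bm{s}$, to which the Kōmura--Brezis generation theorem (see~\cite{komura1967nonlinear,miyadera1992nonlinear,showalter2013monotone}) applies.

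Concretely, I would invoke the following chain. First, since $\mathcal{L}_G$ is maximal monotone, for every $\lambda>0$ the resolvent $J_\lambda=(I+\lambda\mathcal{L}_G)^{-1}$ is a single-valued, everywhere-defined, nonexpansive map on $X$ — this is exactly the content of $R(I+\lambda\mathcal{L}_G)=X$ together with monotonicity, and the case $\lambda=1$ is what Lemma~\ref{lem:maximal} gives (the general $\lambda$ following by rescaling). Second, the domain $D(\mathcal{L}_G)=\mathbb{R}^V$ is all of $X$, so in particular it is convex with $\overline{D(\mathcal{L}_G)}=X$, and every initial vector $\bm{s}\in\mathbb{R}^V$ is an admissible starting point. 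Third, the generation theorem for contraction semigroups associated with maximal monotone operators then yields, for each $\bm{s}$, a unique function $\bm{\rho}^{\bm{s}}\colon[0,\infty)\to X$ that is Lipschitz (hence absolutely continuous) on $[0,\infty)$, differentiable for almost every $t\ge 0$, satisfies $\bm{\rho}^{\bm{s}}_0=\bm{s}$, and obeys $\frac{d}{dt}\bm{\rho}^{\bm{s}}_t\in-\mathcal{L}_G(\bm{\rho}^{\bm{s}}_t)$ for a.e.\ $t$; this is exactly the notion of solution fixed in Section~\ref{sec:heat-equation}. Uniqueness is immediate from dissipativity: if $\bm{\rho}_t$ and $\bm{\sigma}_t$ are two solutions, then $\frac{d}{dt}\|\bm{\rho}_t-\bm{\sigma}_t\|_{D^{-1}}^2 = 2\inprod<\frac{d}{dt}(\bm{\rho}_t-\bm{\sigma}_t),\bm{\rho}_t-\bm{\sigma}_t>_{D^{-1}}\le 0$ a.e., so the distance is non-increasing and starts at $0$.

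The main point requiring care — and the step I would flag as the principal obstacle — is making sure the hypotheses of whichever generation theorem we cite are met in our finite-dimensional but non-smooth setting, and that the notion of solution it produces coincides with the one used elsewhere in the paper. In finite dimensions there is no subtlety about reflexivity or density of the domain (the domain is the whole space), so the only genuine input beyond Lemmas~\ref{lem:monotone} and~\ref{lem:maximal} is the range condition, which is already in hand. One mild caveat is that general maximal monotone theory gives a \emph{strong} solution only when $\bm{s}\in D(\mathcal{L}_G)$ (or $\bm{s}\in\overline{D(A)}$ for a mild/integral solution); here $D(\mathcal{L}_G)=\mathbb{R}^V$, so this is automatic for every $\bm{s}$, and the solution is in fact Lipschitz on all of $[0,\infty)$ with right-derivative equal to $-(\mathcal{L}_G)^{\circ}(\bm{\rho}_t)$, the minimal-norm selection. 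Globality in $t$ is likewise free: the estimate $\|\bm{\rho}^{\bm{s}}_t-\bm{\rho}^{\bm{\pi}}_t\|_{D^{-1}}\le\|\bm{s}-\bm{\pi}\|_{D^{-1}}$ (again by dissipativity, using that $\bm{0}\in\mathcal{L}_G(\bm{\pi})$ up to the stationarity relation) shows the trajectory stays bounded, so no finite-time blow-up occurs and the solution extends to $[0,\infty)$. Assembling these observations gives the corollary.

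\begin{proof}[Proof of Corollary~\ref{cor:unique-solution}]
  By Lemmas~\ref{lem:monotone} and~\ref{lem:maximal}, $\mathcal{L}_G\colon\mathbb{R}^V\to 2^{\mathbb{R}^V}$ is a maximal monotone operator on the Hilbert space $X=(\mathbb{R}^V,\inprod<\cdot,\cdot>_{D^{-1}})$, and its domain $D(\mathcal{L}_G)=\mathbb{R}^V$ equals $X$. Hence $-\mathcal{L}_G$ is maximal dissipative with full domain, and the Kōmura--Brezis generation theorem (see, e.g.,~\cite{komura1967nonlinear},~\cite[Chapter~IV]{miyadera1992nonlinear},~\cite[Chapter~IV]{showalter2013monotone}) applies to the Cauchy problem~\eqref{eq:heat-equation-hypergraph}: for every $\bm{s}\in\mathbb{R}^V$ there exists a locally Lipschitz (hence absolutely continuous) function $\bm{\rho}^{\bm{s}}\colon[0,\infty)\to\mathbb{R}^V$ with $\bm{\rho}^{\bm{s}}_0=\bm{s}$ that is differentiable for almost all $t\ge 0$ and satisfies $\frac{d}{dt}\bm{\rho}^{\bm{s}}_t\in-\mathcal{L}_G(\bm{\rho}^{\bm{s}}_t)$ for almost all $t\ge 0$; this is exactly a solution of~\eqref{eq:heat-equation-hypergraph} in the sense of Section~\ref{sec:heat-equation}. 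Since $D(\mathcal{L}_G)=\mathbb{R}^V$, no restriction on $\bm{s}$ is needed.

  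For uniqueness and globality, suppose $\bm{\rho}_t$ and $\bm{\sigma}_t$ are solutions of~\eqref{eq:heat-equation-hypergraph} with the same initial vector. For almost every $t$, using dissipativity of $-\mathcal{L}_G$,
  \[
    \frac{d}{dt}\|\bm{\rho}_t-\bm{\sigma}_t\|_{D^{-1}}^2
    = 2\inprod<\tfrac{d}{dt}(\bm{\rho}_t-\bm{\sigma}_t),\,\bm{\rho}_t-\bm{\sigma}_t>_{D^{-1}}
    \le 0,
  \]
  so $t\mapsto\|\bm{\rho}_t-\bm{\sigma}_t\|_{D^{-1}}^2$ is non-increasing; since it vanishes at $t=0$, we get $\bm{\rho}_t=\bm{\sigma}_t$ for all $t\ge 0$. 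The same computation with $\bm{\sigma}_t\equiv\bm{\pi}$ (a stationary point of the flow, as $\bm{0}\in\mathcal{L}_G(\bm{\pi})$ up to the stationarity relation $\mathcal{L}_G(D\1)=\0$) shows $\|\bm{\rho}^{\bm{s}}_t-\bm{\pi}\|_{D^{-1}}\le\|\bm{s}-\bm{\pi}\|_{D^{-1}}$ for all $t\ge 0$, so the trajectory remains bounded and the solution does not blow up in finite time; hence it is defined for all $t\ge 0$. This proves that~\eqref{eq:heat-equation-hypergraph} has a unique global solution.
\end{proof}
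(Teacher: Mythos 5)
Your proof is correct and follows essentially the same route as the paper: it combines Lemmas~\ref{lem:monotone} and~\ref{lem:maximal} with the standard generation theorem for nonlinear semigroups of maximal monotone operators (the paper simply cites~\cite[IV, Proposition~3.1]{showalter2013monotone} for this). The extra details you supply (resolvent nonexpansiveness, the dissipativity computation for uniqueness, and the boundedness argument for globality) are sound but are exactly what the cited theorem packages up.
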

\begin{proof}
  Immediate from Lemmas~\ref{lem:monotone} and~\ref{lem:maximal}. See~\cite[IV, Proposition 3.1]{showalter2013monotone} for details.
\end{proof}



\newcommand{\prox}{\mathop{\mathrm{prox}}}

\section{Computation and Error Analysis of Difference Approximation}\label{sec:difference-approximation}

In this section, we prove Theorem~\ref{thm:difference-approximation-intro}.
In what follows, we fix a hypergraph $G=(V,E,w)$, $v \in V$, $T \geq 1$, and $\lambda \in (0,1)$.

We first review the construction of difference approximation $\bm{\rho}^\lambda_t$ given in~\cite[Section 5.3]{miyadera1992nonlinear}.
By the condition~(5.27) in~\cite{miyadera1992nonlinear} and the maximality of $\mathcal{L}_G$, for any $\bm{x} \in \RR^V$, there is a real number $\mu$ satisfying the following conditions:
\begin{equation}\label{eq:one-step}
  \begin{cases}
    0 < \mu \leq \lambda, \\
    \bm{x}_\mu \in \RR^V, \
    \bm{y}_\mu \in -\mathcal{L}_G (\bm{x}_\mu), \\
    \| \bm{x}_\mu - \bm{x} - \mu \bm{y}_\mu \|_{D^{-1}} < \mu \lambda.
  \end{cases}
\end{equation}

We define $\mu(\bm{x})$ as the least upper bound on $\mu$ satisfying~\eqref{eq:one-step}.
We consider an initial vector $\bm{x}_0\in\RR^V$.
Then, there is $h_1\in \RR$ such that $\mu(\bm{x}_0)/2 < h_1 \leq \lambda$ and there are $\bm{x}_1 \in \RR^V$ and $\bm{y}_1 \in -\mathcal{L}_G (\bm{x}_1)$ satisfying $\| \bm{x}_1 - \bm{x}_0 -h_1 \bm{y}_1 \|_{D^{-1}} < h_1\lambda$.
By repeating this argument, we can take sequences $\{ h_k \}$, $\{\bm{x}_k\}$, and $\{\bm{y}_k \}$ for $k = 1,2, \dots$ satisfying the following conditions:
\begin{enumerate}
  \itemsep=0pt
  \item$\mu(\bm{x}_{k-1})/2 < h_k \leq \lambda$,
  \item $\| \bm{x}_k - \bm{x}_{k-1} - h_k \bm{y}_k \|_{D^{-1}} < h_k \lambda$.
\end{enumerate}
Let $t_k = \sum_{j=1}^k h_j$.
Then, it is easy to show that $\{ t_k \}$, $\{ \bm{x}_k\}$, and $\{ \bm{y}_k \}$ satisfy the following conditions for $\{ t^\lambda_k \}$, $\{ \bm{x}^\lambda_k\}$, and $\{ \bm{y}^\lambda_k \}$: 
 \begin{enumerate}
  \itemsep=0pt
  \item $0= t_0^\lambda < t_1^\lambda < \cdots < t^\lambda_k < \cdots$ with $\lim_{k\to \infty} t^\lambda_k = \infty$,
  \item $t^\lambda_k - t_{k-1}^\lambda <\lambda \quad (k = 1,2, \dots)$,
  \item $\| \bm{x}^\lambda_k - \bm{x}^\lambda_{k-1} - (t^\lambda_k - t_{k-1}^\lambda)\bm{y}^\lambda_k \|_{D^{-1}} <
    \lambda (t^\lambda_k - t_{k-1}^\lambda) \quad (k= 1,2,\dots)$. 
  \end{enumerate}
Then, the function $\bm{\rho}^\lambda_t$ was defined by
\begin{equation}\label{eq:difference-approximation}
\bm{\rho}^\lambda_t =
 \begin{cases}
 \bm{x}_0 & \mathrm{if} \ t = 0, \\
 \bm{x}^\lambda_k & \mathrm{if} \ t \in (t^\lambda_k, t^\lambda_{k+1}]\cap (0,T].
 \end{cases}
\end{equation}
Theorem~\ref{thm:difference-approximation-intro} follows from Lemmas~\ref{lem:difference-approximation-computation} and~\ref{lem:difference-approximation-error} below.

\begin{lem}\label{lem:difference-approximation-computation}
  We can compute (a concise representation) of ${\{\bm{\rho}_t^\lambda\}}_{0 \leq t \leq T}$ for every $0 \leq t \leq T$ in time polynomial in $1/\lambda$, $T$, and $\sum_{e \in E}|e|$.
\end{lem}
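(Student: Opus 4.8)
The plan is to recognize that one step of the difference-approximation scheme~\eqref{eq:one-step} is an (approximate) evaluation of the resolvent $(I+\mu\mathcal{L}_G)^{-1}$, to reuse the polynomial-time convex-optimization machinery already invoked in the proof of Lemma~\ref{lem:maximal}, and to check that covering $[0,T]$ requires only $O(T/\lambda)$ steps. First I would make the one-step map explicit: for $\mu>0$ and $\bm x\in\RR^V$, maximality of $\mathcal{L}_G$ (Lemma~\ref{lem:maximal}) yields a solution $\bm x_\mu$ of $\bm x_\mu+\mu\mathcal{L}_G(\bm x_\mu)\ni\bm x$, and putting $\bm y_\mu:=(\bm x_\mu-\bm x)/\mu\in-\mathcal{L}_G(\bm x_\mu)$ makes $\bm x_\mu-\bm x-\mu\bm y_\mu=\bm 0$, so $(\bm x_\mu,\bm y_\mu)$ satisfies~\eqref{eq:one-step} with room to spare for every $\mu\le\lambda$; in particular $\mu(\bm x)=\lambda$. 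Equivalently, $\bm x_\mu$ is the minimizer of the strongly convex energy $\bm z\mapsto\frac{1}{2\mu}\|\bm z-\bm x\|_{D^{-1}}^2+\tfrac12\sum_{e\in E}w(e)\bigl(\max_{\bm b\in B_e}\langle\bm b,D^{-1}\bm z\rangle\bigr)^2$, a proximal-type problem.

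Next I would compute one step in polynomial time, using the fact that the slack $\mu\lambda$ in~\eqref{eq:one-step} permits an only approximate resolvent. Concretely, I would run the procedure of~\cite[Section~3.1]{fujii2018polynomial}---the same one invoked for Lemma~\ref{lem:maximal}, now applied (after rescaling by $\mu$) to $\bm x_\mu+\mu\mathcal{L}_G(\bm x_\mu)\ni\bm x$---with $\mu:=\tfrac34\lambda$. In time polynomial in $1/\lambda$ and $\sum_{e\in E}|e|$ this returns a point $\bm x_\mu$ together with the tie/ordering pattern of $D^{-1}\bm x_\mu$ and the sets $S_e,I_e$; fixing that pattern turns $\mathcal{L}_G$ into a linear operator, so solving the resulting linear system exactly yields an honest $\bm y_\mu\in-\mathcal{L}_G(\bm x_\mu)$ with $\|\bm x_\mu-\bm x-\mu\bm y_\mu\|_{D^{-1}}<\mu\lambda$. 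I then set $h_k:=\tfrac34\lambda\in(\mu(\bm x_{k-1})/2,\lambda]$, $\bm x_k:=\bm x_\mu$, and $\bm y_k:=\bm y_\mu$, which fulfills conditions~1 and~2 of the construction.

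Since every $h_k=\tfrac34\lambda$, after $N=\lceil 4T/(3\lambda)\rceil=O(T/\lambda)$ steps the accumulated time exceeds $T$, so only $O(T/\lambda)$ one-step subroutines are executed; tracking the bit complexity of the $\bm x_k$ through that subroutine keeps the total running time polynomial in $1/\lambda$, $T$, and $\sum_{e\in E}|e|$ (note $|V|\le\sum_{e\in E}|e|$ as $G$ is connected). By~\eqref{eq:difference-approximation}, $\bm\rho_t^\lambda$ is the step function with value $\bm x_k^\lambda$ on $(t_k^\lambda,t_{k+1}^\lambda]$, so the list $\{(t_k^\lambda,\bm x_k^\lambda)\}_{0\le k\le N}$ is the desired concise representation, from which $\bm\rho_t^\lambda$ is read off for any $t\in[0,T]$ by binary search on $k$.

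The hard part will be the one-step subroutine: extracting an \emph{exact} element $\bm y_\mu\in-\mathcal{L}_G(\bm x_\mu)$ from an only approximately optimal $\bm x_\mu$, and certifying $\|\bm x_\mu-\bm x-\mu\bm y_\mu\|_{D^{-1}}<\mu\lambda$. The way out is that $\mathcal{L}_G(\bm x_\mu)$ is a polytope determined solely by the tie/ordering pattern of $D^{-1}\bm x_\mu$, so once the solver has identified that pattern, computing $\bm y_\mu$ reduces to an exactly solvable linear system; making this rigorous requires quantifying how accurately the algorithm of~\cite{fujii2018polynomial} resolves the combinatorial structure, which in turn pins down the polynomial dependence on $1/\lambda$.
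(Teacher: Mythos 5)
Your proposal follows essentially the same route as the paper: one step of the scheme is the resolvent $\bm{x}-\bm{x}^\lambda_{k-1}\in -h_k\,\mathcal{L}_G(\bm{x})$, computed as the proximal problem $\argmin_{\ol{\bm{x}}}\bigl(\tfrac{h_k}{2}\sum_{e\in E}w(e)f_e(\ol{\bm{x}})^2+\tfrac12\|\ol{\bm{x}}-\ol{\bm{x}}^\lambda_{k-1}\|_{D}^2\bigr)$ via the machinery of~\cite{fujii2018polynomial}, with $O(T/\lambda)$ steps and the list of breakpoints and values of the step function as the concise representation. The only real deviation is your closing worry about approximate solvers (and the resulting choice $h_k=\tfrac34\lambda$), which is unnecessary: the paper invokes \cite[Theorem D.1 (i)]{fujii2018polynomial}, which computes this prox exactly in time polynomial in $\sum_{e\in E}|\mathcal{V}_e|\le\sum_{e\in E}|e|^2$, so one may simply take $h_k=\lambda$ with zero defect in condition~2.
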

\begin{proof} 
  From the construction of $\bm{\rho}_t^\lambda$, it suffices to compute $\bm{x}^\lambda_k$ until $t_k \geq T$.
  Note that we can obtain $\bm{x}^\lambda_k$ from $\bm{x}^\lambda_{k-1}$ by solving the equation
  \begin{align}
    \bm{x} - \bm{x}^\lambda_{k-1} \in -\lambda \mathcal{L}_G (\bm{x}), \label{eq:equation-in-one-step}
  \end{align}
  because, then, we can set $h_k = \lambda$ and $\bm{x}^\lambda_k$ to be the obtained solution.

Let $\ol{\bm{x}} = D^{-1}\bm{x}$ for any $\bm{x}\in \RR^V$.
Then, solving~\eqref{eq:equation-in-one-step} is equivalent to solving
\begin{equation}\label{eq:normalized-equation}
 D\ol{\bm{x}} - D\ol{\bm{x}}^\lambda_{k-1} \in -\lambda L_G (\ol{\bm{x}}).
\end{equation}
  By an argument similar to~\cite[Section 3.1]{fujii2018polynomial}, solving~\eqref{eq:normalized-equation} is equivalent to computing the following proximal operator
  \begin{equation}\label{eq:minimization}
    \prox(\ol{\bm{x}}^\lambda_{k-1}) := \argmin_{\ol{\bm{x}} \in \RR^V} \left( \frac{\lambda}{2} \sum_{e\in E}w(e) {f_e(\ol{\bm{x}})}^2 +
    \frac{1}{2} \|\ol{\bm{x}} - \ol{\bm{x}}^\lambda_{k-1}\|_{D}^2 \right),
  \end{equation}
  which can be computed in time polynomial in $\sum_{e \in E}|\mathcal{V}_e|$, where $\mathcal{V}_e$ is the set of extreme points of $\bm{b}_e$~\cite[Theorem D.1 (i)]{fujii2018polynomial}.
  As $\mathcal{V}_e \leq |e|^2$, we can compute $\bm{x}^\lambda_k=D\prox(\ol{\bm{x}}^\lambda_{k-1})$ in time polynomial in $\sum_{e \in E}|e|$.

  As $h_k = \lambda$, we need to compute $\bm{x}^\lambda_k$ for $k \leq \lceil T/k \rceil$.  Hence, the total time complexity is polynomial in $1/\lambda$, $T$, and $\sum_{e \in E}|e|$.
\end{proof}

\begin{lem}\label{lem:difference-approximation-error}
  We have $\| \bm{\rho}^\lambda_t - \bm{\rho}^{\bm{\pi}_v}_t \|_{D^{-1}} = O(\sqrt{\lambda T})$.
\end{lem}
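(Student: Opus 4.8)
The plan is to invoke the standard convergence estimate for the difference approximation of evolution equations governed by a maximal monotone (equivalently, $m$-dissipative) operator, as developed in~\cite[Section~5.3]{miyadera1992nonlinear}. Recall that by Lemmas~\ref{lem:monotone} and~\ref{lem:maximal}, $\mathcal{L}_G$ is maximal monotone on the Hilbert space $(\RR^V,\langle\cdot,\cdot\rangle_{D^{-1}})$, and by Corollary~\ref{cor:unique-solution} the heat equation~\eqref{eq:heat-equation-hypergraph} with initial vector $\bm{\pi}_v$ has the unique global solution $\bm{\rho}^{\bm{\pi}_v}_t$. The sequences $\{t^\lambda_k\}$, $\{\bm{x}^\lambda_k\}$, $\{\bm{y}^\lambda_k\}$ constructed above satisfy exactly the three hypotheses required to form a $\lambda$-approximate solution (step sizes bounded by $\lambda$, and the one-step residual bounded by $\lambda(t^\lambda_k-t^\lambda_{k-1})$). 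First I would recall (or cite) the general a priori estimate: for a $\lambda$-approximate solution $\bm{\rho}^\lambda$ and the exact solution $\bm{\rho}$ starting from the same initial point $\bm{x}_0=\bm{\pi}_v$, one has $\|\bm{\rho}^\lambda_t-\bm{\rho}_t\|_{D^{-1}} \le C(t)\sqrt{\lambda}$ for $t\in[0,T]$, with $C(t)$ of order $\sqrt{t}$; see e.g.~\cite[Theorem~5.2]{miyadera1992nonlinear} or the Crandall--Liggett / Kobayashi-type estimates. Specializing to $t\le T$ gives the claimed $O(\sqrt{\lambda T})$.

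The key steps, in order, are: (i) verify that $\bm{\rho}^\lambda$ as defined in~\eqref{eq:difference-approximation} is a genuine $\lambda$-approximate solution in the sense of~\cite[Section~5.3]{miyadera1992nonlinear} — this is essentially bookkeeping from conditions 1–3 on $\{t^\lambda_k\}$, $\{\bm{x}^\lambda_k\}$, $\{\bm{y}^\lambda_k\}$, noting $\bm{y}^\lambda_k\in-\mathcal{L}_G(\bm{x}^\lambda_k)$; (ii) apply the monotone-operator convergence theorem to compare two approximate solutions, or an approximate solution with the exact one, obtaining a Grönwall-type bound on $\|\bm{x}^\lambda_k-\bm{\rho}_{t^\lambda_k}\|_{D^{-1}}$ in terms of the accumulated step error; (iii) since each one-step error is $<\lambda(t^\lambda_k-t^\lambda_{k-1})$ and $\sum_k(t^\lambda_k-t^\lambda_{k-1}) = t^\lambda_{\text{last}} \le T + \lambda \le 2T$, the total accumulated error over $[0,T]$ is $O(\lambda T)$, and the monotone (contraction-like) structure of the resolvent converts this into an $O(\sqrt{\lambda T})$ bound on the sup norm over $[0,T]$; (iv) finally, interpolate: for $t\in(t^\lambda_k,t^\lambda_{k+1}]$ we have $\bm{\rho}^\lambda_t=\bm{x}^\lambda_k$, and since $\|\bm{\rho}_t-\bm{\rho}_{t^\lambda_k}\|_{D^{-1}}$ is controlled by the modulus of continuity of the exact solution over an interval of length $<\lambda$ (which is itself $O(\sqrt{\lambda})$ times a Lipschitz-type constant, using $\|\frac{d}{dt}\bm{\rho}_t\|\le\|\mathcal{L}_G(\bm{\pi}_v)\|$ decreasing in $t$), the two error contributions combine to the stated rate.

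The main obstacle I anticipate is not the existence of a suitable estimate — that machinery is standard for maximal monotone operators — but rather pinning down the correct $T$-dependence with the right power, namely that the constant in front of $\sqrt{\lambda}$ grows only like $\sqrt{T}$ rather than like $T$ or $e^{T}$. This hinges on the operator being monotone (so the flow map $\bm{\rho}_0\mapsto\bm{\rho}_t$ is a contraction in $\|\cdot\|_{D^{-1}}$, giving \emph{no} exponential amplification of errors), and on carefully summing the squared one-step residuals rather than their absolute values. Concretely I would follow the argument in~\cite[proof of Theorem~5.2]{miyadera1992nonlinear}: set $a_{k}=\|\bm{x}^\lambda_k-\bm{\rho}_{t^\lambda_k}\|_{D^{-1}}$, derive a recursion of the form $a_k^2 \le a_{k-1}^2 + 2\lambda(t^\lambda_k-t^\lambda_{k-1})a_{k-1} + (\lambda(t^\lambda_k-t^\lambda_{k-1}))^2 + (\text{cross terms that vanish or are nonpositive by monotonicity})$, and sum to get $a_k^2 = O(\lambda T \cdot \sup_j a_j) + O(\lambda^2 T)$; solving this quadratic inequality in $\sup_j a_j$ yields $\sup_j a_j = O(\sqrt{\lambda T})$ provided $T\ge1$ (which is assumed). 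The remaining details — handling the last partial step up to exactly $T$, and the interpolation in step (iv) — are routine.
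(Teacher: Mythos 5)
Your proposal is correct and follows essentially the same route as the paper: verify that $\{t^\lambda_k\},\{\bm{x}^\lambda_k\},\{\bm{y}^\lambda_k\}$ form a $\lambda$-approximate solution and then invoke the standard error estimate from~\cite[Section~5.3]{miyadera1992nonlinear} for maximal monotone operators, with the dominant $O(\sqrt{\lambda T})$ term controlled by $|||\mathcal{L}_G(\bm{\pi}_v)|||$ and the accumulated residuals contributing only $O(\lambda T)$. The paper's only cosmetic difference is that it applies Miyadera's inequality (5.20) to compare two approximate solutions $\bm{\rho}^\lambda_t$ and $\bm{\rho}^\mu_t$ and then lets $\mu \to 0+$ to reach the exact solution, rather than comparing with the exact solution directly.
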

\begin{proof} 
  Let $|||\mathcal{L}_G(\bm{x}) ||| = \inf \{ \| \bm{y}\|_{D^{-1}} \mid \bm{y} \in \mathcal{L}_G(\bm{x}) \}$.
  We set $N_\lambda \in \ZZ_+$ as $t_{N_\lambda}^\lambda < T\leq t_{N_\lambda}^\lambda$, $| \Delta_\lambda| = \max \{ t^\lambda_k - t^\lambda_{k-1}; k= 1,2, \dots, N_\lambda \}$ and $\mathcal{E}_\lambda = \sum_{k=1}^{N_\lambda} \| \bm{\mathcal{E}}_k^\lambda \|_{D^{-1}}(t^\lambda_k - t^\lambda_{k-1})$, where $\bm{\mathcal{E}}^\lambda_k$ is defined as
  \[
    \bm{\mathcal{E}}^\lambda_k = \frac{\bm{x}^\lambda_k - \bm{x}^\lambda_{k-1}}{t^\lambda_k - t^\lambda_{k-1}} - \bm{y}^\lambda_k \ \ \ (k = 1, 2, \dots).
  \]
  Then, by the equation (5.20) of~\cite{miyadera1992nonlinear} instantiated with $\omega_0 = 0$, $t= s$, $x_p = x$, we have
  \[
   \| \bm{\rho}^\lambda_t - \bm{\rho}^\mu_t \|_{D^{-1}} \leq  \mathcal{E}_\lambda + \mathcal{E}_\mu + {\left({(|\Delta_\lambda| + |\Delta_\mu|)}^2 + | \Delta_\lambda| (t + | \Delta_\lambda | ) +
    | \Delta_\mu | (t + | \Delta_\mu | )  \right)}^\frac{1}{2} \times ||| \mathcal{L}_G(\bm{\pi}_v) |||
  \]
  for $t \in [0,T]$ and $\mu > 0$.
  The condition 3 for $\{t_k^\lambda\}, \{\bm{x}_k^\lambda\}$, and 
  $\{\bm{y}_k^\lambda\}$ implies $\| \bm{\mathcal{E}}^\lambda_k \|_{D^{-1}} < \lambda$. Hence,  $\mathcal{E}_\lambda < \lambda t_{N_\lambda}^\lambda < \lambda (T + \lambda)$ as  $t^\lambda_{N_\lambda-1} < T \leq  t^\lambda_{N_\lambda}$.

  Therefore by taking limit $\mu \to 0+$, we have
  \[
    \| \bm{\rho}^\lambda_t - \bm{\rho}_t^{\bm{\pi}_v} \|_{D^{-1}}
    < \lambda(T+\lambda) + \sqrt{\lambda^2 + \lambda(t+ \lambda)} ||| \mathcal{L}_G (\bm{\pi}_v) |||
    = O(\sqrt{\lambda T}). \qedhere
  \]
\end{proof}

\appendix


\section{Proof of Theorem~\ref{thm:graph-diffusion}}\label{sec:proof-of-graph-diffusion}

\begin{proof}
By Corollary~\ref{cor:unique-solution},  for any initial vector $\bm{s}$, there exists a unique solution $\bm{\rho}_{t}^{\bm{s}}$ of~\eqref{eq:heat-equation-hypergraph}.
Let $\bm{\mu}^{\bm{s}}_{t}=D^{-1} \bm{\rho}^{\bm{s}}_t$.
By~\cite[\S.3 and \S.4]{chan2017diffusion}, we can compute
any higher right derivatives $\frac{d^{n}\bm{\mu}_{t}^{\bm{s}}}{dt^n}|_{t=0}$.
Let $(\sigma^\ast, \succ)$ be the lexicographical ordered equivalence relation on $V$
consistent with ${\{ d^{n}\bm{\mu}_{t}^{\bm{s}}/dt^n |_{t=0} \}}_n$.


For each $e \in E$, let $S_e$, $I_e$, $S_e^{\sigma^\ast}$, and $I_e^{\sigma^\ast}$ be subsets introduced in \S.2.1 and \S.2.2. Let $G' = (V', E', w')$ be the undirected
weighted graph with respect to $\bm{\mu}^{\bm{s}}_t$ as in \S.\ref{subsec:laplacian-for-hypergraphs}.
We remark that for each $e\in E$,  $w'_e(uv) \neq 0$ only if $(u,v)$ or $(v,u)$ is in $S_e^{\sigma^\ast} \times I_e^{\sigma^\ast}$, because any vertex in $S_e\!\setminus \!S_e^{\sigma^\ast}$ (resp.
$I_e\!\setminus \! I_e^{\sigma^\ast}$) will leave
$S_e$ (resp. $I_e$) after infinitesimal time.
We take $T>0$ such that if we retake ordered equivalence relation
$(\sigma^\ast,\succ)$ consistent with
$\{d^{n}\bm{\mu}_{t}^{\bm{s}}/dt^n |_{t=t'} \}_n$ at $t' \in [0, T]$, for $e\in E$,
$S_e^{\sigma^\ast}$ and $I_e^{\sigma^\ast}$ do not change.
 Then, for any $t\in [0,T]$,  we have
\allowdisplaybreaks[1]
\begin{align*}
 -L_G (\bm{\rho}_{t}^{\bm{s}}) =
 - (I - A_G D^{-1}) &(\bm{\rho}_{t}^{\bm{s}}) \ni - \bm{\rho}_{t}^{\bm{s}}
  + A_{\bm{\mu}_{t}^{\bm{s}}} \bm{\mu}_{t}^{\bm{s}} \\
 & = -\bm{\rho}_{t}^{\bm{s}} + {\left( \sum_{v \in V} w'(uv) \bm{\mu}_{t}^{\bm{s}}(v)  \right)}_u \\
 & = {\left( - d_{G}(u) \bm{\mu}_{t}^{\bm{s}}(u) + \sum_{v \in V} w'(uv) \bm{\mu}_{t}^{\bm{s}}(v) \right)}_u \\
 & = {\left(- \left(\sum_{v \in V} w'(uv) \right)  \bm{\mu}_{t}^{\bm{s}}(u) + \sum_{v \in V} w'(uv) \bm{\mu}_{t}^{\bm{s}}(v) \right)}_u \\
 & = {\left(- \left(\sum_{v \in V, \atop v \neq u} w'(uv) \right)  \bm{\mu}_{t}^{\bm{s}}(u) +
  \sum_{v \in V, \atop v \neq u} w'(uv) \bm{\mu}_{t}^{\bm{s}}(v) \right)}_u \\
 & = {\left(- \left(\sum_{v \in V, \atop v \neq u} w'(uv) \right)  (\bm{\mu}_{t}^{\bm{s}}(u) - \bm{\mu}_{t}^{\bm{s}}(v)) \right)}_u \\
 & = {\left(- \sum_{e\in E }\left(\sum_{v \in V, \atop v \neq u} w'_e(uv) \right)  (\bm{\mu}_{t}^{\bm{s}}(u) - \bm{\mu}_{t}^{\bm{s}}(v)) \right)}_u.
\end{align*}
If $u \in S_e^{\sigma^\ast}$, then $w'_e(uv) \neq 0$ holds only if $v \in I_e^{\sigma^\ast}$. Hence, we have
\[
\left(\sum_{v \in V, \atop v \neq u} w'_e(uv) \right)  (\bm{\mu}_{t}^{\bm{s}}(u) - \bm{\mu}_{t}^{\bm{s}}(v)) =
\left(\sum_{v \in I_e^{\sigma^\ast}} w'_e(uv) \right)  \Delta_e(\bm{\mu}_{t}^{\bm{s}}),
\]
where $\Delta_e(\bm{\mu}_{t}^{\bm{s}}) =  \max_{u,v \in e}(\bm{\mu}_{t}^{\bm{s}}(u) - \bm{\mu}_{t}^{\bm{s}}(v))$.
On the other hand, if
$u \in I_e^{\sigma^\ast}$, then $w'_e(uv) \neq 0$ holds only if $v \in S_e^{\sigma^\ast}$. Hence, we have
\[
\left(\sum_{v \in V, \atop v \neq u} w'_e(uv) \right)  (\bm{\mu}_{t}^{\bm{s}}(u) - \bm{\mu}_{t}^{\bm{s}}(v)) =
- \left(\sum_{v \in S_e^{\sigma^\ast}} w'_e(uv) \right)  \Delta_e(\bm{\mu}_{t}^{\bm{s}}).
\]
Using these equalities, we obtain
\begin{align*}
 &{\left(- \sum_{e\in E }\left(\sum_{v \in V, \atop v \neq u} w'_e(uv) \right)  (\bm{\mu}_{t}^{\bm{s}}(u) - \bm{\mu}_{t}^{\bm{s}}(v)) \right)}_u \\
 &={\left(- \sum_{e\in E, \atop u\in S_e^{\sigma^\ast} }\left(\sum_{v \in I_e^{\sigma^\ast}} w'_e(uv) \right) \Delta_e(\bm{\mu}_{t}^{\bm{s}})
 + \sum_{e\in E, \atop u\in I_e^{\sigma^\ast} }\left(\sum_{v \in S_e^{\sigma^\ast}} w'_e(uv) \right) \Delta_e(\bm{\mu}_{t}^{\bm{s}}) \right)}_u.
\end{align*}

Let $\mathcal{C}[\sigma^\ast] = \{ U_1, \dots, U_m \}$ be the family of $\sigma^\ast$-equivalence classes such that
$U_k \succ U_{i+1}$ and we fix $u_k \in U_k$ for each $i$. For $u \in U_k$, we note ${[u]}_{\sigma^\ast} = U_k$.

We sum up the entries of the above vector along $U_k$. Then, we have
\begin{align*}
& \sum_{u\in U_k} \left(- \sum_{e\in E, \atop u\in S_e^{\sigma^\ast} }\left(\sum_{v \in I_e^{\sigma^\ast}} w'_e(uv) \right) \Delta_e(\bm{\mu}_{t}^{\bm{s}})
+ \sum_{e\in E, \atop u\in I_e^{\sigma^\ast} }\left(\sum_{v \in S_e^{\sigma^\ast}} w'_e(uv) \right) \Delta_e(\bm{\mu}_{t}^{\bm{s}}) \right) \\
&= - \sum_{e\in E, \atop S_e^{\sigma^\ast}\cap U_k \neq \emptyset }\left(\sum_{u \in S_e^{\sigma^\ast} \atop v \in I_e^{\sigma^\ast}} w'_e(uv) \right) \Delta_e(\bm{\mu}_{t}^{\bm{s}})
+ \sum_{e\in E, \atop I_e^{\sigma^\ast}\cap U_k \neq \emptyset }\left(\sum_{u \in I_e^{\sigma^\ast} \atop v \in S_e^{\sigma^\ast}} w'_e(uv) \right) \Delta_e(\bm{\mu}_{t}^{\bm{s}})  \\
&= - \sum_{e\in E, \atop S_e^{\sigma^\ast}\cap U_k \neq \emptyset }w_e \Delta_e(\bm{\mu}_{t}^{\bm{s}})
+ \sum_{e\in E, \atop I_e^{\sigma^\ast}\cap U_k \neq \emptyset }w_e \Delta_e(\bm{\mu}_{t}^{\bm{s}}).
\end{align*}
We remark that the last form is independent of the choice of $w'_e(uv)$.
Now, if $S_e^{\sigma^\ast}\cap U_k \neq \emptyset$, $\Delta_e(\bm{\mu}^{\bm{s}}_t) = \bm{\mu}_{t}^{\bm{s}}(u_k) - \bm{\mu}_{t}^{\bm{s}}(u_l)$ for some $l>k$ such that
$I_e^{\sigma^\ast} \cap U_l \neq \emptyset$.
Similarly, if $I_e^{\sigma^\ast}\cap U_k \neq \emptyset$, $\Delta_e(\bm{\mu}^{\bm{s}}_t) = \bm{\mu}_{t}^{\bm{s}}(u_l) - \bm{\mu}_{t}^{\bm{s}}(u_k)$ for some $l<k$ such that
$S_e^{\sigma^\ast} \cap U_l \neq \emptyset$.
 Hence the sum becomes
\begin{align*}
& - \sum_{e\in E, \atop S_e^{\sigma^\ast}\cap U_k \neq \emptyset }w_e \Delta_e(\bm{\mu}_{t}^{\bm{s}})
+ \sum_{e\in E, \atop I_e^{\sigma^\ast}\cap U_k \neq \emptyset }w_e \Delta_e(\bm{\mu}_{t}^{\bm{s}}) \\
& = - \sum_{e\in E, \atop S_e^{\sigma^\ast}\cap U_k \neq \emptyset }
\sum_{ l \neq k \atop  I_e^{\sigma^\ast} \cap U_l \neq \emptyset} w_e (\bm{\mu}_{t}^{\bm{s}}(u_k) - \bm{\mu}_{t}^{\bm{s}}(u_l))
+ \sum_{e\in E, \atop I_e^{\sigma^\ast}\cap U_k \neq \emptyset }
\sum_{l \neq k \atop  S_e^{\sigma^\ast} \cap U_l \neq \emptyset} w_e (\bm{\mu}_{t}^{\bm{s}}(u_l) - \bm{\mu}_{t}^{\bm{s}}(u_k)) \\
& = - \sum_{ l \neq k} \sum_{e\in E, S_e^{\sigma^\ast}\cap U_k \neq \emptyset
\atop  I_e^{\sigma^\ast} \cap U_l \neq \emptyset} w_e (\bm{\mu}_{t}^{\bm{s}}(u_k) - \bm{\mu}_{t}^{\bm{s}}(u_l))
- \sum_{ l \neq k} \sum_{e\in E, I_e^{\sigma^\ast}\cap U_k \neq \emptyset
\atop  S_e^{\sigma^\ast} \cap U_l \neq \emptyset} w_e (\bm{\mu}_{t}^{\bm{s}}(u_k) - \bm{\mu}_{t}^{\bm{s}}(u_l)) \\
&= - \sum_{ l \neq k} a_{kl} (\bm{\mu}_{t}^{\bm{s}}(u_k) - \bm{\mu}_{t}^{\bm{s}}(u_l))
- \sum_{ l \neq k} a_{lk} (\bm{\mu}_{t}^{\bm{s}}(u_k) - \bm{\mu}_{t}^{\bm{s}}(u_l)), \\
& = - \left(\sum_{l \neq k} (a_{kl} + a_{lk}) \right) \bm{\mu}_{t}^{\bm{s}}(u_k) +  \sum_{l \neq k} (a_{kl} + a_{lk})  \bm{\mu}_{t}^{\bm{s}}(u_l),
\end{align*}
where
\begin{align*}
 a_{kl}  &= \sum_{e\in E, S_e^{\sigma^\ast}\cap U_k \neq \emptyset
\atop  I_e^{\sigma^\ast} \cap U_l \neq \emptyset} w_e.
\end{align*}
We set
\begin{align*}
\wt{w}(u_ku_l) &= a_{kl} + a_{lk} \ \ \mathrm{for \ } k \neq l
\quad \text{and}\quad
\wt{w}(u_ku_k) = d_{\tilde{G}}(u_k) - \sum_{l \neq k} \wt{w}(u_ku_l),
\end{align*}
where $d_{\tilde{G}}(u_k) = \sum_{i \in U_k} d_{G}(u)$.
For $\bm{\rho}_{t}^{\bm{s}}$, we set $\wt{\bm{\rho}}_{t}^{\bm{s}} = {\left( \sum_{u \in U_k} \bm{\rho}_{t}^{\bm{s}}(u)\right)}_k \in \RR^m$.
Then, we have
\[
\wt{\bm{\rho}}_{t}^{\bm{s}} = {\left( \left(\sum_{u \in U_k}d_{G}(u)\right)\bm{\mu}_{t}^{\bm{s}}(u_k)\right)}_k = {(d_{\tilde{G}}(u_k) \bm{\mu}_{t}^{\bm{s}}(u_k))}_k.
\]
Let $D_{\tilde{G}} = \mathrm{diag}(d_{\tilde{G}}(u_k))$ and $\wt{\bm{\mu}_{t}^{\bm{s}}}:= {(\bm{\mu}_{t}^{\bm{s}}(u_k))}_k =
D_{\tilde{G}}^{-1} \wt{\bm{\rho}}_{t}^{\bm{s}} \in \RR^m$.
Then, we have
\begin{align*}
&{\left( \sum_{u \in U_k} (- \bm{\rho}_{t}^{\bm{s}}(u) + (A_{\bm{\mu}_{t}^{\bm{s}}} \bm{\mu}_{t}^{\bm{s}})(u)) \right)}_k\\
& = {\left( - \left(\sum_{l \neq k} \wt{w}(u_ku_l)\right){\bm{\mu}_{t}^{\bm{s}}}(u_k) + \sum_{l \neq k} \wt{w}(u_ku_l)
{\bm{\mu}_{t}^{\bm{s}}}(u_l) \right)}_k \\
& = - \wt{\bm{\rho}}_{t}^{\bm{s}} + \wt{\bm{\rho}}_{t}^{\bm{s}} + {\left( - \left(\sum_{l \neq k} \wt{w}(u_ku_l)\right){\bm{\mu}_{t}^{\bm{s}}}(u_k) + \sum_{l \neq k} \wt{w}(u_ku_l){\bm{\mu}_{t}^{\bm{s}}}(u_l) \right)}_k \\
& = - \wt{\bm{\rho}}_{t}^{\bm{s}} + {\left( \left(d_{\tilde{G}}(u_k) - \sum_{l \neq k} \wt{w}(u_ku_l) \right) {\bm{\mu}_{t}^{\bm{s}}}(u_k) + \sum_{l \neq k} \wt{w}(u_ku_l)
{\bm{\mu}_{t}^{\bm{s}}}(u_l)\right)}_k \\
& = - \wt{\bm{\rho}}_{t}^{\bm{s}} + {(\wt{w}(u_ku_l))}_{k,l} \wt{\bm{\mu}_{t}^{\bm{s}}}
 = - (I - {(\wt{w}(u_ku_l))}_{k,l} D_{\tilde{G}}^{-1}) \wt{\bm{\rho}}_{t}^{\bm{s}}.
\end{align*}
This $I - {(\wt{w}(u_ku_l))}_{k,l} D_{\tilde{G}}^{-1}$ is the 
normalized graph Laplacian $\ca{L}_{\tilde{G}}$ introduced in Section~\ref{sec:heat-equation}.

We return to the heat equation.
We consider the solution $\bm{\rho}_{t}^{\bm{s}}$ of heat equation~\eqref{eq:heat-equation-hypergraph}.
We set $\bm{\mu}_{t}^{\bm{s}} = D_{G}^{-1} \bm{\rho}_{t}^{\bm{s}}$.
By the definition of $\sigma^\ast$, $\bm{\mu}_{t}^{\bm{s}}(u) = \bm{\mu}_{t}^{\bm{s}}(v)$ if $u \sim_{\sigma^\ast} \!\! v$ until the next tie occurs.
We remark that $\bm{\rho}_{t}^{\bm{s}}$ until the next tie occurs is determined by
$\bm{\mu}_{t}^{\bm{s}}(u_k)$, $i=1, \dots, m$.
Also $\frac{d\bm{\mu}_{t}^{\bm{s}}(u)}{dt} = \frac{d\bm{\mu}_{t}^{\bm{s}}(v)}{dt}$ holds for such $u,v$ and $t$.
Hence, we have
\begin{align*}
 \sum_{u\in U_k} \frac{d\bm{\rho}_{t}^{\bm{s}}}{dt}(u) = d_{\tilde{G}}(u_k) \frac{d\bm{\mu}_{t}^{\bm{s}}}{dt}(u_k)
\end{align*}
Let $\wt{\bm{\rho}}_{t}^{\bm{s}} = {\left(\sum_{u \in U_k}\bm{\rho}_{t}^{\bm{s}} (u)\right)}_k \in \RR^m$, and
$\wt{\bm{\mu}}_{t}^{\bm{s}}:= D_{\tilde{G}}^{-1} \wt{\bm{\rho}}_{t}^{\bm{s}}$.
By the argument above, $\wt{\bm{\rho}}_{t}^{\bm{s}}$ is the unique solution of the heat equation
\begin{equation}
  \frac{d\wt{\bm{\rho}}_{t}}{dt} = - \ca{L}_{\wt{G}}\wt{\bm{\rho}}_{t}, \ \ \wt{\bm{\rho}}_{0} = \wt{\bm{s}}.
  \label{eq:graph-heat-equation}
\end{equation}
This solution $\wt{\bm{\rho}}_{t}^{\bm{s}}$ determines $\wt{\bm{\mu}}_{t}^{\bm{s}}$, and hence $\bm{\mu}^{\bm{s}}_t$. If $u\in U_k$, then
\[
 \bm{\rho}_{t}^{\bm{s}}(u) = d_{G}(u)\bm{\mu}_{t}^{\bm{s}}(u_k)
\]
holds. Hence, we can recover $\bm{\rho}^{\bm{s}}_t$ from the heat equation~\eqref{eq:graph-heat-equation}.
\end{proof}

\section{Proofs of Section~\ref{sec:analysis}}\label{sec:proofs-of-useful-lemmas}

\subsection{Useful lemmas}\label{subsec:useful-lemmas}
In this section, we derive several inequalities on $f_i$ that will be useful later.
Note that the proofs are deferred to Section~\ref{sec:proofs-of-useful-lemmas}.
We define $\mathcal{R}_i\colon \mathbb{R}^{\wt{V}_i} \to \mathbb{R}$ as
\begin{align}
  \mathcal{R}_i(\bm{x}) = \frac{\bm{x}^\top L_{\wt{G}_i} \bm{x}}{\|\bm{x}\|^2_{D_{\tilde{G}_i}}} = \frac{ \sum_{uv \in \wt{E}_i} {\left( \bm{x}(u) - \bm{x}(v) \right)}^2 \wt{w}_i(uv)}{\sum_{v \in \wt{V}_i} {\bm{x}(v)}^2 d_{\wt{G}_i}(v)}.
  \label{eq:Rayleigh-quotient}
\end{align}

\begin{lem}\label{lem:log-derivative}
  For any $i \in \mathbb{Z}_+$, we have
  \[
     \frac{d}{d \Delta} \log f_i(\Delta)
     = \frac{ \wt{\bm{\rho}}_{i,0}^\top D_{\tilde{G}_i}^{-1}\frac{d}{d \Delta} \wt{\bm{\rho}}_{i,\Delta}  }{\wt{\bm{\rho}}_{i,0}^\top D_{\tilde{G}_i}^{-1}(\wt{\bm{\rho}}_{i,\Delta}-\wt{\bm{\pi}}^i)}
     = - \mathcal{R}_i\left( \frac{\wt{\bm{\rho}}_{i,\Delta/2}}{d_{\wt{G}_i}}-\frac{1}{\vol(\wt{V}_i)} \right).
  \]
\end{lem}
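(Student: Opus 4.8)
The plan is to establish the two equalities in turn. The first is essentially bookkeeping: since $\wt{\bm{\rho}}_{i,0}$ and $\wt{\bm{\pi}}^i$ do not depend on $\Delta$, differentiating $f_i(\Delta) = \wt{\bm{\rho}}_{i,0}^\top D_{\tilde{G}_i}^{-1}(\wt{\bm{\rho}}_{i,\Delta}-\wt{\bm{\pi}}^i)$ gives $\frac{d}{d\Delta}f_i(\Delta) = \wt{\bm{\rho}}_{i,0}^\top D_{\tilde{G}_i}^{-1}\frac{d}{d\Delta}\wt{\bm{\rho}}_{i,\Delta}$, and dividing by $f_i(\Delta)$ yields the middle expression. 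By Proposition~\ref{prop:norm}, $f_i(\Delta) = \|\wt{\bm{\rho}}_{i,\Delta/2}-\wt{\bm{\pi}}^i\|_{D_{\tilde{G}_i}^{-1}}^2 > 0$ whenever $\wt{\bm{\rho}}_{i,0} \neq \wt{\bm{\pi}}^i$ (and the statement is vacuous otherwise), so $\log f_i$ is well defined and differentiable.

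For the second equality I would first record two structural facts about the graph normalized Laplacian $\mathcal{L}_i = I - A_i D_{\tilde{G}_i}^{-1}$ of $\wt{G}_i$, where $A_i$ denotes its (symmetric) weighted adjacency matrix. First, $\mathcal{L}_i$ is self-adjoint for the inner product $\langle\cdot,\cdot\rangle_{D_{\tilde{G}_i}^{-1}}$, since $D_{\tilde{G}_i}^{-1}\mathcal{L}_i = D_{\tilde{G}_i}^{-1} - D_{\tilde{G}_i}^{-1}A_i D_{\tilde{G}_i}^{-1}$ is a symmetric matrix; consequently $H_{i,\Delta} = e^{-\Delta\mathcal{L}_i}$ is also self-adjoint for this inner product, commutes with $\mathcal{L}_i$, and satisfies the semigroup identity $H_{i,\Delta} = H_{i,\Delta/2}H_{i,\Delta/2}$. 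Second, $\mathcal{L}_i\wt{\bm{\pi}}^i = \0$, because $\wt{\bm{\pi}}^i$ is proportional to $D_{\tilde{G}_i}\1$.

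Now I would combine the heat equation $\frac{d}{d\Delta}\wt{\bm{\rho}}_{i,\Delta} = -\mathcal{L}_i\wt{\bm{\rho}}_{i,\Delta}$ with $\wt{\bm{\rho}}_{i,\Delta} = H_{i,\Delta}\wt{\bm{\rho}}_{i,0}$ from~\eqref{eq:rho-i-Delta}: the numerator equals $-\langle\wt{\bm{\rho}}_{i,0}, \mathcal{L}_i H_{i,\Delta}\wt{\bm{\rho}}_{i,0}\rangle_{D_{\tilde{G}_i}^{-1}}$, and writing $H_{i,\Delta}=H_{i,\Delta/2}H_{i,\Delta/2}$, using commutativity of $\mathcal{L}_i$ with $H_{i,\Delta/2}$, and moving one factor of $H_{i,\Delta/2}$ onto the left slot by self-adjointness, this becomes $-\langle\wt{\bm{\rho}}_{i,\Delta/2}, \mathcal{L}_i\wt{\bm{\rho}}_{i,\Delta/2}\rangle_{D_{\tilde{G}_i}^{-1}}$. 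Substituting $\wt{\bm{\rho}}_{i,\Delta/2} = \bm{z}+\wt{\bm{\pi}}^i$ with $\bm{z} := \wt{\bm{\rho}}_{i,\Delta/2}-\wt{\bm{\pi}}^i$, the cross terms vanish by $\mathcal{L}_i\wt{\bm{\pi}}^i=\0$ and self-adjointness, so this quadratic form equals $\langle\bm{z},\mathcal{L}_i\bm{z}\rangle_{D_{\tilde{G}_i}^{-1}} = (D_{\tilde{G}_i}^{-1}\bm{z})^\top L_{\wt{G}_i}(D_{\tilde{G}_i}^{-1}\bm{z})$, where $L_{\wt{G}_i} = D_{\tilde{G}_i}-A_i$. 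On the other hand, $f_i(\Delta) = \|\bm{z}\|_{D_{\tilde{G}_i}^{-1}}^2 = \|D_{\tilde{G}_i}^{-1}\bm{z}\|_{D_{\tilde{G}_i}}^2$ by Proposition~\ref{prop:norm}. Dividing, and using $D_{\tilde{G}_i}^{-1}\bm{z} = \wt{\bm{\rho}}_{i,\Delta/2}/d_{\wt{G}_i} - \1/\vol(\wt{V}_i)$ (since $D_{\tilde{G}_i}^{-1}\wt{\bm{\pi}}^i = \1/\vol(\wt{V}_i)$), gives exactly $-\mathcal{R}_i(\wt{\bm{\rho}}_{i,\Delta/2}/d_{\wt{G}_i} - \1/\vol(\wt{V}_i))$.

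The only delicate point is the passage from $H_{i,\Delta}$ to the half-time operator $H_{i,\Delta/2}$ that accounts for the argument $\Delta/2$ inside $\mathcal{R}_i$; this rests entirely on self-adjointness of $\mathcal{L}_i$ (equivalently of $H_{i,\Delta}$) for the weighted inner product $\langle\cdot,\cdot\rangle_{D_{\tilde{G}_i}^{-1}}$ together with the semigroup identity $H_{i,\Delta}=H_{i,\Delta/2}^2$. Everything else is routine linear algebra, so I would present that step carefully and treat the remainder briefly.
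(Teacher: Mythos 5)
Your proof is correct and follows essentially the same route as the paper's: the heart of both arguments is the semigroup identity $H_{i,\Delta}=H_{i,\Delta/2}H_{i,\Delta/2}$ together with self-adjointness of $H_{i,\Delta/2}$ (equivalently $D_{\tilde{G}_i}H_{i,\Delta/2}=(H_{i,\Delta/2})^\top D_{\tilde{G}_i}$), which converts the numerator into the Laplacian quadratic form at $\wt{\bm{\rho}}_{i,\Delta/2}$, combined with Proposition~\ref{prop:norm} for the denominator. The only cosmetic difference is that you remove the constant shift by writing $\wt{\bm{\rho}}_{i,\Delta/2}=\bm{z}+\wt{\bm{\pi}}^i$ and using $\mathcal{L}_i\wt{\bm{\pi}}^i=\0$, whereas the paper lets the shift cancel automatically in the edge-difference form of the quadratic form.
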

\begin{proof}
We first prove the following lemma:
  \begin{cla}\label{cla:numerator}
    For any $i \in \mathbb{Z}_+$ and $\Delta \geq 0$, we have
    \begin{align*}
      \wt{\bm{\rho}}_{i,0}^\top D_{\tilde{G}_i}^{-1} \frac{d \wt{\bm{\rho}}_{i,\Delta}}{d \Delta}
      &= - {(D_{\tilde{G}_i}^{-1}\wt{\bm{\rho}}_{i,\Delta/2})}^\top (D_{\tilde{G}_i} - A_{\wt{G}_i}) (D_{\tilde{G}_i}^{-1} \wt{\bm{\rho}}_{i,\Delta/2})\\
      &=  - \sum_{ uv \in \wt{E}_i } {\left( \frac{\wt{\bm{\rho}}_{i,\Delta/2}(u)}{d_{\wt{G}_i}(u)} -\frac{\wt{\bm{\rho}}_{i,\Delta/2}(v)}{d_{\wt{G}_i}(v)} \right)}^2 \wt{w}_i(uv) \leq 0,
    \end{align*}
    where $A_{\wt{G}_i}$ is the adjacency matrix of $\wt{G}_i$.
  \end{cla}
  \begin{proof}
    We have
    \begin{align*}
      & \wt{\bm{\rho}}_{i,0}^\top D_{\tilde{G}_i}^{-1} \frac{d \wt{\bm{\rho}}_{i,\Delta}}{d \Delta}
      =  -\wt{\bm{\rho}}_{i,0}^\top D_{\tilde{G}_i}^{-1} H_{i,\Delta} \mathcal{L}_i \wt{\bm{\rho}}_{i,0}
      = - \wt{\bm{\rho}}_{i,0}^\top D_{\tilde{G}_i}^{-1} H_{i,\Delta/2}H_{i,\Delta/2} \mathcal{L}_i \wt{\bm{\rho}}_{i,0} \tag{by $H_{i,\Delta} = H_{i,\Delta/2}H_{i,\Delta/2}$} \\
      & = -\wt{\bm{\rho}}_{i,0}^\top {(H_{i,\Delta/2})}^\top  D_{\tilde{G}_i}^{-1} (D_{\tilde{G}_i} - A_{\wt{G}_i}) D_{\tilde{G}_i}^{-1} H_{i,\Delta/2} \wt{\bm{\rho}}_{i,0} \tag{by $D_{\tilde{G}_i} H_{i,\Delta/2} ={(H_{i,\Delta/2})}^\top D_{\tilde{G}_i}$}\\
      & = - {(D_{\tilde{G}_i}^{-1}\wt{\bm{\rho}}_{i,\Delta/2})}^\top (D_{\tilde{G}_i} - A_{\wt{G}_i}) (D_{\tilde{G}_i}^{-1} \wt{\bm{\rho}}_{i,\Delta/2}).
    \end{align*}
    The second equality in the statement is obtained through a direct calculation.
  \end{proof}

  We are now ready to prove Lemma~\ref{lem:log-derivative}. The first equality is obtained through direct calculation and the second equality follows from Proposition~\ref{prop:norm} and Lemma~\ref{cla:numerator}.
\end{proof}

\begin{lem}\label{lem:log-second-derivative}
  For any $i \in \mathbb{Z}_+$, we have
  \[
    \frac{d^2}{ d\Delta^2} \log f_i(\Delta) \geq 0.
  \]
\end{lem}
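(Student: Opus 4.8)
The plan is to establish that $f_i$ is \emph{log-convex}, by writing it explicitly as a nonnegative combination of exponentials and then applying Cauchy--Schwarz. First I would use the linearity of the heat flow on $\wt{G}_i$: by~\eqref{eq:rho-i-Delta}, on the whole half-line $\Delta \ge 0$ we have $\wt{\bm{\rho}}_{i,\Delta} = H_{i,\Delta}\wt{\bm{\rho}}_{i,0}$ with $H_{i,\Delta} = e^{-\Delta\mathcal{L}_i}$, and since $\wt{\bm{\pi}}^i$ is proportional to $D_{\tilde{G}_i}\1$ we have $\mathcal{L}_i\wt{\bm{\pi}}^i = \0$, hence $\wt{\bm{\rho}}_{i,\Delta} - \wt{\bm{\pi}}^i = e^{-\Delta\mathcal{L}_i}(\wt{\bm{\rho}}_{i,0} - \wt{\bm{\pi}}^i)$.

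Next I would pass to symmetric coordinates. Set $\mathcal{N}_i := I - D_{\tilde{G}_i}^{-1/2} A_{\wt{G}_i} D_{\tilde{G}_i}^{-1/2} = D_{\tilde{G}_i}^{-1/2} L_{\wt{G}_i} D_{\tilde{G}_i}^{-1/2}$, which is symmetric positive semidefinite, and observe the conjugacy $D_{\tilde{G}_i}^{-1/2}\mathcal{L}_i = \mathcal{N}_i D_{\tilde{G}_i}^{-1/2}$, so that $D_{\tilde{G}_i}^{-1/2} e^{-\Delta\mathcal{L}_i} = e^{-\Delta\mathcal{N}_i} D_{\tilde{G}_i}^{-1/2}$. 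Writing $\bm{w} := D_{\tilde{G}_i}^{-1/2}(\wt{\bm{\rho}}_{i,0} - \wt{\bm{\pi}}^i)$ and invoking Proposition~\ref{prop:norm},
\[
f_i(\Delta) = \bigl\| D_{\tilde{G}_i}^{-1/2}(\wt{\bm{\rho}}_{i,\Delta/2} - \wt{\bm{\pi}}^i)\bigr\|^2 = \bigl\| e^{-(\Delta/2)\mathcal{N}_i}\bm{w}\bigr\|^2 = \bm{w}^\top e^{-\Delta\mathcal{N}_i}\bm{w}.
\]
Diagonalizing $\mathcal{N}_i = \sum_j \lambda_j \bm{u}_j\bm{u}_j^\top$ with $\lambda_j \ge 0$ and orthonormal $\bm{u}_j$, and putting $c_j := \langle \bm{u}_j, \bm{w}\rangle^2 \ge 0$, we obtain $f_i(\Delta) = \sum_j c_j e^{-\lambda_j\Delta}$.

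Finally I would differentiate termwise, $f_i'(\Delta) = -\sum_j c_j\lambda_j e^{-\lambda_j\Delta}$ and $f_i''(\Delta) = \sum_j c_j\lambda_j^2 e^{-\lambda_j\Delta}$, and apply the Cauchy--Schwarz inequality to the vectors $\bigl(\sqrt{c_j}\, e^{-\lambda_j\Delta/2}\bigr)_j$ and $\bigl(\sqrt{c_j}\,\lambda_j e^{-\lambda_j\Delta/2}\bigr)_j$ to get $\bigl(f_i'(\Delta)\bigr)^2 \le f_i(\Delta)\, f_i''(\Delta)$. Since $\tfrac{d^2}{d\Delta^2}\log f_i = \bigl(f_i f_i'' - (f_i')^2\bigr)/f_i^2$, this is exactly the claim, once we know $f_i(\Delta) > 0$; but a nonnegative combination of exponentials is either identically zero --- the degenerate case $\wt{\bm{\rho}}_{i,\cdot} \equiv \wt{\bm{\pi}}^i$, in which the statement is vacuous and is already excluded by the use of $\log f_i$ --- or strictly positive everywhere. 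The argument is essentially computational; the only point requiring care is that $f_i$ retains this exponential-sum form on all of $\Delta \ge 0$, which is precisely the extension of $\wt{\bm{\rho}}_{i,\Delta}$ recorded after~\eqref{eq:rho-i-Delta}. (Equivalently, one could differentiate the identity of Lemma~\ref{lem:log-derivative} and verify that the Rayleigh quotient $\mathcal{R}_i$ is nonincreasing along the linear heat flow on $\wt{G}_i$, but the route above is shorter.)
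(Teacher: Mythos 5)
Your proof is correct, and it is essentially the paper's argument carried out in diagonalized form: both reduce the claim to a single application of Cauchy--Schwarz after conjugating by $D_{\tilde{G}_i}^{-1/2}$. The paper stays at the matrix level: it differentiates $\log f_i$ twice from the defining formula $f_i(\Delta)=\wt{\bm{\rho}}_{i,0}^\top D_{\tilde{G}_i}^{-1}(\wt{\bm{\rho}}_{i,\Delta}-\wt{\bm{\pi}}^i)$, uses the splitting $H_{i,\Delta}=H_{i,\Delta/2}H_{i,\Delta/2}$ together with $D_{\tilde{G}_i}H_{i,\Delta/2}=H_{i,\Delta/2}^\top D_{\tilde{G}_i}$, and identifies the numerator of $(\log f_i)''$ as $\|D_{\tilde{G}_i}^{-1/2}\mathcal{L}_i\wt{\bm{\rho}}_{i,\Delta/2}\|^2\,\|D_{\tilde{G}_i}^{-1/2}(\wt{\bm{\rho}}_{i,\Delta/2}-\wt{\bm{\pi}}^i)\|^2$ minus the square of their inner product, which is nonnegative by Cauchy--Schwarz; it also uses $\mathcal{L}_i^\top\1=\0$ to discard the $\wt{\bm{\pi}}^i$ term, the counterpart of your $\mathcal{L}_i\wt{\bm{\pi}}^i=\0$. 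You instead pass to the symmetric Laplacian $\mathcal{N}_i$, write $f_i(\Delta)=\sum_j c_j e^{-\lambda_j\Delta}$ with $c_j\ge0$, and apply Cauchy--Schwarz to the spectral coefficients --- the classical log-convexity of a nonnegative combination of exponentials. Your route makes the log-convexity transparent and handles strict positivity of $f_i$ (and the degenerate case $f_i\equiv 0$) explicitly, which the paper leaves implicit when dividing by $f_i$; the paper's route avoids diagonalization and keeps the computation in the quantities (Proposition~\ref{prop:norm}, Lemma~\ref{lem:log-derivative}, $\mathcal{R}_i$) reused elsewhere in the analysis. The two points in your argument needing care --- that $\mathcal{L}_i\wt{\bm{\pi}}^i=\0$ (true because the self-loop weights make the row sums of $A_{\wt{G}_i}$ equal to the degrees) and that the exponential-sum form of $f_i$ persists for all $\Delta\ge0$ via the extension recorded after~\eqref{eq:rho-i-Delta} --- are both correctly addressed.
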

\begin{proof}
  By Lemma~\ref{lem:log-derivative}, we have
  \begin{align*}
   & -\frac{d^2}{ d\Delta^2}(-\log(\wt{\bm{\rho}}_{i,0}^\top D_{\tilde{G}_i}^{-1} (\wt{\bm{\rho}}_{i,\Delta} - \bm{\wt{\pi}}^i) ))
   = \frac{d}{d \Delta} \left( -\frac{ \wt{\bm{\rho}}_{i,0}^\top D_{\tilde{G}_i}^{-1}\frac{d}{d \Delta} \wt{\bm{\rho}}_{i,\Delta}  }{\wt{\bm{\rho}}_{i,0}^\top D_{\tilde{G}_i}^{-1}(\wt{\bm{\rho}}_{i,\Delta}-\bm{\wt{\pi}}^i)} \right)\\
   &= \frac{d}{d \Delta} \left( \frac{ \wt{\bm{\rho}}_{i,0}^\top D_{\tilde{G}_i}^{-1} \mathcal{L}_i \wt{\bm{\rho}}_{i,\Delta} }{\wt{\bm{\rho}}_{i,0}^\top D_{\tilde{G}_i}^{-1}(\wt{\bm{\rho}}_{i,\Delta}-\bm{\wt{\pi}}^i)} \right)
   = \frac{   \wt{\bm{\rho}}_{i,0}^\top D_{\tilde{G}_i}^{-1}(\wt{\bm{\rho}}_{i,\Delta}-\bm{\wt{\pi}}^i) \wt{\bm{\rho}}_{i,0}^\top D_{\tilde{G}_i}^{-1} \mathcal{L}_i^2 \wt{\bm{\rho}}_{i,\Delta}
  - {(\wt{\bm{\rho}}_{i,0}^\top D_{\tilde{G}_i}^{-1} \mathcal{L}_i \wt{\bm{\rho}}_{i,\Delta} )}^2 }{{(\wt{\bm{\rho}}_{i,0}^\top D_{\tilde{G}_i}^{-1}(\wt{\bm{\rho}}_{i,\Delta}-\bm{\wt{\pi}}^i))}^2}.
  \end{align*}
  It is sufficient to check the positivity of the numerator.
  Note that the numerator can be written as
  \begin{align}
    (\wt{\bm{\rho}}_{i,0}^\top D_{\tilde{G}_i}^{-1}(\wt{\bm{\rho}}_{i,\Delta}-\bm{\wt{\pi}}^i))(\wt{\bm{\rho}}_{i,0}^\top D_{\tilde{G}_i}^{-1} \mathcal{L}_i^2 \wt{\bm{\rho}}_{i,\Delta} )
    - {(\wt{\bm{\rho}}_{i,0}^\top D_{\tilde{G}_i}^{-1} \mathcal{L}_i \wt{\bm{\rho}}_{i,\Delta} )}^2.
    \label{eq:log-second-derivative-1}
  \end{align}
  The first factor of the first term of~\eqref{eq:log-second-derivative-1} is
  \[
    \wt{\bm{\rho}}_{i,0}^\top D_{\tilde{G}_i}^{-1}(\wt{\bm{\rho}}_{i,\Delta}-\bm{\wt{\pi}}^i) = \|D_{\tilde{G}_i}^{-1/2} (\wt{\bm{\rho}}_{i,\Delta/2} - \bm{\wt{\pi}}^i)\|^2
  \]
  by Proposition~\ref{prop:norm}.
  The second factor of the first term of~\eqref{eq:log-second-derivative-1} is
  \begin{align*}
    &\wt{\bm{\rho}}_{i,0}^\top D_{\tilde{G}_i}^{-1} \mathcal{L}_i^2 \wt{\bm{\rho}}_{i,\Delta}  = \wt{\bm{\rho}}_{i,0}^\top D_{\tilde{G}_i}^{-1} {(I - A_{\wt{G}_i} D_{\tilde{G}_i}^{-1})}^2 H_{i,\Delta} \wt{\bm{\rho}}_{i,0}\\
    &= \wt{\bm{\rho}}_{i,0}^\top D_{\tilde{G}_i}^{-1} D_{\tilde{G}_i} {(H_{i,\Delta/2})}^\top D_{\tilde{G}_i}^{-1} {(I - A_{\wt{G}_i} D_{\tilde{G}_i}^{-1})}^2 H_{i,\Delta/2} \wt{\bm{\rho}}_{i,0} \\
    &= \wt{\bm{\rho}}_{i,0}^\top {(H_{i,\Delta/2})}^\top D_{\tilde{G}_i}^{-1} (D_{\tilde{G}_i} - A_{\wt{G}_i}) D_{\tilde{G}_i}^{-1} (D_{\tilde{G}_i}-A_{\wt{G}_i}) D_{\tilde{G}_i}^{-1} H_{i,\Delta/2} \wt{\bm{\rho}}_{i,0} \\
    &= \|D_{\tilde{G}_i}^{-1/2} (D_{\tilde{G}_i}-A_{\wt{G}_i}) D_{\tilde{G}_i}^{-1} H_{i,\Delta/2} \wt{\bm{\rho}}_{i,0} \|^2\\
    &= \|D_{\tilde{G}_i}^{-1/2} (I-A_{\wt{G}_i} D_{\tilde{G}_i}^{-1}) \wt{\bm{\rho}}_{i,\Delta/2}\|^2\\
    &= \|D_{\tilde{G}_i}^{-1/2} \mathcal{L}_i \wt{\bm{\rho}}_{i,\Delta/2}\|^2.
  \end{align*}
  The second term of~\eqref{eq:log-second-derivative-1} is
  \begin{align}
    & \wt{\bm{\rho}}_{i,0}^\top D_{\tilde{G}_i}^{-1} \mathcal{L}_i \wt{\bm{\rho}}_{i,\Delta}
    = \wt{\bm{\rho}}_{i,0}^\top D_{\tilde{G}_i}^{-1} (I - A_{\wt{G}_i} D_{\tilde{G}_i}^{-1}) H_{i,\Delta} \wt{\bm{\rho}}_{i,0} \nonumber \\
    &= \wt{\bm{\rho}}_{i,0}^\top {(H_{i,\Delta/2})}^\top D_{\tilde{G}_i}^{-1} (I- A_{\wt{G}_i} D_{\tilde{G}_i}^{-1}) H_{i,\Delta/2} \wt{\bm{\rho}}_{i,0} \nonumber \\
    &= \wt{\bm{\rho}}_{i,\Delta/2}^\top D_{\tilde{G}_i}^{-1} (I-A_{\wt{G}_i} D_{\tilde{G}_i}^{-1}) \wt{\bm{\rho}}_{i,\Delta/2}
    = \wt{\bm{\rho}}_{i,\Delta/2}^\top D_{\tilde{G}_i}^{-1} \mathcal{L}_i \wt{\bm{\rho}}_{i,\Delta/2}.
    \label{eq:log-second-derivative-2}
  \end{align}
  We can rephrase~\eqref{eq:log-second-derivative-2} as the inner product of the vectors $D_{\tilde{G}_i}^{-1/2} \mathcal{L}_i\wt{\bm{\rho}}_{i,\Delta/2} $ and
  $D_{\tilde{G}_i}^{-1/2} (\wt{\bm{\rho}}_{i,\Delta/2} - \bm{\wt{\pi}}^i)$, as follows:
  \begin{align*}
    & {(D_{\tilde{G}_i}^{-1/2} \mathcal{L}_i\wt{\bm{\rho}}_{i,\Delta/2} )}^\top D_{\tilde{G}_i}^{-1/2} (\wt{\bm{\rho}}_{i,\Delta/2} - \bm{\wt{\pi}}^i)
    = \wt{\bm{\rho}}_{i,\Delta/2}^\top \mathcal{L}_i^\top D_{\tilde{G}_i}^{-1/2}D_{\tilde{G}_i}^{-1/2} (\wt{\bm{\rho}}_{i,\Delta/2} - \bm{\wt{\pi}}^i)\\
    &= \wt{\bm{\rho}}_{i,\Delta/2}^\top \mathcal{L}_i^\top D_{\tilde{G}_i}^{-1} \wt{\bm{\rho}}_{i,\Delta/2} -
     \wt{\bm{\rho}}_{i,\Delta/2}^\top \mathcal{L}_i^\top D_{\tilde{G}_i}^{-1}\bm{\wt{\pi}}^i
    = \wt{\bm{\rho}}_{i,\Delta/2}^\top \mathcal{L}_i^\top D_{\tilde{G}_i}^{-1} \wt{\bm{\rho}}_{i,\Delta/2},
  \end{align*}
  where the last equality follows from
  \[
   \wt{\bm{\rho}}_{i,\Delta/2}^\top \mathcal{L}_i^\top D_{\tilde{G}_i}^{-1}\bm{\wt{\pi}}^i = \wt{\bm{\rho}}_{i,\Delta/2}^\top \mathcal{L}_i^\top \frac{1}{\vol(V_i)} \1
  \]
  and $\mathcal{L}_i^\top\1 = D_{\tilde{G}_i}^{-1} (D_{\tilde{G}_i} - A_{\wt{G}_i})\1 = D_{\tilde{G}_i}^{-1} \0 = \0$.

  Hence,  we have
  \begin{align*}
    & \eqref{eq:log-second-derivative-1}
    =  \| D_{\tilde{G}_i}^{-1/2} \mathcal{L}_i \wt{\bm{\rho}}_{i,\Delta/2} \|^2 \cdot \|D_{\tilde{G}_i}^{-1/2} (\wt{\bm{\rho}}_{i,\Delta/2} - \bm{\wt{\pi}}^i)\|^2
    - {\left({(D_{\tilde{G}_i}^{-1/2} \mathcal{L}_i \wt{\bm{\rho}}_{i,\Delta/2})}^\top D_{\tilde{G}_i}^{-1/2} (\wt{\bm{\rho}}_{i,\Delta/2} - \bm{\wt{\pi}}^i)\right)}^2 \geq 0,
  \end{align*}
  where the last inequality follows from the Cauchy-Schwarz inequality.
\end{proof}

\subsection{Proof of Lemma~\ref{lem:equality-of-norms}}\label{subsec:equality-of-norms}
\begin{proof}
 We recall that
\begin{align*}
 \wt{\bm{\rho}}_{i, \Delta/2}(u^i_k) =  \sum_{v \in U^i_k} \bm{\rho}_{i, \Delta/2}(v)
 = \left( \sum_{v \in U^i_k} d_G(v) \right) \bm{\mu}_{i, \Delta/2}(u^i_k) .
\end{align*}
Hence, we obtain
\begin{align*}
 \| \wt{\bm{\rho}}_{i, \Delta/2} - \wt{\bm{\pi}}^i \|_{D_{\tilde{G}_i}^{-1}}^2
 &= (\wt{\bm{\rho}}_{i, \Delta/2} - \wt{\bm{\pi}}^i)^\top
 D_{\tilde{G}_i}^{-1} (\wt{\bm{\rho}}_{i, \Delta/2} - \wt{\bm{\pi}}^i) \\
 & = \sum_{k = 1}^{m_i} \frac{1}{d_{\tilde{G}_i}(u^i_k)}(\wt{\bm{\rho}}_{i, \Delta/2}(u^i_k) - \wt{\bm{\pi}}^i(u^i_k))^2 \\
 & = \sum_{k = 1}^{m_i} \frac{1}{d_{\tilde{G}_i}(u^i_k)}\left(\sum_{u \in U^i_k}
 \bm{\rho}_{i, \Delta/2}(u) - \frac{d_{\tilde{G}_i}(u^i_k)}{\vol(\wt{V}_i)}\right)^2 \\
 &= \sum_{k = 1}^{m_i} d_{\tilde{G}_i}(u^i_k)
 \left( \bm{\mu}_{i, \Delta/2}(u^i_k) - \frac{1}{\vol(V)}\right)^2.
\end{align*}
On the other hand, the norm on $G$ becomes the following:
\begin{align*}
\| \bm{\rho}_{i, \Delta/2} - \bm{\pi}^i \|_{D^{-1}}^2
&= \sum_{u \in V} \frac{1}{d_G(u)}(\bm{\rho}_{i, \Delta/2}(u) - \bm{\pi}(u))^2 \\
&= \sum_{k = 1}^{m_i} \sum_{u \in U^i_k}\frac{1}{d_G(u)}\left(\bm{\rho}_{i, \Delta/2}(u) -
\frac{d_G(u)}{\vol(V)}\right)^2 \\
&= \sum_{k = 1}^{m_i} \sum_{u \in U^i_k}d_G(u)\left(\bm{\mu}_{i, \Delta/2}(u^i_k) -
\frac{1}{\vol(V)}\right)^2 \\
&= \sum_{k = 1}^{m_i} d_{\tilde{G}_i}(u_k)\left(\bm{\mu}_{i, \Delta/2}(u^i_k) -
\frac{1}{\vol(V)}\right)^2. \qedhere
\end{align*}

\end{proof}

\subsection{Proof of Lemma~\ref{lem:upper-bound}}\label{subsec:upper-bound}
We first derive a lower bound on the log derivative of $f_i(\Delta)$.
\begin{lem}\label{lem:lower-bound-on-log-derivative}
  For any $i \in \mathbb{Z}_+$ and $\Delta \geq 0$, we have
  \[
  -\frac{d}{d \Delta} \log f_i(\Delta)
   \geq \frac{\wt{\kappa}_{i,\Delta/2}^2}{2}.
  \]
\end{lem}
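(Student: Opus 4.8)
The plan is to reduce immediately to the algorithmic direction of Cheeger's inequality on the auxiliary graph $\wt{G}_i$. By Lemma~\ref{lem:log-derivative},
\[
  -\frac{d}{d\Delta}\log f_i(\Delta) = \mathcal{R}_i\left(\frac{\wt{\bm{\rho}}_{i,\Delta/2}}{d_{\wt{G}_i}} - \frac{\1}{\vol(\wt{V}_i)}\right),
\]
so, writing $\bm{\psi} := \wt{\bm{\rho}}_{i,\Delta/2}/d_{\wt{G}_i} - \1/\vol(\wt{V}_i) \in \RR^{\wt{V}_i}$, it suffices to prove $\mathcal{R}_i(\bm{\psi}) \ge \wt{\kappa}_{i,\Delta/2}^2/2$. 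Since $\mathcal{R}_i$ is the Rayleigh quotient of the (combinatorial) Laplacian of the weighted graph $\wt{G}_i$, and $\wt{\kappa}_{i,\Delta/2}$ is by definition the minimum conductance of a sweep set of $\wt{\bm{\rho}}_{i,\Delta/2}/d_{\wt{G}_i} = \bm{\psi} + \1/\vol(\wt{V}_i)$ while a constant shift does not change sweep sets, this is exactly the sweep-set form of Cheeger's inequality; the plan is to carry out the standard rounding argument while checking that the set it produces is genuinely feasible for $\wt{\kappa}_{i,\Delta/2}$.

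Concretely, I would first observe that $\bm{\psi}$ is centered with respect to the degree measure of $\wt{G}_i$: as the heat equation conserves total mass and $\bm{\pi}_v = \1_v$ has mass $1$, we have $\sum_{u \in \wt{V}_i}\wt{\bm{\rho}}_{i,\Delta/2}(u) = \sum_{u \in V}\bm{\rho}_{t_i+\Delta/2}(u) = 1$, while $\vol(\wt{V}_i) = \vol(V)$ since contraction preserves degrees; hence $\langle \bm{\psi},\1\rangle_{D_{\tilde{G}_i}} = 0$. I would then shift $\bm{\psi}$ by the constant equal to the value at a degree-median vertex of $\wt{G}_i$; this leaves $\sweep(\bm{\psi})$ unchanged and, because $\bm{\psi}$ was already degree-mean-centered, does not increase the Rayleigh quotient. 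Splitting the shifted vector into its positive and negative parts, each is supported on a vertex set of volume at most $\vol(\wt{V}_i)/2$; by the mediant inequality $\mathcal{R}_i(\bm{\psi})$ dominates the localized Rayleigh quotient of whichever part is ``heavier,'' and the classical Cauchy--Schwarz (coarea) estimate produces a threshold sweep set $S$ of that part with $\vol_{\wt{G}_i}(S) \le \vol(\wt{V}_i)/2$ and $\phi_{\wt{G}_i}(S)^2 \le 2\,\mathcal{R}_i(\bm{\psi})$. Because $S \in \sweep(\wt{\bm{\rho}}_{i,\Delta/2}/d_{\wt{G}_i})$, it is admissible in the definition of $\wt{\kappa}_{i,\Delta/2}$, so $\wt{\kappa}_{i,\Delta/2} \le \phi_{\wt{G}_i}(S) \le \sqrt{2\,\mathcal{R}_i(\bm{\psi})}$, which rearranges to the claim. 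Any self-loops of $\wt{G}_i$ are harmless here, as they contribute neither to edge boundaries nor to the numerator of $\mathcal{R}_i$ and enter all volumes consistently.

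The only part with real content is this Cheeger rounding, and within it the one point that needs genuine care — rather than a textbook invocation — is the interplay between the two recenterings: one must shift to the degree-median (not the degree-mean) so that both support sets have volume at most half, yet still be able to charge the resulting conductance against $\mathcal{R}_i(\bm{\psi})$ itself, which is precisely where the degree-mean-centering supplied by mass conservation enters. Everything else is the routine computation behind $\lambda_2 \le 2\phi$ for weighted graphs (self-loops included), so one could alternatively just cite the sweep-set version of Cheeger's inequality (as used by Chung~\cite{Chung:2007ep}) applied to $\wt{G}_i$ and $\bm{\psi}$, and combine it with Lemma~\ref{lem:log-derivative}.
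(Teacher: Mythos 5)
Your proposal is correct and takes essentially the same route as the paper: both reduce via Lemma~\ref{lem:log-derivative} to bounding $\mathcal{R}_i$ of the vector centered at $1/\vol(\wt{V}_i)$, use mass conservation to identify that centering as the degree-weighted mean (the paper phrases this by showing $c=1/\vol(\wt{V}_i)$ maximizes $\mathcal{R}_i(\cdot - c)$ over constants $c$), and then apply the sweep-cut form of Cheeger's inequality on $\wt{G}_i$, which the paper cites as a black box and you spell out via median recentering and the standard rounding. The only wording nit is that the mediant inequality hands you the part with the \emph{smaller} localized Rayleigh quotient, not the ``heavier'' one; otherwise the details (self-loops, both threshold directions lying in $\sweep$) are handled correctly.
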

\begin{proof}[Proof of Lemma~\ref{lem:lower-bound-on-log-derivative}]
  By Lemma~\ref{lem:log-derivative}, we have
  \[
    -\frac{d}{d \Delta} \log f_i(\Delta)
    = \mathcal{R}_i\left(\frac{\wt{\bm{\rho}}_{i,\Delta/2}}{d_{\wt{G}_i}} -\frac{1}{\vol(\wt{V}_i)}\right).
  \]
  Then, by applying Cheeger's inequality on the vector $\wt{\bm{\rho}}_{i,\Delta/2}/d_{\wt{G}_i}$, we obtain
  \[
    \max_{c \in \mathbb{R}}
    \mathcal{R}_i\left(\frac{\wt{\bm{\rho}}_{i,\Delta/2}}{d_{\wt{G}_i}} - c\right)
    \geq \frac{\wt{\kappa}_{i,\Delta/2}^2}{2}.
  \]
  Hence, it suffices to show that the left hand side (LHS) attains the maximum value when $c = 1/\vol(\wt{V}_i)$.
  Let $\varphi \colon \mathbb{R} \to \mathbb{R}$ be the denominator of the LHS (recall~\eqref{eq:Rayleigh-quotient}) as a function of $c$.
  Then,
  \[
    \varphi'(c) = - 2 \sum_{v \in \wt{V}_i} \left( \frac{\wt{\bm{\rho}}_{i,\Delta/2}(v)}{d_{\wt{G}_i}(v)} - c \right) d_{\wt{G}_i}(v).
  \]
  Hence $\varphi'(c) = 0$ yields
  \[
    \sum_{v \in \wt{V}_i} \wt{\bm{\rho}}_{i,\Delta/2}(v) - \left(\sum_{v \in \wt{V}_i} d_{\wt{G}_i}(v) \right) c = 0,
  \]
  which implies $c = 1/ \vol(\wt{V}_i)$ attains the minimum of $\varphi$.
\end{proof}

\begin{proof}[Proof of Lemma~\ref{lem:upper-bound}]
  We are now ready to prove Lemma~\ref{lem:upper-bound}. By Lemma~\ref{lem:lower-bound-on-log-derivative}, we have
  \begin{align*}
    \log f_{i_1}(2(t-t_{i_1})) -\log f_{i_1}(0) & \leq - \wt{\kappa}_{i_1,[0,\Delta]}^2 (t-t_{i_1}), \\
    \log f_j(2(t_{j+1}-t_j)) -\log f_j(0) & \leq - \wt{\kappa}_j^2(t_{j+1}-t_j) \quad (j = 0,\ldots,i-1), \\
    \log f_{i_0}(2(t_{i_0+1} - t_{i_0})) -\log f_{i_0}(2T - 2t_{i_0}) & \leq - \wt{\kappa}_{i_0,[T-t_{i_0}, t_{i_0+1}- t_{i_0}]}^2 (t_{i_0+1} - T),
  \end{align*}
  Hence, we have
  \begin{align*}
    & f_{i_1}(t - t_{i_1})
    \leq f_{i_1}(0) \exp\left({-\wt{\kappa}_{i_1,[0,t-t_{i_1}]}^2(t-t_{i_1})}\right)
    = \|\wt{\bm{\rho}}_{i_1,0} - \wt{\bm{\pi}}^{i_1}\|_{D_{\tilde{G}_{i_1}}^{-1}}^2 \exp\left({-\wt{\kappa}_{i_1,[0,t-t_{i_1}]}^2(t-t_{i_1})}\right) \\
    & = \|\wt{\bm{\rho}}_{i_1-1,t_{i_1} - t_{(i_1-1)}} - \wt{\bm{\pi}}^{(i_1-1)}\|_{D_{\wt{G}_{(i_1-1)}}^{-1}}^2 \exp\left({-\wt{\kappa}_{i_1,[0,t-t_{i_1}]}^2(t-t_{i_1})}\right) \\
    &= f_{(i_1-1)}(2(t_{i_1}-t_{(i_1-1)})) \exp\left({-\wt{\kappa}_{i_1,[0,t-t_{i_1}]}^2(t-t_{i_1})}\right) \\
    & \leq f_{(i_1-1)}(0)\left( -\wt{\kappa}_{i_1,[0,t-t_{i_1}]}^2(t-t_{i_1}) - \wt{\kappa}_{(i_1-1)}^2(t_{i_1}-t_{(i_1-1)}) \right) \leq \cdots \\
    & \leq f_{i_0}(2T-2t_{i_0}) \exp\left(-\wt{\kappa}_{i_1,[0,t-t_{i_1}]}^2(t-t_{i_1}) - \sum_{j=i_0+1}^{i_1-1}\wt{\kappa}_{j}^2(t_{j+1}-t_{j}) - \wt{\kappa}_{i_0,[T-t_{i_0},t_{i_0+1}-t_{i_0}]}^2(t_{i_0+1}-T) \right)\\
    &\leq \| \bm{\rho}_T^{\bm{\pi}_v} - \bm{\pi} \|_{ D^{-1}}^2 \exp ({-\wt{\kappa}_{T,t}^v}^2(t-T)). \qedhere
  \end{align*}
\end{proof}

\subsection{Proof of Lemma~\ref{lem:lower-bound}}\label{subsec:lower-bound}
\begin{proof}
As in \cite[Lemma 4.11, 3]{chan2018spectral}, 
the derivative of the Rayleigh quotient 
$
 \langle \bm{\rho}_t^{\bm{\pi}_v}, \ca{L} 
 \bm{\rho}_t^{\bm{\pi}_v} \rangle_{D^{-1}}/ 
 \| \bm{\rho}_{t}^{\bm{\pi}_v} - \bm{\pi}\|^2_{D^{-1}}
$
is non-positive, hence this does not increase about $t$. By this monotonicity, we have 
\[
 - \frac{d}{dt} \log \| \bm{\rho}_t^{\bm{\pi}_v} - \bm{\pi} \|_{D^{-1}}^2  = 2\frac{\langle 
 \bm{\rho}_t^{\bm{\pi}_v}, \ca{L}\bm{\rho}_t^{\bm{\pi}_v}\rangle_{D^{-1}}}
 {\| \bm{\rho}_t^{\bm{\pi}_v} - \bm{\pi} \|^2_{D^{-1}}} 
 \leq 2\frac{\langle 
 \bm{\rho}_T^{\bm{\pi}_v}, \ca{L}\bm{\rho}_T^{\bm{\pi}_v}\rangle_{D^{-1}}}
 {\| \bm{\rho}_T^{\bm{\pi}_v} - \bm{\pi} \|^2_{D^{-1}}} = g_v(T).
\]
By integrating this on $[T, t]$, we obtain the claimed inequality. 
\end{proof}

\subsection{Proof of Lemma~\ref{lem:sweep}}
\begin{proof}
Let $\{ U_1, U_2,  \dots, U_m\} \subseteq 2^V$ be the $\sigma^\ast$-equivalence classes such that $U_k \succ U_{k+1}$ ($k=1, \dots, m-1$), i.e., for any
$u \in U_k, v \in U_{k+1}$, $\bm{x}(u) > \bm{x}(v)$.
Then, the sweep set $S$ can be written by
\[
 S^a = S_i := U_1 \cup \cdots \cup U_i
\]
for a certain integer $i$. We recall that the conductance of this $S$ on $G$ is
\[
 \phi_G(S_i) = \frac{ \sum_{e \in E, e\cap S_i \neq \emptyset \atop
 e\cap V\!\setminus \!S_i \neq \emptyset} w_e }{\min \{ \vol(S_i), \vol(V\!\setminus\! S_i) \}}.
\]
Now, $\wt{S}^a$ is equal to $\wt{S}_i = \{ u_1, u_2, \dots, u_i\} $ for same $i$. Then,
the conductance $\phi_{\wt{G}}(\wt{S}_i)$ is
\[
 \phi_{\wt{G}}(\wt{S}_i) = \frac{\sum_{uv \in \tilde{E}, uv\cap \tilde{S}_i \neq \emptyset \atop
 uv\cap \tilde{V}\!\setminus \!\tilde{S}_i \neq \emptyset} \wt{w}(uv) }
 {\min \{ \vol(\wt{S}_i), \vol(\wt{V}\!\setminus\! \wt{S}_i) \}}.
\]
By simple calculation, we can show that the denominators are equal.
We check the equality of the numerators here.
\begin{align*}
 \sum_{uv \in \tilde{E}, uv\cap \tilde{S}_i \neq \emptyset \atop
 uv\cap \tilde{V}\!\setminus \!\tilde{S}_i \neq \emptyset} \wt{w}(uv)
 &= \sum_{j\leq i} \sum_{k \geq i+1} \wt{w}(u_ju_k) \\
 &= \sum_{j\leq i} \sum_{k \geq i+1} \left( \sum_{e\in E, S_e^{\sigma^\ast}\cap U_j \neq \emptyset \atop I_e^{\sigma^\ast}\cap U_k \neq \emptyset} w_e +
 \sum_{e\in E, S_e^{\sigma^\ast}\cap U_j \neq \emptyset \atop I_e^{\sigma^\ast}\cap U_k \neq \emptyset} w_e
  \right) \\
  &= \sum_{e\in E, S_e^{\sigma^\ast}\cap S_i \neq \emptyset \atop I_e^{\sigma^\ast}\cap
  V\!\setminus\! S_i \neq \emptyset} w_e
  = \sum_{e\in E, e\cap S_i \neq \emptyset \atop e\cap
  V\!\setminus\! S_i \neq \emptyset} w_e.
\end{align*}
Hence, the numerators are also the same.
\end{proof}

\end{document}